\newcommand{\namedref}[2]{\hyperref[#2]{#1~\ref*{#2}}}
\newcommand{\sectionref}[1]{\namedref{Section}{#1}}
\newcommand{\theoremref}[1]{\namedref{Theorem}{#1}}
\newcommand{\figref}[1]{\namedref{Figure}{#1}}
\newcommand{\lemmaref}[1]{\namedref{Lemma}{#1}}
\newcommand{\corollaryref}[1]{\namedref{Corollary}{#1}}
\newcommand{\algref}[1]{\namedref{Algorithm}{#1}}
\newcommand{\lineref}[1]{\namedref{Line}{#1}}
\newcommand{\equalityref}[1]{\hyperref[#1]{Equality~(\ref*{#1})}}
\newcommand{\inequalityref}[1]{\hyperref[#1]{Inequality~(\ref*{#1})}}
\newcommand{\pprtyref}[1]{\hyperref[#1]{Property~(\ref*{#1})}}
\newcommand{\factref}[1]{\namedref{Fact}{#1}}
\newtheorem{theorem}{Theorem}[section]
\newtheorem{lemma}[theorem]{Lemma}
\newtheorem{definition}[theorem]{Definition}
\newtheorem{corollary}[theorem]{Corollary}
\newtheorem{fact}{Fact}[section]
\newcommand{\blackslug}{\hbox{\hskip 1pt \vrule width 4pt height 8pt
depth 1.5pt \hskip 1pt}}
\newcommand{\QED}{\quad\blackslug\lower 8.5pt\null\par}
\newenvironment{proof}[1][Proof:]{\noindent \textbf{#1}\xspace}{\QED}
\newcommand{\N}{\mathbb{N}}
\newcommand{\R}{\mathbb{R}}
\newcommand{\E}{\mathbb{E}}
\newcommand{\BO}{\mathcal{O}}
\newcommand{\Set}[1]{\left\{ #1 \right\}}
\newcommand{\ceil}[1]{\left\lceil #1 \right\rceil}
\newcommand{\Seq}[1]{\left\langle #1 \right\rangle}
\newcommand{\ignore}[1]{}
\newcommand{\DEF}{\ensuremath{\stackrel{\rm def}{=}}}
\newcommand{\REM}[1]{\hfill//\emph{#1}}
\newcommand{\Src}{\mathrm{source}}
\newenvironment{eqntext}{\[\begin{array}{rcl@{\hspace{1cm}}l}}{\end{array}\]}
\def\midformat{
\setlength{\textheight}{8.9in}
\setlength{\textwidth}{6.7in}
\setlength{\evensidemargin}{-0.2in}
\setlength{\oddsidemargin}{-0.2in}
\setlength{\headheight}{0in}
\setlength{\headsep}{10pt}
\setlength{\topsep}{0in}
\setlength{\topmargin}{0.0in}
\setlength{\itemsep}{0in}       
\renewcommand{\baselinestretch}{1.1}
\parskip=0.070in
}
\newcommand{\CONGEST}{\textbf{CONGEST}}
\newcommand{\BSP}{\mathrm{BSP}\xspace}
\newcommand{\Next}{\mathrm{next}}
\newcommand{\Paths}{\mathrm{paths}}
\newcommand{\Hd}{\mathrm{hd}}
\newcommand{\HD}{\mathrm{HD}}
\newcommand{\Wd}{\mathrm{wd}}
\newcommand{\WD}{\mathrm{WD}}
\newcommand{\SPD}{\mathrm{SPD}}
\newcommand{\Ball}{\mathit{ball}}
\newcommand{\Lead}{Y}
\newcommand{\gsf}{\textsc{gsf}}
\newcommand{\OPT}{\mathrm{OPT}}
\newcommand{\ALG}{\mathrm{ALG}}
\newcommand{\cI}{\mathcal{I}}
\newcommand{\rtc}{\textsc{rtc}}
\newcommand{\cM}{\mathcal{M}}
\begin{document}
\setcounter{tocdepth}{3}
\date{}

\title{\textbf{Fast Routing Table Construction Using Small Messages}}

\author{
Christoph Lenzen%
\thanks{Supported by the Swiss Society of Friends of the Weizmann
Institute of Science and by the Swiss National Science Foundation (SNSF).}
\\
Dept.\ Computer Science \& Applied Mathematics\\
Weizmann Institute of Science\\
Rehovot 76100, Israel
\and
Boaz Patt-Shamir%
\thanks{Supported in part by the Israel Science
Foundation (grant 1372/09) and by Israel Ministry of Science and
Technology.}
\\
School of Electrical Engineering\\
Tel Aviv University\\
Tel Aviv 69978, Israel
}

\def\thepage{}
\begin{titlepage}
  
\maketitle

\begin{abstract}
We describe a distributed randomized algorithm computing approximate distances
and routes that approximate shortest paths. Let $n$ denote the number of
nodes in the graph, and let $\HD$ denote the \emph{hop diameter} of the graph,
i.e., the diameter of the graph when all edges are considered to have unit
weight. Given $0<\varepsilon\leq 1/2$, our algorithm runs in $\tilde
\BO(n^{1/2+\varepsilon}+\HD)$ communication rounds using messages of $\BO(\log
n)$ bits and guarantees a stretch of $\BO(\varepsilon^{-1}\log
\varepsilon^{-1})$ with high probability. This is the first distributed
algorithm approximating weighted shortest paths that uses small messages and
runs in $\tilde o(n)$ time (in graphs where $\HD\in \tilde{o}(n)$). The time
complexity nearly matches the lower bounds of $\tilde\Omega(\sqrt n+\HD)$ in the
small-messages model that hold for \emph{stateless} routing (where routing
decisions do not depend on the traversed path) as well as approximation of the
weigthed diameter. Our scheme replaces the original identifiers of the nodes by
labels of size $\BO(\log \varepsilon^{-1}\log n)$. We show that no algorithm
that keeps the original identifiers and runs for $\tilde{o}(n)$ rounds can
achieve a polylogarithmic approximation ratio.

Variations of our techniques yield a number of fast distributed approximation
algorithms solving related problems using small messages. Specifically, we
present algorithms that run in $\tilde{\BO}({n}^{1/2+\varepsilon}+\HD)$ rounds
for a given $0<\varepsilon\leq 1/2$, and solve, with high probability, the
following problems:
\begin{compactitem}
\item $\BO(\varepsilon^{-1})$-approximation for the Generalized
  Steiner Forest (the running time in this case has an additive
  $\tilde\BO(t^{1+2\varepsilon})$ term, where $t$ is the number of terminals);
\item $\BO(\varepsilon^{-2})$-approximation of weighted distances, using
  node labels of size $\BO(\varepsilon^{-1}\log n)$ and
  $\tilde{\BO}(n^{\varepsilon})$ bits of memory per node;
\item $\BO(\varepsilon^{-1})$-approximation of the weighted diameter;
\item $\BO(\varepsilon^{-3})$-approximate shortest paths using the labels
$1,\ldots,n$.
\end{compactitem}
\end{abstract}
\end{titlepage}
\pagenumbering{arabic}

\section{Introduction}

Constructing routing tables is a central task in network operation, the
Internet being a prime example. Besides being an end goal on its own
(facilitating the transmission of information from a sender to a
receiver), efficient routing and distance approximation are critical
ingredients in a myriad of other distributed applications.

At the heart of any routing protocol lies the computation of short paths in
weighted graphs, where edge weights may reflect properties such as link cost,
delay, bandwidth, reliability etc. In the distributed setting, an additional
challenge is that the graph whose shortest paths are to be computed serves also
as the platform carrying communication between the computing nodes. The result
of this double role is an intriguing interplay between two metrics: the given
shortest paths metric and the ``natural'' communication metric of the
distributed system. The first metric is used for the definition of shortest
paths, where an edge weight represents its contribution to path lengths; the
other metric is implicit, controlling the time complexity of the distributed
computation: each edge is tagged by the time it takes a message to cross it. If
these two metrics happen to be identical, then computing weighted shortest paths
to a single destination is trivial (for the all-pairs problem, see below).
For the general case, the standard normalization is that messages cross each
link in unit time, regardless of the link weight; this assumption is motivated by
network synchronization. On the other hand, the length of the message must be
taken into account as well. More precisely, in the commonly-accepted  \CONGEST\
model of network algorithms \cite{Peleg:book}, it is assumed that all link
latencies are one unit and messages have fixed size, typically $\BO(\log n)$
bits, where $n$ denotes the number of nodes.

The classical algorithm for computing shortest path distributively is the
distributed variant of the Bellman-Ford algorithm. This algorithm is used in
many networks, ranging from local to wide area networks. The Bellman-Ford
algorithm enjoys many properties that make it an excellent distributed algorithm
(locality, simplicity, self-stabilization). However, in weighted graphs, its
time complexity, i.e., the number of parallel iterations, may be as high as
$\Omega(n)$ for a single destination. This is in sharp contrast with the
$\BO(\HD)$ time needed to compute \emph{unweighted} shortest paths to a single
destination, where $\HD$ denotes the unweighted ``hop-diameter'' of the network.
The difference between $n$ and $\HD$ can be huge; suffices to say that the
hop-diameter of the Internet is estimated to be smaller than 50. Intuitively,
the problem originates in the fact that the Bellman-Ford algorithm explores
paths in a hop-by-hop fashion, and the aforementioned superposition of metrics may
result in a  path that is weight-wise short, but consists of $\Omega(n)$ edges.
If shortest paths have at most $\SPD\in \N$ edges, then it suffices to run the
Bellman-Ford algorithm for $\SPD$ communication rounds. Indeed, the running time
of a few distributed algorithms is stated as a function of this or a similar
parameter for exactly this reason (e.g., \cite{DDP,KKMPT,KP-08}).

To the best of our knowledge, no distributed algorithm for computing
(approximate) weighted shortest paths in $o(\SPD)$ time in the \CONGEST\ model
was known to date. In this paper we present a distributed algorithm  that
computes approximate all-pairs shortest paths and distances using small
messages, in time that nearly matches the lower bound of
$\tilde{\Omega}(\sqrt{n}+\HD)$.

\subsection{Detailed Contributions}

Our main technical contribution, presented in \sectionref{sec:routing}, is an
algorithm that, using messages of size $\BO(\log n)$, constructs, for any
$0<\varepsilon\leq 1/2$, in $\tilde\BO(n^{1/2+\varepsilon}+\HD)$ rounds node
labels\footnote{We remark that our use of the term differs from the common
definition in that we distinguish between the auxiliary routing information
stored by the nodes (the tables) and the (preferrably very small) labels
replacing the original node identifiers as routing address.} of size $\BO(\log
\varepsilon^{-1}\log n)$ and routing tables of size
$\tilde\BO(n^{1/2+\varepsilon})$ facilitating routing and distance estimation
with stretch $\BO(\varepsilon^{-1}\log \varepsilon^{-1})$. We show that
assigning new labels to the nodes is unavoidable by proving that any
(randomized) algorithm achieving polylogarithmic (expected) stretch without
relabeling must run for $\tilde\Omega(n)$ rounds.
The running time of our algorithm is close to optimal, since known
results~\cite{DHKNPPW-11,Elkin-MST,PelegR-00} imply that computing such an
approximation in the \CONGEST\ model must take $\tilde{\Omega}(\sqrt{n}+\HD)$
rounds.

Our algorithm comprises two sub-algorithms that we believe to be of interest in
their own right. One is used for short-range routing (roughly, for the closest
$\sqrt n$ nodes), the other for longer distances. The short-range algorithm
constructs a hierarchy in the spirit of Thorup-Zwick distance
oracles~\cite{TZ-05}: A recursive structure of uniformly sampled ``landmarks''
is used to iteratively reduce the number of routing destinations (and routes)
that need to be learned, and repeated use of the triangle inequality shows that
the stretch is linear in the number of recursion stages. While this idea is not
new, our main challenge is to implement the algorithm using small messages; to
this end, we introduce a bootstrapping technique that, combined with a
restricted variant of Bellmann-Ford (that bounds the hop range and the number of
tracked sources), allows us to construct low-stretch routing tables for nearby
nodes.

This approach runs out of steam (i.e., exceeds our target complexity)
beyond the closest $\BO(\sqrt n)$ 
nodes, so at that point we switch to the ``long-distance'' scheme. 
The basic idea in this scheme is to pick roughly $\sqrt{n}$ random
nodes we call the \emph{skeleton} nodes, and to
compute all-to-all routing tables for them. This is achieved by simulating the
spanner construction algorithm by Baswana and Sen~\cite{baswana07}.
Again, the crux of the matter is an efficient implementation of this approach
using small messages. To this end, we first construct a spanner of a
graph defined by the skeleton nodes and shortest paths between them. Due to the
small number of skeleton nodes and the reduced number of edges (thanks to the
spanner construction), we can afford to broadcast the entire skeleton-spanner
graph, thereby making skeleton routing information common knowledge. In addition, we
can mark the corresponding paths in the original graph quickly. Here too, our
main low-level tool is the restricted Bellmann-Ford algorithm
that bounds both the range and the load.

Using variants of our techniques, in \sectionref{sec-ext}
we derive efficient
solutions to several 
related problems (all statements hold with high probability).
\begin{compactitem}
\item For the Generalized Steiner Forest (\gsf) problem 
we obtain, for any $0<\varepsilon\leq 1/2$, an
$\BO(\varepsilon^{-1})$-approxima\-tion within
$\tilde\BO((\sqrt{n}+t)^{1+\varepsilon}+\HD)$ rounds, where $t$ is the number of
terminals. This should be contrasted with the best known distributed approximation
algorithm for \gsf\ \cite{KKMPT}, which provides $\BO(\log n)$-approximation in
time $\tilde\BO(\SPD \cdot k)$, where $\SPD$ is the ``shortest paths diameter,''
namely the maximal number of hops in any shortest path, and $k$ is the number of
terminal components in the \gsf\ instance.
\item For any $k\in \N$, we obtain an
  $\tilde{\BO}((\sqrt{n})^{1+1/k}+\HD)$-time algorithm that
  constructs labels of size $\BO(k\log n)$ and local tables of size
$\tilde\BO(n^{1/(2k)})$, and produces distance estimations with stretch
$\BO(k^2)$. Compare with the recent distributed algorithm \cite{DDP} that
attains the same local space consumption at running time $\tilde\BO(\SPD \cdot
n^{1/(2k)})$ and stretch $4k-1$.
\item Given any $0<\varepsilon\leq 1/2$, we can compute an
  $\BO(\varepsilon^{-1})$-approximation of the diameter within
  $\tilde\BO(n^{1/2+\varepsilon}+\HD)$ rounds. We show that the standard
  construction yielding a lower bound $\tilde\Omega(\sqrt{n}+\HD)$ extends to
  this problem, implying that also for this special case our solution is close
  to optimal.
\item Employing a different routing mechanism for the short-range scheme, we can
assign the fixed labels of $1,\ldots,n$. This comes at the expense of a
larger stretch of $\BO(\varepsilon^{-3})$ within
$\tilde\BO(n^{1/2+\varepsilon}+\HD)$ rounds, for any $0<\varepsilon\leq 1/2$.
\end{compactitem}

\subsection{Related Work}


There are many centralized algorithms for constructing routing tables; in these
algorithms the goal is usually to minimize space without affecting the quality
of the routes too badly. We briefly discuss them later, since our focus is the
distributed model. At this point let us just comment that a na\"\i ve
implementation of a centralized algorithm in the \CONGEST\ model requires
$\Omega(|E|)$ time in the worst case, since the whole network topology has to be
collected at a single node just for computation.

Practical distributed routing table construction algorithms are usually
categorized as either ``distance vector'' or ``link state'' algorithm (see,
e.g., \cite{PetersonD:book}). Distance-vector algorithms are variants of the
Bellman-Ford algorithm \cite{Bellman,Ford-56}, whose worst-case time complexity
in the \CONGEST\ model is $\Theta(n^2)$. In link-state algorithms
\cite{MQRR80,OSPF}, each routing node collects the complete graph topology and
then solves the single-source shortest path problem locally. This approach has
$\Theta(|E|)$ time complexity. While none of these algorithms uses relabeling,
it should be noted that the Internet architecture in fact employs relabeling (IP
addresses, which are used instead of physical addresses,  encode some routing
information).

From the theoretical perspective, as mentioned above, there has not been much
progress in computing weighted shortest paths beyond the ``shortest path
diameter'' (we denote by $\SPD$) even for the single-source case: see, e.g.,
\cite{DDP} and references therein. For the unweighted case, an $\BO(n)$-time
algorithm for exact all-pairs shortest-paths was recently discovered
(independently) in \cite{HW12} and \cite{PLT-12}. These algorithms do not
relabel the nodes. In addition, a randomized $(3/2)$-approximation of $\HD$ is
given in~\cite{PLT-12}, and a deterministic $(1+\varepsilon)$-approximation is
provided by~\cite{HW12}. Combining results, \cite{HW12} and \cite{PLT-12} report
a randomized $(3/2)$-approximation of the unweighted diameter in time
$\tilde{\BO}(n^{3/4})$.

In~\cite{DHKNPPW-11}, a lower bound of $\tilde\Omega(\sqrt{n})$ on the
time to construct a shortest-paths tree of weight within a poly$(n)$
of the optimum is shown; this immediately implies the same
lower bound on routing (more precisely, on \emph{stateless} routing,
where routing decisions depend only on the destination and not on the
traversed path). To the best of our knowledge, 
the literature does not state any further explicit lower bounds on the running
time of approximate shortest paths or distance estimation
algorithms, but  a lower bound of
$\tilde\Omega(\sqrt{n})$ can be easily derived using the 
technique used in~\cite{DHKNPPW-11} (which in turn is based
on~\cite{PelegR-00}). 
In~\cite{FHW-12} it is shown that in the \CONGEST\ model,
approximating the diameter of unweighted graphs
to within a factor of $3/2-\varepsilon$ requires
$\tilde{\Omega}(\sqrt{n})$ rounds. For the unweighted case, we extend this result
to arbitrary approximation ratios.

In the Generalized Steiner Forest problem (\gsf), the input consists of a
weighted graph and a set of \emph{terminal nodes} which is partitioned into
subsets called \emph{terminal components}. The task is to find a set of edges of
minimum weight so that the terminal components are connected. Historically, the
important special case of a minimum spanning tree (all nodes are terminals,
single terminal component) has been the target of extensive research in
distributed computation. It is known that in the \CONGEST\ model, the time
complexity of computing (or approximating) an MST is
$\tilde\Omega(\sqrt{n}+\HD)$ \cite{DHKNPPW-11,Elkin-MST,PelegR-00}. This bound
is essentially matched by an exact deterministic solution~\cite{GKP93,KP98}. An
$\BO(\log n)$-approximate MST is presented in \cite{KP-08}, whose running time
is $\BO(\SPD)$, where $\SPD$ is the ``shortest path diameter'' mentioned
previously. For the special case of Steiner trees (arbitrary terminals, single
component), \cite{CF05} presents a 2-approximation algorithm whose time
complexity is $\tilde\BO(n)$ (which can easily be refined to $\tilde\BO(\SPD)$).
For the general case, \cite{KKMPT} presents an $\BO(\log n)$-approximation
algorithm whose time complexity is $\tilde\BO(\kappa\cdot\SPD)$, where $\kappa$
is the number of terminal components.%
\footnote{We note that in \cite{KKMPT}, time-optimality is claimed, up
  to factor $\tilde\BO(\kappa)$. This comes as a consequence of~\cite{KP-08},
  which in turn builds on \cite{Elkin-MST}.  However, we comment that the latter
  construction does not scale beyond the familiar lower bound of
  $\tilde\Omega(\sqrt{n})$, and a more precise statement would thus be that a
  minimum spanning tree (and thus also a \gsf) requires
  $\tilde\Omega(\min\{\SPD,\sqrt{n}\})$ rounds to be approximated.  }

We now turn to a very brief overview of centralized algorithms. 
Thorup and Zwick \cite{TZ-routing} presented an algorithm that achieves,
for any $k\in \N$, routes of stretch $2k-1$ using $\tilde\BO(n^{1/k})$
memory. 
In terms of memory consumption, it has been established that this 
scheme is optimal up to a constant factor in worst-case stretch w.r.t.\
routing~\cite{PU89}. This result has been extended to the average
stretch, and tightened to be exact up to polylogarithmic factors in memory for
the worst-case stretch~\cite{abraham06}. For distance approximation, the
Thorup-Zwick scheme is known to be optimal for $k=1,2,3,5$ and conjectured to be
optimal for all $k$ (see~\cite{zwick01} and references). The algorithm
requires relabeling with labels of size $\BO(k\log n)$.
It is unclear whether stronger lower bounds apply to name-independent routing
schemes (which keep the original node identifiers); however, for $k=1$,
i.e., exact routing, trivially $\BO(n\log n)$ bits suffice (assuming
$\BO(\log n)$-bit identifiers), and Abraham et al.~\cite{abraham08}
prove a matching upper 
bound of $\tilde\BO(\sqrt{n})$ bits for $k=2$.

A closely related concept is that of \emph{sparse spanners}, introduced
by Peleg and Sch\"affer~\cite{PS89}. A $k$-spanner of a graph is obtained by
deleting edges, without increasing the distances by more than
factor $k$. Similarly to compact routing tables, it is known that a $(2k-1)$-spanner must
have $\tilde\Omega(n^{1+1/k})$ edges for some values of $k$, this is
conjectured to hold for all $k\in \N$, 
and a matching upper bound is obtained
by the Thorup-Zwick construction~\cite{TZ-05}. If an additive term
in the distance approximation is permitted, the multiplicative factor can be
brought arbitrarily close to $1$~\cite{EP04}. In contrast to routing and
distance approximation, there are extremely fast distributed algorithms
constructing sparse spanners.
Our long-range construction rests on an elegant
algorithm by Baswana and Sen~\cite{baswana07} that achieves 
stretch $2k-1$ vs.\ $\BO(n^{1+1/k})$ expected edges
within $\BO(k)$ rounds in the \CONGEST\ model. 


\section{Model}
\label{sec-model}
In this section we define the model of computation and formalize a few
concepts we use.

\subsection{The Computational Model} 
We follow the $\CONGEST(B)$ model as described in~\cite{Peleg:book}. The 
distributed system is represented by a simple, connected weighted
graph $G=(V,E,W)$, where $V$ is the set of nodes, $E$ is the set of
edges, and $W:E\to\N$ is the edge weight function.\footnote{We remark
that our results can be easily extended to non-negative edge weights by
employing appropriate symmetry breaking mechanisms.}
As a convention, we use $n$ to denote the number of nodes. We assume that all
edge weights are bounded by some polynomial in $n$, and that each node $v\in V$
has a unique identifier of $\BO(\log n)$ bits (we use $v$ to denote both the
node and its identifier).

Execution proceeds in global synchronous rounds, where
in each round, each node take the following three steps:
\begin{inparaenum}[(1)]
\item Receive messages sent by neighbors at the previous round,
\item perform local computation, and
\item send messages to neighbors.
\end{inparaenum}
Initially, nodes are aware only of their neighbors; input values (if any) are
assumed to be fed by the environment at time $0$. Output values are placed in
special output-registers. In each round, each edge can carry a message of $B$
bits for some given parameter $B$ of the model; we assume that $B\in \Theta(\log
n)$ throughout this paper.

A basic observation in this model is that we may assume, without loss
of generality, that we have a broadcast facility available, as
formalized in the following lemma.
\begin{lemma}
Suppose each $v\in V$ holds $m_v\ge0$ messages of $\BO(\log n)$ bits
each, for a total of $M\DEF \sum_{v\in V} m_v$ strings. Then all nodes in the
graph can receive these $M$ messages within $\BO(M+\HD)$ rounds.
\end{lemma}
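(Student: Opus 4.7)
The plan is to build a rooted BFS spanning tree and then use it as a backbone for pipelined upcast followed by pipelined broadcast. First, I would construct a BFS tree $T$ of depth at most $\HD$, rooted at some designated node (for instance, the node of minimum identifier, which is easy to elect by flooding). Standard flooding gives this in $\BO(\HD)$ rounds; each node learns only its parent and children pointers in $T$.

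Next, I would perform a pipelined convergecast of all $M$ messages to the root. Each node keeps a FIFO queue, initially containing its own $m_v$ strings; in every subsequent round, a non-root node with a non-empty queue sends the head of its queue to its parent on $T$, and the parent enqueues anything it receives. The key claim is that after $\BO(M+\HD)$ rounds the root holds all $M$ messages. This is established by the standard pipelining argument: tag one message $m^*$ and trace the path $P$ it follows up to the root; whenever $m^*$ is delayed by one round, some other message occupies the edge out of $m^*$'s current node, and since messages only travel upward and then leave at the root, each of the other at most $M-1$ messages can ``overtake'' $m^*$ along $P$ at most once. Hence $m^*$ reaches the root in at most $\HD + (M-1)$ rounds.

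Once the root has collected all $M$ strings, it initiates a pipelined downcast: in each round it transmits one message to each child in $T$, and every internal node forwards each received message to its children on the following round. By the classical pipelined broadcast analysis on trees of depth $\HD$, this takes another $\BO(M+\HD)$ rounds. Composing the three phases yields the claimed $\BO(M+\HD)$ bound. The main obstacle is the convergecast step, since the root has bounded degree and therefore is the bottleneck; the overtaking argument above is what makes the pipelining tight and gives the additive, rather than multiplicative, dependence on $\HD$.
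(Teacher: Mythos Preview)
Your proposal is correct and follows essentially the same three-phase approach as the paper: build a BFS tree rooted at the minimum-identifier node, pipeline all $M$ messages up to the root, then pipeline them back down. The paper's proof is merely terser, arguing the convergecast bound via ``over no edge more than $M$ messages need to be communicated'' rather than your per-message overtaking argument, but the structure and the resulting $\BO(M+\HD)$ bound are identical.
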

\begin{proof}
Construct a BFS tree rooted at, say, the node $r$ with smallest identifier
($\BO(\HD)$ rounds). All nodes send their messages to their parents and
forward the messages received by their children to their parent as well, until
the root holds all messages. Since over no edge more than $M$ messages need to be
communicated, this requires $\BO(M+\HD)$ rounds. Finally all messages are
broadcast over the tree, completing in another $\BO(M+\HD)$ rounds.
\end{proof}
In the following, we will use this lemma implicitly whenever stating that some
information is ``broadcast'' or ``announced to all nodes.''

\subsection{General Concepts}
 We use  extensively ``soft'' asymptotic notation that ignores
 polylogarithmic factors. Formally, we say that $g(n)\in \tilde 
\BO(f(n))$ if and only if there exists a constant $c\in \R^+_0$ such that
$f(n)\leq g(n)\log^c(f(n))$ for all but finitely many values of $n\in \N$.
Anagolously, $f(n)\in \tilde\Omega(g(n))$ iff $g(n)\in
\tilde{\BO}(f(n))$, $\tilde{\Theta}(f(n))\DEF\tilde\BO(f(n))\cap
\tilde\Omega(f(n))$, $g(n)\in \tilde{o}(f(n))$ iff for each fixed
$c\in \R^+_0$ it holds that $\lim_{n\to \infty}g(n)\log^c(f(n))/f(n)=0$, and
$g(n)\in \tilde{\omega}(f(n))$ iff $f(n)\in \tilde{o}(g(n))$.

To model probabilistic computation, we assume that each node has
access to an infinite string of independent 
unbiased random bits.  When we say that a certain event occurs ``with high
probability'' (abbreviated ``w.h.p.''), we mean that the probability of the
event not occurring can be set to be less than $1/n^c$ for any desired constant
$c$, where the probability is taken over the strings of random bits.

\subsection{Some Graph-Theoretic Concepts} 
A \emph{path} $p$ connecting
$v,u\in V$ is a sequence of nodes
$\langle v=v_0,\ldots,v_k=u\rangle$ such that for
all $0\le i<k$, $(v_i,v_{i+1})$ is an edge in $G$. 
Let $\Paths(v,u)$ denote  the set of all paths
connecting nodes $v$ and $u$.  We use the following unweighted concepts.
\begin{compactitem}
\item The \emph{hop-length} of a path $p$, denoted
$\ell(p)$,  is the number of
edges in it.
\item The \emph{hop distance} $\Hd:V\times V\to \N_0$ is defined as
$\Hd(v,u):=\min\{\ell(p)\mid{p\in\Paths(v,u)}\}$.
\item The \emph{hop diameter} of a graph $G=(V,E,W)$ is
$\HD\DEF\max_{v,u\in V}\{\Hd(v,u)\}$.
\end{compactitem}
We use the following weighted concepts.
\begin{compactitem}
\item The \emph{weight} of a path $p$, denoted $W(p)$, is its total
edge weight, i.e., $W(p)\DEF\sum_{i=1}^{\ell(p)} W(v_{i-1},v_i)$.
\item The \emph{weighted distance} $\Wd:V\times V\to \R^+_0$
is defined by $\Wd(v,u)\DEF\min\{W(p)\mid{p\in\Paths(v,u)}\}$.
\item The \emph{weighted diameter} of $G$ is
$\WD\DEF\max\{\Wd(v,u)\mid{v,u\in V}\}$.
\end{compactitem}
The following concepts mix weighted and unweighted ones.
\begin{compactitem}
\item Given $h\in\N$ and two nodes $v,u\in V$ with hop
  distance $\Hd(v,u)\leq h$, we define the \emph{$h$-weighted
    distance} $\Wd_h(u,v)$ to be the weight of the lightest path
  connecting $v$ and $u$ with at most $h$ hops, i.e.,
  $\Wd_h(v,w)\DEF\min\{W(p)\mid p\in \Paths(v,w) \textrm{ and }
  \ell(p)\leq h\}$. If $\Hd(v,u)>h$, we define $\Wd_h(v,u)\DEF\infty$.
  (Note that $\Wd_h$  does not satisfy the
  triangle inequality.)
\item The \emph{shortest paths diameter} of a graph, denoted $\SPD$,
  is the maximal number of 
    hops in shortest paths:
    $\SPD\DEF\max_{u,v\in V}\Set{\min\Set{\ell(p)\mid W(p)=\Wd(u,v)}}$.
\end{compactitem}

Finally, given a node $v$ and an integer $i\ge0$, we define $\Ball_v(i)$ to be
the set  of the $i$ nodes that are closest to $v$ (according to $\Wd$, where
identifiers are used to break symmetry): $\displaystyle
\Ball_v(i)\DEF\Set{u\,:\,|\Set{w:(\Wd(v,u),u)\le(\Wd(v,w),w)}|\leq i}$. Note
that our concept of ball differs from the usual one: we define a ball by its
center and \emph{volume}, namely the number of nodes it contains (and not by its
center and radius).

We have the following immediate property.
\begin{lemma}
\label{lem:h}
Let $v,u\in V$. If $u\in\Ball_v(i)$ for some $i\in\N$
then $\Wd(v,u)=\Wd_j(v,u)$ for all
$j\ge i-1$.
\end{lemma}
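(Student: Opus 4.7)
My plan is to reduce the lemma to exhibiting a shortest $v$-$u$ path that uses at most $i-1$ hops. Once such a path exists, then for every $j\ge i-1$ we obtain $\Wd_j(v,u) \le \Wd_{i-1}(v,u) = \Wd(v,u)$, while $\Wd(v,u) \le \Wd_j(v,u)$ holds directly from the definition, so $\Wd(v,u) = \Wd_j(v,u)$ as required.

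To produce such a low-hop shortest path, I would fix any shortest $v$-$u$ path $p = \langle v_0 = v, v_1, \ldots, v_k = u \rangle$ that attains the minimum possible number of hops $k$, and argue $k \le i - 1$. Every proper prefix of $p$ is itself a weight-optimal $v$-to-$v_l$ path; otherwise one could splice in a lighter alternative and contradict the weight-optimality of $p$. Since the model specifies strictly positive integer edge weights ($W : E \to \N$), each proper prefix has strictly smaller weight than $p$ as a whole, giving $\Wd(v,v_l) < \Wd(v,u)$ for all $0 \le l < k$. The minimum-hop choice moreover forces $v_0, \ldots, v_{k-1}$ to be pairwise distinct and distinct from $u$: any coincidence would let us either contract a cycle or truncate $p$ at the first reappearance of $u$, yielding a path of no greater weight but strictly fewer hops.

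To finish, I would combine this with the definition of $\Ball_v(i)$. Under the lexicographic order on $(\Wd(v,\cdot),\cdot)$, each of the $k$ nodes $v_0, \ldots, v_{k-1}$ strictly precedes $u$. Since $u \in \Ball_v(i)$ means that at most $i-1$ nodes strictly precede $u$ in this order, we get $k \le i - 1$, which is exactly what was needed.

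The step that requires the most care is the min-hop reduction --- verifying that the intermediate vertices on $p$ are all distinct and all different from $u$ --- and this follows directly from edge-weight positivity together with the weight-optimality of $p$. The remainder is routine bookkeeping against the ball definition, and no deeper obstacle arises.
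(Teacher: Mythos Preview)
Your proof is correct and follows essentially the same route as the paper's: both reduce to showing a shortest $v$--$u$ path has at most $i-1$ hops by observing that all intermediate nodes are strictly closer to $v$ than $u$ is, and then invoking the ball definition. Your min-hop selection to guarantee distinctness of intermediate vertices is a bit more explicit than the paper, which simply takes ``a shortest path'' and implicitly assumes simplicity, but the argument is otherwise identical.
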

\begin{proof}
Clearly $\Wd(v,u)\le\Wd_j(v,u)\le\Wd_{i-1}(v,u)$, and it therefore suffices to
show that $\Wd_{i-1}(v,u)=\Wd(v,u)$. Let $p=\Seq{v=v_0,v_1,\ldots,v_k=u}$ be a
shortest path from $v$ to $u$. Since edge weights are strictly positive, we have
that all the $k$ nodes $v_0,\ldots,v_{k-1}$ are strictly closer than $u$ to $v$.
Hence, since $u\in\Ball_v(i)$, we have that $i\ge k+1$. It follows that
$\Wd(v,u)=\Wd_{i-1}(v,u)$ and we are done.
\end{proof}

\section{Problem Statement and Lower Bounds}
\label{sec:prel}
\subsection{The Routing Problem}
\label{sec:problem}
In the \emph{routing table construction} problem (abbreviated $\rtc$), the local
input at a node is the weight of incident edges, and the output at each node $v$
consists of (i) a unique \emph{label} $\lambda(v)$ and (ii) a function
``$\Next_v$'' that takes a destination label $\lambda$ and produces a neighbor
of $v$, such that given the label $\lambda(u)$ of any node $u$, and starting
from any node $v$, we can reach $u$ from $v$ by following the $\Next$ pointers.
Formally, the requirement is as follows. Given a start node $v$ and a
destination label $\lambda(u)$, let $v_0=v$ and define
$v_{i+1}=\Next_{v_i}(\lambda(u))$ for $i\ge 0$. Then for some $i$ we must have
$v_i=u$.

The performance of a solution is measured in terms of its \emph{stretch}: A
route is said to have stretch $\rho\ge1$ if its total weight is no more than
$\rho$ times the weighted distance between its endpoints, and a solution to
$\rtc$ is said to have stretch $\rho$ if all the routes it induces have stretch
at most $\rho$.

\noindent\textbf{Variants.} Routing appears in many incarnations. We list a few
important variants below.

\emph{Name-independent routing.} Our definition of $\rtc$
allows for node relabeling. This is the case, as
mentioned above, in the Internet. The case where no such relabeling is allowed
(which can be formalized by requiring $\lambda$ to be the
identity function), is called \emph{name-independent} routing.

It can be shown that assigning new labels to the nodes is
unavoidable by proving that any (randomized) algorithm achieving polylogarithmic
(expected) stretch without relabeling must run for $\tilde\Omega(n)$ rounds.
Formally, we can prove the following.
\begin{theorem}
In the \CONGEST\ model, any algorithm for $\rtc$ that produces
name-independent stateful routing with expected average stretch 
$\rho$ requires $\Omega(n/(\rho^2\log n))$ time.
\end{theorem}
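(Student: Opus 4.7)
The plan is to prove the lower bound via a two-party communication-complexity reduction. We construct a ``dumbbell'' graph $G = G_A \cup \{e\} \cup G_B$, in which $G_A$ and $G_B$ are two isomorphic halves of $\Theta(n)$ nodes each, joined by a single bridge edge $e$ whose cut carries at most $B=\BO(\log n)$ bits per round. Each half is a weighted tree of low hop- and weighted-diameter, designed so that (i) intra-half routing is cheap, and (ii) a wrong routing choice at the destination side of the bridge forces the packet into a detour whose length is a factor $\Theta(\rho^{2})$ larger than the true source-target distance, even if the stateful algorithm is allowed to backtrack after detecting the mistake. The amplification is obtained by a recursively nested cluster pattern in which each successive ``wrong branch'' leads into a sub-cluster that must itself be traversed before any correction is possible.

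Fix a uniformly random bijection of the $n$ identifiers to the nodes of $G$; Alice observes the restriction of this bijection to $G_A$, Bob observes the restriction to $G_B$. A $T$-round distributed routing algorithm can be simulated by a two-party protocol between Alice and Bob whose transcript across $e$ has at most $\BO(T\log n)$ bits, and the routing-stretch guarantee translates directly into a correctness guarantee for the protocol.

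The core of the proof is an information-theoretic accounting. Because a misdirected packet incurs stretch $\Theta(\rho^{2})$, expected average stretch at most $\rho$ forces at most an $\BO(1/\rho)$ fraction of source-destination pairs to be misrouted; hence on an $\Omega(1)$ fraction of pairs $(s,t)\in V_A\times V_B$ the algorithm effectively delivers, to the side containing $s$, the correct sub-cluster of $G_B$ in which $t$ lives. Applying an index/Fano-type lower bound destination-by-destination against the random bijection then shows that the protocol transcript across $e$ must contain $\Omega(n/\rho^{2})$ bits of information about Bob's input. Dividing by the per-round bandwidth $B=\BO(\log n)$ gives the desired round-complexity bound $T=\Omega(n/(\rho^{2}\log n))$.

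The main obstacle is the construction in the first paragraph: engineering a tree-of-clusters in which one bad routing step blows up the stretch by a \emph{quadratic} (rather than merely linear) factor, while keeping the overall weighted and hop diameters of $G$ small enough that no alternative shortcut around $e$ exists. Getting this quadratic amplification — and not merely a linear one — is what converts the allowed error budget of $1/\rho$ misrouted pairs into the $1/\rho^{2}$ savings in information, and thus into the claimed $\rho^{-2}$ factor in the round complexity. A secondary technical point is to argue that stateful routing (which may encode history in the packet) cannot circumvent the bound: this is handled automatically by the communication-complexity reduction, since any state carried by a packet crossing $e$ is charged to the $\BO(\log n)$-bit per-round bandwidth of the cut.
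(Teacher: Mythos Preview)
Your reduction has a structural flaw: with a single bridge edge, the cut between Alice and Bob is \emph{not} the information bottleneck for routing. Once a packet from $s\in V_A$ crosses to the $G_B$ side, the routing decisions are made by nodes in $G_B$, whose tables were computed from messages they received from their own neighbors in $G_B$---information that never had to cross $e$. In particular, the first $G_B$ node $b$ can route the packet using a table that depends on everything $b$ learned from its $G_B$-neighbors during preprocessing; what Alice knows about $G_B$ is irrelevant. The correct bottleneck is the bandwidth into the node making the critical branching decision (bounded by its degree times $T\log n$), not the bridge edge. Your claim that ``the algorithm effectively delivers, to the side containing $s$, the correct sub-cluster of $G_B$'' is therefore unjustified: the side containing $s$ need not know anything about $t$'s location. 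Relatedly, your remark that stateful packet headers are ``charged to the bandwidth of the cut'' conflates routing-time communication with preprocessing rounds; the theorem bounds only the latter.

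You also explicitly leave the ``quadratic amplification'' gadget as the main obstacle without constructing it, and the step from ``at most a $1/\rho$ fraction misrouted'' to ``$\Omega(n/\rho^2)$ bits of information required'' is never justified (a naive Fano argument would give $\Omega(n)$ bits, not $\Omega(n/\rho^2)$). For contrast, the paper's proof uses a depth-$2$ tree whose root has $n_1\in\Theta(\rho)$ children, each with $n_2=n/n_1$ leaves, and all edge weights equal to $1$. The root is the bottleneck: it receives at most $n_1\cdot r\log n$ bits, and a counting argument over the $n!/(n_2!)^{n_1}$ leaf partitions shows that unless $r>n_2/(n_1\log n)=\Theta(n/(\rho^2\log n))$, each attempt by the root to guess the correct subtree succeeds with probability $O(1/n_1)$. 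Stateful routing is handled by observing that after a failed attempt the packet returns to the root and the situation recurs with $n_1-1$ subtrees; after $n_1/2$ attempts the packet still fails with constant probability, forcing average stretch $\Omega(n_1)=\Omega(\rho)$. The $\rho^{-2}$ thus arises as the product of $1/n_1$ (root bandwidth scales with degree $\rho$) and $n_2/n=1/n_1$ (each subtree holds an $n/\rho$ share of identifiers), not from any single-step quadratic detour.
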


\emph{Stateful routing.} The routing problem as defined above is
\emph{stateless} in the sense that routing a packet is done regardless of the
path it traversed so far.  One may also consider \emph{stateful} routing, where
while being routed, a packet may gather information that helps it navigate later
(one embodiment of this idea in the Internet routing today is MPLS, where
packets are temporarily piggybacked with extra headers). Note that the set of
routes to a single destination in stateless routing must constitute a tree,
whereas in stateful routing even a single route may contain a cycle. Formally,
in stateful routing the label of the destination may change from one node to
another: The $\Next_v$ function outputs both the next hop (a neighbor node), and
a new label $\lambda_v$ used in the next hop.

\emph{Name-independent routing.} Our definition of $\rtc$ allows for node
relabeling. This is the case, as mentioned above, in the Internet. The case
where no such relabeling is allowed (which can be formalized by requiring
$\lambda$ to be the identity function), is called \emph{name-independent}
routing.

It can be shown that assigning new labels to the nodes is unavoidable by proving
that any (randomized) algorithm achieving polylogarithmic (expected) stretch
without relabeling must run for $\tilde\Omega(n)$ rounds. What might come as a
surprise here is that the result also applies to stateful routing.
\begin{theorem}
In the \CONGEST\ model, any algorithm for $\rtc$ that produces
name-independent routing with (expected) average stretch $\rho$ requires
$\Omega(n/(\rho^2\log n))$ time.
\end{theorem}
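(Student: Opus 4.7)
The plan is to prove the lower bound via a reduction from two-party communication complexity across a graph bottleneck. I construct a family of hard graph instances, parameterized by Alice's private input, such that any name-independent $\rtc$ algorithm running in $T$ rounds effectively transmits information from Alice to Bob across a thin edge cut at rate $\BO(\log n)$ bits per round. Combining this with an information-theoretic lower bound on what Bob must learn will yield the claimed $\Omega(n/(\rho^2\log n))$ bound.

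For the construction I would build a graph with two halves $L$ (controlled by Alice) and $R$ (controlled by Bob) joined by a single bottleneck edge $e$, together with an auxiliary short path guaranteeing hop diameter $\BO(\log n)$. On $L$ I place $\Theta(n)$ destination nodes with \emph{fixed} identifiers—this is where name-independence is essential. Alice's secret input determines a partition of these destinations into $k$ ``clusters,'' encoded via edge weights internal to $L$. On $R$ I place $\Theta(n)$ source nodes, each assigned to a specific target cluster (known to Bob). The graph is arranged so that the shortest path from a source $s$ to a destination $d$ has weight $\BO(1)$ if $d$ belongs to the cluster targeted by $s$ and weight $\Omega(\rho)$ otherwise—for example by giving each destination a weight-$1$ edge to a per-cluster portal and weight-$\Omega(\rho)$ edges to the other portals.

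To convert stretch into information, I apply Markov's inequality: if the expected average stretch over the $\Theta(n^2)$ source-destination pairs is at most $\rho$, then a constant fraction of these pairs must achieve stretch at most $2\rho$, which in the construction forces the routes through the correct cluster. Bob can then read off the cluster membership of a large set of destinations from the observed first-hop decisions, recovering $\Omega(n\log k)$ bits of Alice's input. Since only $\BO(T\log n)$ bits cross $e$ in $T$ rounds, this yields $T=\Omega(n\log k/\log n)$; tuning $k$ and the weight gap so that each stretch-$\rho$ decision distinguishes only $\Theta(1/\rho)$ candidate clusters and then doing the counting gives the desired $\Omega(n/(\rho^2\log n))$ bound.

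The main obstacle I anticipate is obtaining the correct $\rho^2$ dependence rather than a naive $\rho$: one factor of $\rho$ is lost in the Markov step (passing from ``average stretch $\rho$'' to ``most pairs have stretch $\BO(\rho)$''), while the second must be extracted from the geometry so that each stretch-$\rho$ routing decision distinguishes among only $1/\rho$ candidate clusters instead of all $k$. A secondary, milder issue is that the theorem covers \emph{stateful} routing; however, because the initial packet state at any source is just the fixed destination label, which by name-independence is statistically independent of Alice's input, the source's first-hop decision already encodes the required bits and the argument carries through unchanged.
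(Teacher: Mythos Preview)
Your approach differs substantially from the paper's and, as described, has a genuine gap in the information-flow direction. You place the destinations together with their cluster portals in $L$ and the sources in $R$, connected by a single bottleneck edge. But then a packet from any $s\in R$ to any $d\in L$ must first traverse the bottleneck; only \emph{after} crossing does the route branch toward one of the $k$ portals. The node making that branching decision lies in $L$, on Alice's side, where the cluster information is already available. The routing tables Bob can inspect---those of nodes in $R$---all say ``forward toward the bottleneck'' regardless of $d$'s cluster, so Bob extracts nothing from them. Your key claim that ``Bob can read off the cluster membership \ldots\ from the observed first-hop decisions'' therefore fails. If you repair this by moving the $k$ portals into $R$ so that the source's first hop already selects a portal, the $L$--$R$ cut now has $k$ edges and carries $\Theta(k\log n)$ bits per round, which changes the arithmetic; your account of how the $\rho^2$ factor would then emerge (``one from Markov, one from the geometry'') remains a promissory note rather than an argument.

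For comparison, the paper uses no Alice/Bob cut at all. It takes a depth-$2$ tree whose root has $n_1\in\Theta(\rho)$ children, each with $n_2\approx n/\rho$ leaves, and assigns the $n$ leaf identifiers uniformly at random. The root is the sole decision-maker and has incoming bandwidth $\Theta(\rho\log n)$; a direct counting argument (partitions of identifiers versus the $2^{r n_1\log n}$ possible tables, via Stirling) shows that when $r\le n_2/(n_1\log n)=n/(\rho^2\log n)$ the root's first guess of the correct subtree succeeds with probability $O(1/n_1)$, and an inductive ``delete the wrong subtree and repeat'' argument then gives constant probability of $\Omega(n_1)$ failed attempts, hence expected stretch $\Omega(\rho)$. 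The two factors of $\rho$ come from the root's fan-in and from the combinatorics of the identifier partition---not from a Markov step.
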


\subsection{The Distance Approximation Problem}

The \emph{distance approximation} problem is akin to the routing problem. Again,
each node $v$ outputs a label $\lambda(v)$, but now, $v$ needs to construct a
function $\mathrm{dist}_v: \lambda(V)\to \R^+$ (the table) such that for all
$w\in V$ it holds that $\mathrm{dist}_v(w)\geq \Wd(v,w)$. The stretch of the
approximation for a given node $w$ is $\mathrm{dist}_v(w)/\Wd(v,w)$, and the
solution has stretch $\rho\geq 1$, if $\mathrm{dist}_v(w)\leq \rho \Wd(v,w)$ for
all $v,w\in V$.

Similarly to routing, we call a scheme name-independent if $\lambda$
is the identity function. Since we require distances estimates to be
produced without communication, there is no ``stateful'' distance
approximation.

\subsection{Hardness of Name-Independent Distributed Table Construction}
\label{sec:naming}

While name-independence may be desirable, our routing and distance approximation
algorithm makes heavy use of relabeling. This is unavoidable for fast
construction, because, as the following two theorems show, any name-independent
scheme of polylogarithmic stretch requires $\tilde{\Omega}(n)$ rounds for table
construction. The lower bound holds even for stateful routing 
and average stretch. Moreover, since the construction below is generic,
intuitively it implies that there is no reasonable restriction, be it
in terms of topology, edge weights, or node degrees, that permits fast
construction of name-independent routing tables.%
\footnote{The lower bound graph can be adapted to be a balanced binary
  tree, weakening the lower bound on the stretch by factor $\log n$.}

\begin{theorem}\label{thm:lower_route_independent}
In the \CONGEST\ model, any name-independent routing scheme of (expected)
average stretch $\rho$ requires $\Omega(n/(\rho^2\log n))$ rounds for table
construction. This holds even if all edge weights are $1$, the graph is a tree
of constant depth, and the node identifiers are $1,\ldots,n$.
\end{theorem}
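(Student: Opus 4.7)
The plan is to construct a depth-two tree on which even expected average stretch $\rho$ forces the root to accumulate many bits of information during construction. Take the tree consisting of a root $r$, $k$ children $c_1,\ldots,c_k$ of $r$, and $L=\lfloor(n-k-1)/k\rfloor$ leaves attached to each $c_i$, so that $n=\Theta(kL)$, all edge weights equal $1$, and the hop-diameter is $4$. Fix the identifiers of $r$ and of the $c_i$ arbitrarily and let the bijection $\pi$ between the remaining identifiers in $\{1,\ldots,n\}$ and the $kL$ leaves be drawn uniformly at random; by Yao's principle, a lower bound on the expected running time of any deterministic algorithm against this distribution implies the theorem.

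The key structural observation is that routing from $r$ is essentially a ``guess-the-parent'' game. Since $r$ is adjacent only to the $c_i$ and every leaf is reachable only through its parent, any stateful routing from $r$ to the leaf with label $x$ visits some sequence $c_{j_1},c_{j_2},\ldots$ of children; each wrong visit costs at least two extra hops before the next attempt, so if the true parent appears in position $m(x)$, the routed path has length $\geq 2m(x)$ against a true distance of $2$, yielding stretch $\geq m(x)$. Expected average stretch $\rho$ thus forces $\E[\sum_x m(x)]\leq \rho kL$, and by Markov's inequality a constant fraction of the leaves satisfy $m(x)=\BO(\rho)$. For each such ``good'' $x$, the routing data at $r$ implicitly confines the parent of leaf $x$ to a set $S_x$ of $\BO(\rho)$ candidates among the $k$ children.

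A chain-rule argument---revealing $\pi$ leaf by leaf and lower-bounding the residual entropy of $\pi(x)$ by $\log(k/\rho)-\BO(1)$ for every good $x$---shows that the transcript observed by $r$ during the $T$ construction rounds must carry $\Omega(kL\log(k/\rho))$ bits of information about $\pi$; since this transcript contains at most $kT\cdot\BO(\log n)$ bits, one obtains $T=\Omega(L\log(k/\rho)/\log n)$, and taking $k=\Theta(\rho^2)$ yields the claimed $\Omega(n/(\rho^2\log n))$ bound. I expect the main obstacle to be making this mutual-information step fully rigorous: $\pi$ has correlated coordinates due to the balanced-partition structure, and in stateful routing the set $S_x$ is implicitly defined by the joint routing functions at $r$ and the $c_{j_i}$'s together with the $\BO(\log n)$-bit state piggybacked on each packet rather than stored explicitly at $r$. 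Both issues can be handled with care---the first by a chain-rule decomposition that absorbs correlations into lower-order entropy losses, and the second by observing that only construction-time bandwidth enters the counting argument, so per-packet state used at routing time does not affect the bound---leaving the slack in the Markov step and the freedom to choose $k$ more than sufficient to preserve the $1/\rho^2$ dependence.
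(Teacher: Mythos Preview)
Your setup—the depth-two tree with uniformly random leaf labeling and the reduction of stretch to the number of wrong guesses at the root—is exactly the paper's. The divergence is in the analysis: you attempt a mutual-information bound on the root's construction-time transcript $\tau$, whereas the paper uses a direct counting argument over partitions combined with an inductive subtree-deletion step.

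Your entropy argument has a genuine gap in the stateful case, and your proposed patch does not close it. The set $S_x$ of the first $O(\rho)$ attempted children is \emph{not} determined by $\tau$: each bounced packet returns $O(\log n)$ bits computed by the visited child $c_j$ from its own leaf set, and these bits steer the next attempt. Hence ``$p(x)\in S_x$'' is not information that $\tau$ alone reveals about $\pi$, and the chain-rule bound $I(\tau;\pi)\ge\Omega(kL\log(k/\rho))$ does not follow. The assertion that ``only construction-time bandwidth enters'' is precisely what needs proof. A sanity check: with $T=0$ and $k=c\rho$ subtrees, plain sequential search at the root achieves average stretch $\Theta(k)=\Theta(\rho)$, yet your formula with this $k$ would give $T=\Omega(L\log c/\log n)=\Omega(n/(\rho\log n))>0$. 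Choosing $k=\Theta(\rho^2)$ sidesteps this particular instance but does not repair the logic.

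The paper instead takes $k=n_1\in\Theta(\rho)$ (not $\rho^2$) and argues per attempt. For any fixed root table it counts the partitions on which a uniformly random request is served correctly by the \emph{first} hop with probability at least $e^2/n_1$, and shows via Stirling that this forces $r>n_2/(n_1\log n)$. Statefulness is handled by conditioning on the leaf set of the just-visited subtree after each failure: this fixes the bounced state (which depends only on that subtree's leaves and the incoming packet), leaves the residual partition uniform over $n_1-1$ subtrees, and lets the same bound iterate. After $n_1/2$ failed attempts one still has $\Omega(1)$ probability of not having found the destination, yielding expected stretch $\Omega(n_1)=\Omega(\rho)$. This per-attempt counting plus deletion is the idea your sketch is missing.
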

\begin{proof}
We assume w.l.o.g.\ that all set sizes we use in this proof are integer and that
nodes may send no more than exactly $\log n \in \N$ bits over each edge in each
round. Consider the following family of trees of depth $2$. The root is
connected to $n_1\in \Theta(\rho)$ inner nodes, each of which has $n_2$
children; denote by $I$ and $L$ the respective sets of nodes. All edges have
weight $1$, i.e., the maximal simple path weight is $4$.

We assign the identifiers $1,\ldots,n$ uniformly at random to the $n_1n_2$
leaves (w.l.o.g., we neglect that the total number of nodes is $n_1n_2+n_1+1$ in
the following and use $n$ instead). Consider any deterministic algorithm
constructing routing tables within $r\in \N$ rounds. From each node in $I$, the
root receives at most $r\log n$ bits, hence there are at most $2^{rn_1\log
n}$ possible routing tables at the root. Now consider the $n!/(n_2!)^{n_1}$
possible partitions of the leaf identifiers to the subtrees rooted at nodes from
$I$. We bound the number of such partitions for which a fixed routing table at
the root may serve a uniformly random routing request with probability at least $p$
correctly. This requirement translates to at least $pn$ identifiers being
exactly in the subtree where the routing table points to; we have
$\binom{n}{pn}$ possible choices for these identifiers. The remaining $(1-p)n$
identifiers may be distributed arbitrarily to the remaining subtrees. Depending
on the distribution of the $pn$ identifiers we already selected, the number of
possibilities for this may vary. Using standard arguments 
it can be shown that this quantity is maximized if the $pn$ identifiers are
distributed evenly among the subtrees, i.e., each of them contains $pn_2$ of
them. We conclude that no routing table can serve a uniform request with
probability at least $p$ for more than
$\binom{n}{pn}((1-p)n)!/((1-p)n_2)!^{n_1}$ of the possible input partitions.
Considering the number possible routing tables and the total number of input
partitions $n!/(n_2!)^{n_1}$, we have that
\begin{equation*}
\frac{(pn)!((1-p)n_2)!^{n_1}}{n_2!^{n_1}}=
\frac{n!/n_2!^{n_1}}{\binom{n}{pn}((1-p)n)!/((1-p)n_2)!^{n_1}}\leq
 2^{rn_1\log n}.
\end{equation*}

We distinguish two cases, the first being $p< e^2/n_1$. We seek to upper
bound $p$ in the second case as well, where $p\geq e^2/n_1$. Clearly the l.h.s.\
of the above inequality is increasing in $p\in [0,1]$. Together with Stirling's
approximation $x!\in e^{(1-o(1))x(\ln x-1)}$ we can bound
\begin{eqnarray*}
\frac{(pn)!((1-p)n_2!)^{n_1}}{n_2!^{n_1}}&\geq &
\frac{(e^2n_2)!((1-e^2/n_1)n_2)^{n_1}}{n_2!^{n_1}} \\
&\in & 
e^{(1-o(1))(e^2n_2(\ln (e^2n_2)-1)+n(1-e^2/n_1)(\ln((1-e^2/n_1)n_2)-1)
-n(\ln n_2-1))}\\
&\subseteq & 
e^{(1-o(1))(e^2n_2(\ln n_2+1)-e^2n_2(\ln n_2-1)+n \ln(1-e^2/n_1))}\\
&\subseteq & 
e^{(1-o(1))(2e^2n_2-e^2n_2)}\\
&=& e^{(e^2-o(1))n_2}.
\end{eqnarray*}
The assumption that $p\geq e^2/n_1$ thus implies (for sufficiently large $n$)
that $r>n_2/(n_1\log n)$.

Now condition on the event that for the given routing request the table
does not lead to the correct subtree. We fix the (uniformly random) subset of
leaf identifiers in the subtree $S$ the root's routing table points to, and
conclude that the set of remaining identifiers is a uniformly random subset of
$n-n_2-1$ leaf identifiers plus the destination's identifier. Moreover, the
destination is uniformly random from this subset and the remaining identifiers
are uniformly distributed among the remaining subtrees. We delete $S$ from the
graph (since clearly there is no reason to route to $S$ again) and examine the
next routing decision of the root. We observe that the situation is identical to
the initial setting except that $n_1$ is replaced by $n_1-1$. Note also that $S$
contained no valuable information: We deleted $S$ and the identifiers in $S$
from the graph, and any other information known to nodes in $S$ must have been
communicated to $S$ by the root. Hence, repeating the above arguments, we see
that the probability to find the destination in the second attempt conditioned
on the first having failed is at most $e^2/(n_1-1)$ or $r>n_2/(n_1\log n)$.
By induction on the number of routing attempts, we infer that for $i\in
\{1,\ldots,n_1/2\}$, the probability $p_i$ to succeed in the $i^{th}$ attempt to
route from the root node to the subtree containing the destination (conditional
on the previous attempts having failed) is upper bounded by $2e^2/n_1$ unless
$r>n_2/(n_1\log n)$.

Overall, the probability that a deterministic algorithm constructing routing
tables within $r\leq n_2/(n_1\log n)$ rounds fails to serve a uniformly random
routing request at the root for uniformly distributed leaf identifiers using
fewer than $n_1/2$ attempts (i.e., visits of the root on the routing path) is
lower bounded by
\begin{equation*}
\left(1-\frac{2e^2}{n_1}\right)^{n_1/2}\in \Omega(1).
\end{equation*}

Note that an analogous argument holds for routing requests issued at other
nodes, since they have a large probability to require routing to a different
subtree. Therefore, the average stretch of any deterministic routing
algorithm running for fewer than $n_2/(n_1\log n)$ rounds is at least
$\Omega(n_1)$. By Yao's principle, the expected average stretch of randomized
algorithms running for fewer than $n_2/(n_1\log n)$ rounds thus must also be
in $\Omega(n_1)$. Recalling that $n_1\in \Theta(\rho)$ and $n_2 = n/\rho$, we
get that $r\in o(n/(\rho^2\log n))$ rounds are insufficient to achieve
(expected) average stretch $\rho$, proving the statement of the theorem.
\end{proof}

A streamlined version of  the argument shows that a similar lower
bound applies to distance approximation.

\begin{theorem}\label{thm:lower_dist_independent}
In the \CONGEST\ model, any name-independent distance approximation scheme of
(expected) average stretch $\rho$ requires $\Omega(n/\log n)$ rounds for table
construction in graphs with edge weights of $1$ and $\omega_{\max}\in \BO(\rho)$
only. This holds even if the graph is a star and the node identifiers are
$1,\ldots,n$.
\end{theorem}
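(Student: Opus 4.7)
The plan is to apply Yao's principle and construct an input distribution on which every deterministic $r$-round scheme with expected average stretch $\le\rho$ must satisfy $r=\Omega(n/\log n)$. Take the graph to be a star with center $c$ and $n-1$ leaves with fixed identifiers $1,\ldots,n-1$, and draw each leaf-edge weight $W_v$ independently and uniformly from $\{1,\omega\}$, where $\omega\in\BO(\rho)$ is a sufficiently large constant multiple of $\rho$ (say $\omega=100\rho$, rounded up to an integer). Every leaf's only channel to the rest of the graph is its single incident edge to $c$, so after $r$ rounds its view $V_u$ is a $W$-measurable random variable with $H(V_u)\le r\log n$ bits.

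The core information-theoretic inequality is $\sum_{v'\ne u}I(V_u;W_{v'})\le I(V_u;W)\le H(V_u)\le r\log n$, which holds because the $W_{v'}$ are mutually independent uniform bits (subadditivity of entropy handles the first inequality). Call a pair $(u,v')$ \emph{informed} at a realization $V_u=v_u$ if $W_{v'}$ is determined by $v_u$, and \emph{uninformed} otherwise; since each informed $v'$ contributes $I(V_u;W_{v'})\ge \Pr[\text{informed for }v']$, the expected number of informed $v'$'s for any fixed $u$ is at most $r\log n$. Consequently, for every leaf $u$ at least a $1-r\log n/(n-2)$ fraction of pairs are uninformed, and on any uninformed realization the value $\omega$ remains consistent with $V_u$, so the correctness requirement $\mathrm{dist}_u(v')\ge \Wd(u,v')$ forces $\mathrm{dist}_u(v')\ge W_u+\omega$.

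Conditioning on $W_u=1$ (probability $1/2$), the stretch of an uninformed pair is $\ge(1+\omega)/2$ when $W_{v'}=1$ and $\ge1$ when $W_{v'}=\omega$. By the law of total probability, the $V_u$-marginalized posterior $\Pr[W_{v'}=1\mid V_u]$ has mean $1/2$, so once informedness is a small fraction its conditional mean on uninformed realizations is also close to $1/2$, yielding expected stretch $\Omega(\omega)=\Omega(\rho)$ per uninformed pair with $W_u=1$. Summing over pairs gives $E[\text{avg stretch}]\ge c\rho\cdot(1-r\log n/(n-2))$ for an absolute constant $c$ controlled by $\omega/\rho$; choosing $\omega/\rho$ large enough makes $c>1$, so the hypothesis of expected avg stretch $\le\rho$ forces $r\log n\ge \alpha n$ for a constant $\alpha>0$, i.e.\ $r=\Omega(n/\log n)$. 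The main obstacle will be the informed/uninformed accounting: in particular, confirming that ''partially informed'' realizations (where $0<\Pr[W_{v'}=\omega\mid V_u]<1$) still fall into the uninformed bucket, as the algorithm must already hedge $\mathrm{dist}_u(v')\ge W_u+\omega$ whenever $\omega$ has positive posterior mass, and verifying that the $W_u$-conditioning is benign since $W_u$ is independent of every $W_{v'}$.
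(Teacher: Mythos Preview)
Your proposal is correct and uses the same hard distribution as the paper (a star with i.i.d.\ uniform $\{1,\omega\}$ edge weights, Yao's principle, and the observation that a leaf sees at most $r\log n$ bits through its single edge). The difference is purely in how you turn the bit bottleneck into a stretch lower bound. The paper counts: there are at most $2^{r\log n}$ possible tables at $v$; any fixed table is $\rho$-approximate for a given destination with probability exactly $1/2$; Chernoff then says a fixed table is correct on a $3/4$ fraction of destinations with probability $2^{-\Omega(n)}$, and a union bound over tables forces $r\log n\in\Omega(n)$. You instead run an information-theoretic argument: $\sum_{v'}I(V_u;W_{v'})\le I(V_u;W)\le H(V_u)\le r\log n+O(1)$, so in expectation all but $r\log n$ of the $W_{v'}$ are ``uninformed'' given $V_u$, and for those the never-underestimate constraint forces $\mathrm{dist}_u(v')\ge W_u+\omega$, giving large stretch when $W_u=W_{v'}=1$.

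Both routes are short; yours avoids Chernoff and the union bound at the cost of a small expectation computation (the step $E[\mathbf{1}_{\text{uninf}}\cdot\mathbf{1}_{W_{v'}=1}]\ge 1/2-\Pr[\text{informed}]$, which you sketch correctly). Two minor points to tighten when you write it out: (i) the inequality $\sum I(V_u;W_{v'})\le I(V_u;W)$ is indeed a consequence of subadditivity of conditional entropy together with independence of the $W_{v'}$, but state it carefully; (ii) the conditioning on $W_u=1$ costs only a factor $1/2$ in the final averaging since stretch is always $\ge 1$, and you should make that explicit rather than leaving it as ``benign''.
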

\begin{proof}
Again, we assume w.l.o.g.\ that all considered values are integer and that link
capacity is $\log n$ bits per round. Suppose $G$ is a star with $n$ leafs (we
neglect w.l.o.g.\ the center in the node count). All edges have weight
$\omega_{\max}$ with independent probability $1/2$; the remaining edges have
weight $1$.

Condition on the event that some fixed node's $v$ incident edge has weight $1$.
Thus, there are two possible path weights to other nodes: $\omega_{\max}+1$ and
$2$. Within $r$ rounds, the node receives at most $r\log n$ bits, yielding
$2^{r\log n}$ possible distance estimate configurations. In order to be
$\rho$-approximate for $\rho<(\omega_{\max}+1)/2$ and a given other leaf, $v$'s
table must output an estimate of at most $2\rho<\omega_{\max}+1$ in case the
leaf's edge has weight $2$ and at least $\omega_{\max}+1$ if the leaf's edge has
weight $\omega_{\max}$. Thus any given table can be correct for a given leaf for
only one of the two possible choices of the leaf's edge's weight. There are
$2^n$ possible edge weight assignments. By the above
observation, a fixed table is $\rho$-approximate for a given destination with
probability $1/2$. By Chernoff's bound, this implies that the probability that a
fixed table is correct for a fraction of $3/4$ of the destinations is bounded by
$2^{-\Omega(n)}$. By the union bound, it follows that for the given uniformly
random edge weight assignment, the probability that the computed table is
correct for a fraction of $3/4$ of the destinations is upper bounded by
$2^{-\Omega(n)}2^{r\log n}$. This implies that $r\in \Omega(n/\log n)$ or the
average stretch of node $v$'s table must be $\Omega(\omega_{\max})$.

By symmetry, the same applies to all nodes incident to an edge of weight $1$. By
Chernoff's bound, w.h.p.\ at least one quarter of the nodes satisfies this
property, i.e., the probability mass of the events where fewer than $n/4$ edges
have weight $1$ is negligible. By linearity of expectation, it follows that any
deterministic algorithm running for $o(n/\log n)$ rounds exhibits average
stretch $\Omega(\omega_{\max})$, and by Yao's principle this extends to the
expected stretch of randomized algorithms.
\end{proof}

Consequently, in the remainder of the paper we shall consider
name-dependent schemes only.

\subsection{Hardness of Diameter Estimation}
\label{sec:lb_diam}

In \cite{FHW-12}, it is shown that approximating the hop-diameter of a
network within a factor smaller than 1.5 cannot be done in the \CONGEST\ model
in $\tilde o(\sqrt n)$ time. Here, we prove a hardness result for the weighted
diameter, formally stated as follows.
\begin{theorem}\label{thm-lb-diam}
For any $\omega_{\max}\geq \sqrt{n}$, there is a function $\alpha(n)\in
\Omega(\omega_{\max}/\sqrt{n})$ such that the following holds. In the family of
weighted graphs of hop-diameter $\HD\in \BO(\log n)$ and edge weights $1$ and
$\omega_{\max}$ only, an (expected) $\alpha(n)$-approximation of the weighted
diameter requires $\tilde\Omega(\sqrt n)$ communication rounds in the \CONGEST\
model.
\end{theorem}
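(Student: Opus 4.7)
The plan is to adapt the standard reduction from two-party randomized communication complexity of set disjointness that underlies the $\tilde\Omega(\sqrt{n})$ lower bounds of~\cite{DHKNPPW-11,PelegR-00}, arranging edge weights so that the approximate value of the \emph{weighted} diameter encodes the answer. I would construct a lower-bound graph $G$ on $\Theta(n)$ nodes, partitioned into an Alice-side $L$ and a Bob-side $R$ joined by a bottleneck cut of $\BO(\log n)$ edges. Each side contains $k=\Theta(\sqrt{n})$ parallel ``decision paths'' of length $\Theta(\sqrt{n})$ connecting a designated endpoint ($s$ on Alice's side, $t$ on Bob's side) to the bottleneck; on Alice's side, decision path $i$ contains one switchable edge whose weight is $1$ if $a_i=1$ and $\omega_{\max}$ if $a_i=0$, and symmetrically on Bob's side for $b_i$. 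Additionally, each side carries a balanced binary shortcut tree whose edges all have weight $\omega_{\max}$, bringing the overall hop-diameter down to $\BO(\log n)$.

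For the analysis, the key observation is that in a ``not disjoint'' instance (some index $i$ has $a_i=b_i=1$), the unit-weight segments of decision path $i$ on both sides concatenate into an $s$-$t$ route of total weight $\BO(\sqrt{n})$, yielding $\WD\in\BO(\sqrt{n})$. In a ``disjoint'' instance, every aligned pair of decision paths contains at least one weight-$\omega_{\max}$ edge and every route through the shortcut tree also uses a weight-$\omega_{\max}$ edge, so $\WD\in\Omega(\omega_{\max})$. Hence any $\alpha(n)$-approximation with $\alpha(n)<c\cdot\omega_{\max}/\sqrt{n}$ for an appropriate constant $c$ distinguishes the two cases and therefore solves set disjointness. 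A \CONGEST\ algorithm running for $R$ rounds can be simulated by Alice and Bob while exchanging at most $\BO(R\log^2 n)$ bits across the cut, so by the known $\Omega(k)=\Omega(\sqrt{n})$ randomized communication lower bound for disjointness we obtain $R\in\tilde\Omega(\sqrt{n})$; Yao's minimax principle lifts this bound to randomized distributed algorithms with constant expected approximation quality.

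The hard part is satisfying all three constraints of the theorem simultaneously: hop-diameter $\BO(\log n)$, edge weights only in $\{1,\omega_{\max}\}$, and a factor-$\omega_{\max}/\sqrt{n}$ gap between the weighted diameters of the two input types. Naive $\sqrt{n}$-lower-bound constructions produce the bandwidth bottleneck by means of long paths, which pushes $\HD$ to $\Omega(\sqrt{n})$; overlaying a shortcut tree collapses $\HD$ but threatens to introduce a low-weight bypass that would shrink $\WD$ in the disjoint case. Making every shortcut edge have weight $\omega_{\max}$ resolves the tension, since any bypass through the tree uses at least one heavy edge and therefore cannot reduce $\WD$ below $\Omega(\omega_{\max})$ in disjoint instances, while the unit-weight $s$-$t$ route in non-disjoint instances remains available along matching decision paths. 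Verifying that the cut across the tree-augmented construction still has bandwidth $\BO(\log n)$---so that the communication simulation indeed transmits only $\BO(R\log^2 n)$ bits---is the essential technical check.
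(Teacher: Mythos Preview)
Your overall plan---reducing from two-party disjointness on a grid-plus-shortcut-tree graph with weights in $\{1,\omega_{\max}\}$ so that the weighted diameter differs by a factor $\Theta(\omega_{\max}/\sqrt n)$ between the two cases---is the same as the paper's. But there is a real gap in the construction you sketch. You set \emph{all} shortcut-tree edges to weight $\omega_{\max}$. This does block low-weight bypasses, but it also wrecks the ``small diameter'' case: every internal tree node has only tree edges incident to it, hence only weight-$\omega_{\max}$ edges, so its weighted distance to every other node is at least $\omega_{\max}$. Therefore $\WD\ge\omega_{\max}$ holds regardless of the disjointness input, and your claimed implication ``not disjoint $\Rightarrow\WD\in\BO(\sqrt n)$'' fails---you bounded $\Wd(s,t)$, but the diameter is a maximum over \emph{all} pairs.

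The paper resolves this differently. It gives the \emph{internal} tree edges weight $1$ and reserves weight $\omega_{\max}$ only for the edges attaching the tree's leaves $u_j$ to the grid columns; additionally, Alice and Bob receive weight-$1$ links directly into the tree. With the (inverted relative to yours) encoding that the edge from Alice to row $i$ has weight $\omega_{\max}$ iff $i\in A$ (symmetrically for Bob), $A\cap B=\emptyset$ implies every grid node reaches Alice or Bob along an all-weight-$1$ row in at most $\sqrt n$ steps, while Alice, Bob, and all tree nodes are within $\BO(\log n)$ of one another through the weight-$1$ tree; hence $\WD\le\sqrt n+\BO(\log n)$. If $i\in A\cap B$, every path from $v_{i,1}$ to Bob must use a heavy edge (either a star edge or a tree--grid attachment edge), giving $\WD>\omega_{\max}$. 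Placing the heavy edges at the tree--grid interface rather than inside the tree is precisely what lets $\HD\in\BO(\log n)$, weights in $\{1,\omega_{\max}\}$, and the required diameter gap coexist.
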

\begin{figure}[t]
  \centering \includegraphics[width=12cm]{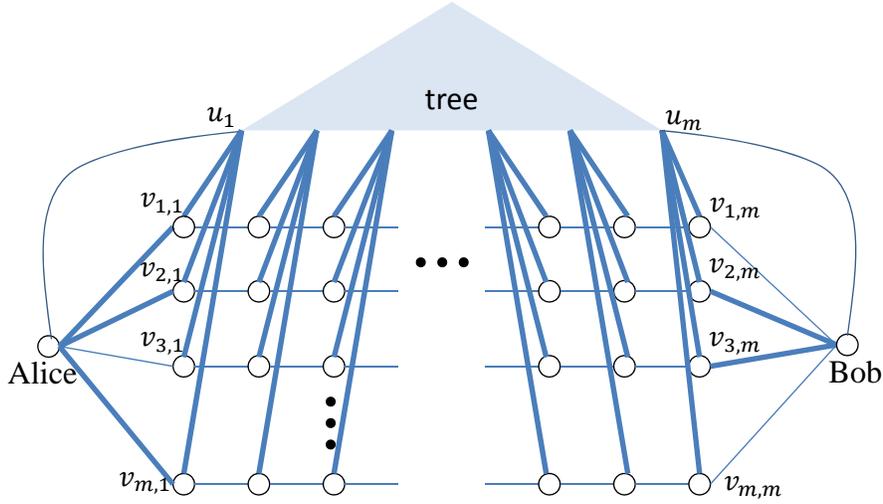} \caption{An
    illustration of  the graph
  used in the proof of
    \theoremref{thm-lb-diam}. Thick edges denote edges of weight
    $\omega_{\max}$, other edges are of weight $1$. The shaded
    triangle represents a binary tree}
  \label{fig-lb5}
\end{figure}
\begin{proof}[Proof sketch:]
  We construct a graph $G_n$ with $\Theta(n)$ nodes. Let $m=\sqrt n\in \N$. The
  graph consists of the following three conceptual parts. \figref{fig-lb5}
  illustrates a part of the construction.
  \begin{compactitem}
  \item Nodes $v_{i,j}$ for $1\le i,j\le m$. These nodes are connected
    as $m$ paths of length $m-1$. All path edges are of weight $1$.
  \item A star rooted at an \emph{Alice} node, where the children are
  $v_{1,1},\ldots,v_{m,1}$, and similarly, a star rooted at a \emph{Bob} node,
  whose leaves are $v_{m,1},\ldots,v_{m,m}$. We specify the weights of these
  edges later.
   \item 
     For each $1\le j\le m$ there is a node $u_j$ connected to all
     nodes $v_{i,j}$, $1\le i\le m$ in ``column'' $j$, with edges of
     weight $\omega_{\max}$. In addition, there is a binary tree whose
     leaves are the nodes $u_j$. All tree edges have weight $1$. Finally, \emph{Alice}
     and \emph{Bob} are connected to $u_1$ and $u_m$, respectively, by edges of weight
     $1$.
  \end{compactitem}
It is easy to see that the hop-diameter of $G_n$ is $\BO(\log n)$:
the hop-distance from any node to one of the nodes $u_j$ is $\BO(\log n)$,
and the distance between any two such nodes is also $\BO(\log n)$.
However, the majority of the short paths guaranteeing the small diameter passes
through very few nodes close to the root of the binary tree. Consequently, it
takes a long time to exchange a large number of bits between \emph{Alice} and \emph{Bob}, implying
that it is hard to decide set disjointness for sets held
by \emph{Alice} and \emph{Bob} in the \CONGEST\ model. Specifically, the following fact is a
direct corollary from \cite{DHKNPPW-11}.
\begin{fact}[\cite{DHKNPPW-11}]
\label{lb-fact}
Let $\cM\DEF\Set{1,\ldots,m}$. Suppose that node \emph{Alice}
holds a set $A\subseteq\cM$ and that node \emph{Bob} holds a set
$B\subseteq\cM$. Then finding whether $A\cap B=\emptyset$ takes
$\tilde\Omega(m)$ rounds in the \CONGEST\ model, even for randomized algorithms.  
\end{fact}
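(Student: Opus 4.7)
The plan is to derive this from the lower-bound framework of \cite{DHKNPPW-11}, whose starting point is the well-known randomized two-party communication complexity lower bound for set disjointness: any protocol that decides $A \cap B \stackrel{?}{=} \emptyset$ on sets $A,B \subseteq \cM$ with constant success probability must exchange $\Omega(m)$ bits between the two parties. By Yao's minimax principle it therefore suffices to exhibit, for a worst-case input distribution, a deterministic lower bound of $\tilde\Omega(m)$ rounds.

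The next step is the standard simulation argument. Assume for contradiction that some $r$-round \CONGEST{} algorithm $\mathcal{A}$ on $G_n$ solves disjointness. Convert $\mathcal{A}$ into a two-party protocol by having Alice simulate the local state of every node on her ``side'' of a chosen vertex cut $C$ and Bob simulate the rest. The two parties need only exchange the contents of messages that actually cross edges of $C$; since each edge carries $B \in \BO(\log n)$ bits per round, the total communication is at most $\BO(r \cdot |C| \cdot \log n)$. Combining with the $\Omega(m)$ communication lower bound yields $r \in \tilde\Omega(m / |C|)$.

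The main obstacle — and exactly what the construction of $G_n$ is designed to solve — is to exhibit a cut $C$ whose size is only $\polylog n$ while still being forced to carry essentially all information flowing between Alice's inputs and Bob's inputs. The ``fast'' topological routes between the two sides run through the binary tree on top of the $u_j$ nodes, and any horizontal cut near the root of that tree contains only $\BO(1)$ tree edges plus the two edges $(\textit{Alice}, u_1)$ and $(\textit{Bob}, u_m)$; the only other way to cross is along the $m$ grid rows, which however have hop length $\Theta(m)$ and therefore cannot transmit information about a remote input in $o(m)$ rounds. To turn this intuition into a formal argument, I would follow \cite{DHKNPPW-11} and embed the bits of $A$ and $B$ into specifically chosen grid vertices (rather than directly at Alice and Bob), so that each bit pair $(a_i, b_i)$ can only ``meet'' either by crossing the narrow-bandwidth cut or by traversing a $\Theta(m)$-hop grid path.

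Putting these pieces together, a direct application of the bit-gadget/simulation argument of \cite{DHKNPPW-11} (which in turn builds on \cite{PelegR-00}) to the cut described above gives $r \cdot \polylog n \geq \Omega(m)$, and hence $r \in \tilde\Omega(m)$ as claimed, with the randomized version following from Yao's principle.
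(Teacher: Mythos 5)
The paper does not actually prove this statement: it is imported verbatim as a black-box fact from \cite{DHKNPPW-11} (the surrounding text says it ``is a direct corollary'' of that work), so there is no in-paper proof to compare against. Judged on its own, your sketch has the right overall architecture --- reduce from two-party set disjointness via a simulation of the \CONGEST{} algorithm --- but the central step, as literally written, does not go through.

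The problem is the fixed-cut simulation in your second paragraph. On $G_n$ there is no edge cut $C$ of size $\polylog n$ separating Alice's side from Bob's side: any such cut must sever each of the $m$ row-paths, so $|C|\ge m$ and the bound $r\in\tilde\Omega(m/|C|)$ collapses to $r\in\tilde\Omega(1)$. You notice this tension in your third paragraph, but you resolve it only with the informal claim that the rows ``cannot transmit information about a remote input in $o(m)$ rounds'' plus a deferral to the reference. The mechanism that \cite{DHKNPPW-11} actually uses is a \emph{time-varying} simulation rather than a fixed cut: at the end of round $t$, Alice simulates the states of all nodes except those within $t$ path-hops of Bob's end, and Bob does the symmetric thing. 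Because each frontier advances by only one path vertex per round, the messages crossing row edges at the frontier are sent by nodes the simulating party could still simulate in the previous round, so they cost nothing; the only messages that must genuinely be exchanged between Alice and Bob are those on the $\BO(\log n)$ tree edges joining the two simulated sets, i.e., $\BO(\log^2 n)$ bits per round. As long as $r<m/2$ both parties can still compute the output, whence $r\cdot\BO(\log^2 n)\in\Omega(m)$ and $r\in\tilde\Omega(m)$. (Also, no re-embedding of the input bits into grid vertices is needed for the fact as stated: $A$ and $B$ are simply local inputs at the nodes \emph{Alice} and \emph{Bob}; the encoding into edge weights is part of the separate reduction to diameter approximation in the surrounding proof.) With the fixed cut replaced by this progressive simulation, your outline matches the intended argument.
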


We now show that if the diameter of $G_n$ can be approximated within factor
$\omega_{\max}/\sqrt n$ in time $T$ in the \CONGEST\ model, then the set
disjointness problem problem can be solved in time $T+1$. To this end, we set
the edge weights of the stars rooted at \emph{Alice} and \emph{Bob} as follows: for all $i\in
\{1,\ldots,m\}$, the edge from \emph{Alice} to $v_{i,1}$ has weight $\omega_{\max}$ if
$i\in A$ and weight $1$ else; likewise, the edge from \emph{Bob} to $v_{i,m}$ has
weight $\omega_{\max}$ if $i\in B$ and weight $1$ else.

Note that given $A$ at \emph{Alice} and $B$ at \emph{Bob}, we can inform the nodes $v_{i,1}$
and $v_{i,m}$ of these weights in one round. Now run any algorithm that outputs
a value between $\WD$ and $\alpha(n)\WD\DEF \omega_{\max}\WD/(\sqrt{n}+C\log n)$
(for a suitable constant $C$) within $T$ rounds, and output ``$A$ and $B$ are
disjoint'' if the outcome is at most $\omega_{\max}$ and output ``$A$ and $B$
are not disjoint'' othwerwise.

It remains to show that the outcome of this computation is correct for any
inputs $A$ and $B$ and the statement of the theorem will follow from
\factref{lb-fact} (recall that the number of nodes of $G_n$ is $\Theta(n)$).
Suppose first that $A\cup B=\emptyset$. Then for each node $v_{i,j}$, there is a
path of at most $\sqrt{n}$ edges of weight $1$ connecting it to \emph{Alice} or
\emph{Bob}, and \emph{Alice} and \emph{Bob} are connected to all nodes in the
binary tree and each other via $\BO(\log n)$ hops in the binary tree (whose
edges have weight $1$ as well). Hence the weighted diameter of $G_n$ is
$\sqrt{n}+\BO(\log n)$ in this case and the output is correct (where we assume
that $C$ is sufficiently large to account for the $\BO(\log n)$ term). Now
suppose that $i\in A\cap B$. In this case each path from node $v_{i,1}$ to
\emph{Bob} contains an edge of weight $\omega_{\max}$, since the edges from
\emph{Alice} to $v_{i,1}$ and \emph{Bob} to $v_{i,m}$ as well as those
connecting $v_{i,j}$ to $u_j$ have weight $\omega_{\max}$. Hence, the weighted
distance from $v_{i,1}$ to \emph{Bob} is strictly larger than $\omega_{\max}$
and the output is correct as well. This shows that set disjointness is decided
correctly and therefore the proof is complete.
\end{proof}

\subsection{Hardness of Name-Dependent Distributed Table Construction}
\label{sec:lb}

A lower bound on name-dependent distance approximation follows directly from
\theoremref{thm-lb-diam}.
\begin{corollary}\label{coro-lb2}
For any $\omega_{\max}\geq \sqrt{n}$, there is a function $\alpha(n)\in
\Omega(\omega_{\max}/\sqrt{n})$ such that the following holds. In the family of
weighted graphs of hop-diameter $\HD\in \BO(\log n)$ and edge weights $1$
and $\omega_{\max}$ only, constructing labels of size $\tilde{o}(\sqrt{n})$ and
tables for distance approximation of (expected) stretch $\alpha(n)$ requires
$\tilde\Omega(\sqrt n)$ communication rounds in the \CONGEST\ model.
\end{corollary}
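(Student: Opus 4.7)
The plan is to reduce two-party set disjointness to name-dependent distance approximation, reusing the hard graph $G_n$ constructed in the proof of \theoremref{thm-lb-diam} and invoking \factref{lb-fact}. I would choose $\alpha(n) = c\,\omega_{\max}/\sqrt{n}$ for a sufficiently small constant $c$ so that $2\alpha(n)(\sqrt{n}+C\log n)<\omega_{\max}$, where $C$ is the constant from \theoremref{thm-lb-diam}. The analysis in that proof already shows that, in $G_n$, $\Wd(v_{i,1},\mathrm{Bob})\le\sqrt{n}+\BO(\log n)$ for every $i$ whenever $A\cap B=\emptyset$, whereas $\Wd(v_{i,1},\mathrm{Bob})>\omega_{\max}$ whenever $i\in A\cap B$. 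Hence, comparing any stretch-$\alpha(n)$ estimate of $\Wd(v_{i,1},\mathrm{Bob})$ against the threshold $\omega_{\max}$ distinguishes the two cases.

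Suppose we had an algorithm producing labels of size $\ell(n)\in\tilde{o}(\sqrt{n})$ bits and distance tables of stretch $\alpha(n)$ within $T$ rounds. I would simulate it on $G_n$ after spending one round to let Alice and Bob set the weights of their star edges according to their inputs $A,B$, exactly as in \theoremref{thm-lb-diam}. Once the scheme finishes, Bob broadcasts its label $\lambda(\mathrm{Bob})$ to the entire graph; since $\HD\in\BO(\log n)$ and Bob's label is $\tilde{o}(\sqrt{n}/\log n)$ messages of $\BO(\log n)$ bits, the broadcast lemma of Section~2.1 completes this step in $\tilde{o}(\sqrt{n})$ rounds. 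Each node $v_{i,1}$ then locally computes $\mathrm{dist}_{v_{i,1}}(\mathrm{Bob})$ and transmits a single bit to Alice indicating whether this estimate exceeds $\omega_{\max}$; since Alice is directly adjacent to all $\sqrt{n}$ such nodes, this takes one additional round. Alice outputs ``not disjoint'' iff at least one received bit is set.

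If $T\in\tilde{o}(\sqrt{n})$, the whole reduction runs in $\tilde{o}(\sqrt{n})$ rounds, contradicting the $\tilde\Omega(\sqrt{n})$ lower bound on randomized disjointness in \factref{lb-fact}. For schemes whose stretch is bounded only in expectation, I would apply Markov's inequality (absorbing one more constant into $\alpha(n)$) to turn the expectation bound into a constant success probability for every relevant pair $(v_{i,1},\mathrm{Bob})$, and if necessary boost the overall success probability by $\BO(1)$ independent repetitions of the reduction; each repetition still costs $\tilde{o}(\sqrt{n})$ rounds, so the contradiction survives.

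The step I expect to be the main obstacle is avoiding the naive ``let every node broadcast its label and compute the diameter locally'' reduction, which would disseminate $\Theta(n)$ labels of $\tilde{o}(\sqrt{n})$ bits and cost $\tilde{o}(n^{3/2})$ rounds, providing no useful lower bound. What makes the reduction work is the asymmetric structure of $G_n$: all of the decision logic can be funneled into Alice via her $\sqrt{n}$ direct row-neighbors, so the only piece of global routing data that has to travel across the network is Bob's single label, which just barely fits inside the $\tilde{o}(\sqrt{n})$ budget needed to preserve the disjointness lower bound.
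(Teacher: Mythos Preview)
Your proposal is correct and follows essentially the same approach as the paper: reuse the graph $G_n$ from \theoremref{thm-lb-diam}, run the assumed labeling/table algorithm, disseminate a single label of size $\tilde{o}(\sqrt{n})$ to the nodes $v_{i,1}$, query their distance estimates, and aggregate the outcome to decide disjointness. The paper happens to transmit Alice's label rather than Bob's and sends it directly along the star edges rather than via a global broadcast, but these choices are symmetric and cost $\tilde{o}(\sqrt{n})$ rounds either way. Your explicit treatment of the expected-stretch case via Markov's inequality is something the paper leaves implicit; note that it goes through cleanly because ``stretch'' in the paper is the worst-case over pairs, so Markov applied once to this scalar random variable yields constant success probability without needing a union bound over the $\sqrt{n}$ queried pairs.
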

\begin{proof}
We use the same construction as in the previous proof, however, now we need to
solve the disjointness problem using the tables and lables. Using the same
setup, we run the assumed table and label construction algorithm. Afterwards, we
transmit, e.g., the label of \emph{Alice} to all nodes $v_{i,1}$. This takes
$\tilde{o}(\sqrt{n})$ rounds due to the size restriction of the labels. Then we
query the estimated distance to \emph{Alice} at the nodes $v_{i,1}$ and collect
the results at \emph{Alice}. Analogously to the proof of \theoremref{thm-lb-diam},
the maximum of these values is large if and only if the input satisfies that
$A\cap B=\emptyset$. Since transmitting the label costs only
$\tilde{o}(\sqrt{n})$ additional rounds, the same asymptotic lower bound as in
\theoremref{thm-lb-diam} follows.
\end{proof}

A variation of the theme shows that stateless routing requires
$\tilde{\Omega}(\sqrt{n})$ time.

\begin{corollary}\label{coro-lb}
For any $\omega_{\max}\geq \sqrt{n}$, there is a function $\alpha(n)\in
\Omega(\sqrt{\omega_{\max}}/n)$ such that the following holds. In the family of
weighted graphs of hop-diameter $\HD\in \BO(\log n)$ and edge weights $1$
and $\omega_{\max}$ only, constructing stateless routing
tables of (expected) stretch $\alpha(n)$ with labels of size
$\tilde{o}(\sqrt{n})$ requires $\tilde\Omega(\sqrt n)$ communication rounds in
the \CONGEST\ model.
\end{corollary}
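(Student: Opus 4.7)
The plan is to adapt the reduction from \theoremref{thm-lb-diam} and \corollaryref{coro-lb2} to the stateless-routing setting. I reuse the graph $G_n$ with $m=\sqrt{n}$ and the same embedding: Alice holds $A$, Bob holds $B$, the edge from Alice to $v_{i,1}$ has weight $\omega_{\max}$ iff $i\in A$, and the edge from Bob to $v_{i,m}$ has weight $\omega_{\max}$ iff $i\in B$. As already established there, $\Wd(v_{i,1},\text{Bob})=\BO(\sqrt{n}+\log n)$ when $i\notin A\cap B$ and $\Wd(v_{i,1},\text{Bob})\geq\omega_{\max}$ when $i\in A\cap B$. The target is to turn any $T$-round algorithm producing stateless tables of the stated stretch and labels of size $\tilde{o}(\sqrt{n})$ into a set-disjointness protocol of complexity $T+\tilde{o}(\sqrt{n})$, so that \factref{lb-fact} forces $T\in\tilde\Omega(\sqrt{n})$.

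The key observation specific to stateless routing is that routes to Bob form a single tree whose shape is heavily constrained by the stretch. For $i\notin A$, any first step from $v_{i,1}$ other than to Alice drives the routed weight past the stretch budget (since $\Wd(v_{i,1},\text{Bob})=\BO(\log n)$ via Alice, while $v_{i,2}$ leads to weight $\geq m$ and $u_1$ to weight $\geq \omega_{\max}$), forcing the first hop to be Alice. Symmetrically, for $i\in A\setminus B$ the stretch bound forces the first hop from $v_{i,1}$ to be $v_{i,2}$. Genuine routing freedom remains only for $i\in A\cap B$, but in that case the chosen route must still traverse some weight-$\omega_{\max}$ edge, because the true distance to Bob is already $\geq \omega_{\max}$. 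Hence the stateless route from $v_{i,1}$ to Bob uses a weight-$\omega_{\max}$ edge iff $i\in A\cap B$.

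After running the claimed $T$-round algorithm, Bob broadcasts $\lambda(\text{Bob})$ to all $v_{i,1}$'s in $\tilde{o}(\sqrt{n})$ rounds (afforded by the $\tilde{o}(\sqrt{n})$-bit label-size bound). Each $v_{i,1}$ then reports its next-hop for $\lambda(\text{Bob})$ to its neighbor Alice; together with the forcings above, Alice classifies each $i$ as $i\notin A$, $i\in A\setminus B$, or belonging to a residual set that contains $A\cap B$. A symmetric round of queries at Bob cross-references this view, and the remaining cases of $i\in A\cap B$ (where the algorithm may try to hide membership by routing through $u_1$ and $u_m$) are detected by additionally collecting the $\sqrt{n}$ one-bit ``heavy-edge-seen'' indicators through the $\BO(\log n)$-hop binary-tree backbone of $G_n$, rather than along the possibly-deeper routing tree itself. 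The main obstacle is ruling out all adversarial stateless trees and implementing this final aggregation within $\tilde{\BO}(\sqrt{n})$ rounds without incurring CONGEST congestion; this requires a careful case analysis tuned to the precise stretch regime $\alpha(n)$ in the statement, which is where the specific form of the bound enters.
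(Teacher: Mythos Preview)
Your reduction is workable in principle but left incomplete, and you are missing the one-line modification that removes all of the complications you flag. By keeping the edge weights from \theoremref{thm-lb-diam} (binary-tree edges of weight $1$), you force yourself to examine the routes from all $\sqrt{n}$ nodes $v_{i,1}$ and then to decide, for each $i\in A$, whether the stateless route to Bob traverses a heavy edge. You correctly identify this as the crux and leave it open: to compute these ``heavy-edge-seen'' bits you would have to convergecast along the routing tree rooted at Bob (not along the binary tree, since the bits are properties of the routes, not of the graph), argue that simple paths in $G_n$ have length $\BO(\sqrt{n})$ so that the tree depth is controlled, and pipeline $\sqrt{n}$ tokens through it. None of this is in the proposal, and your suggestion to collect indicators ``through the binary-tree backbone'' does not work as stated, because $v_{i,1}$ cannot know its bit without first tracing its route.

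The paper avoids all of this by changing the weight assignment: every edge incident to the binary tree (including the Alice--$u_1$ and Bob--$u_m$ edges) gets weight $\omega_{\max}$, and the Alice--$v_{i,1}$ (resp.\ Bob--$v_{i,m}$) edge gets weight $1$ exactly when $i$ is in Alice's (resp.\ Bob's) set. Then the only light Alice--Bob path is through a row witnessing intersection, so $\Wd(\text{Alice},\text{Bob})=\sqrt{n}+1$ if the sets intersect and $>\omega_{\max}$ otherwise. A \emph{single} stateless routing query from Alice to Bob now suffices: Bob sends his label to Alice via the tree ($\tilde{o}(\sqrt{n})$ rounds), Alice reports the index $i$ of her first hop (if that edge is light) back to Bob in $\BO(\log n)$ rounds, and Bob checks $i$ against his own set. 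No multi-source tracing, no convergecast, no case analysis of adversarial trees.
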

\begin{proof}[Proof sketch:] We consider the same graph as in the proof of
\theoremref{thm-lb-diam} and input sets $A$ and $B$ at \emph{Alice} and
\emph{Bob}, respectively, but we use a different assignment of edge weights.
\begin{compactitem}
  \item All edges incident to a node in the binary tree have weight
  $\omega_{\max}$.
  \item For each $i\in \{1,\ldots,m\}$, the edge from \emph{Alice} to $v_{i,1}$
  has weight $\omega_{\max}$ if $i\in A$ and weight $1$ else. Likewise, the edge
  from \emph{Bob} to $v_{i,m}$ has weight $\omega_{\max}$ if $i\in B$ and
  otherwise weight $1$.
  \item The remaining edges (on the $m$ paths from $v_{i,1}$ to $v_{i,m}$) have
  weight $1$.
\end{compactitem}
Observe that the distance from \emph{Alice} to \emph{Bob} is $\sqrt{n}+1$ if
$A\cap B\neq \emptyset$ and strictly larger than $\omega_{\max}$ if $A\cap
B=\emptyset$. Once static routing tables for routing on paths of stretch at most
$\omega_{\max}/(\sqrt{n}+1)$ are set up, e.g.\ \emph{Bob} can decide whether $A$
and $B$ are disjoint as follows. \emph{Bob} sends its label to \emph{Alice} via
the binary tree (which takes time $\tilde{o}(\sqrt{n})$ if the label has size
$\tilde{o}(\sqrt{n})$). \emph{Alice} responds with ``$i$'' if the first
routing hop from \emph{Alice} to \emph{Bob} is node $v_{i,1}$ and $i\in
A$ (i.e., the weight of the edge is $1$), and ``$A\cap B=\emptyset$'' else (this
takes $\BO(\log n)$ rounds). \emph{Bob} then outputs ``$A\cap B\neq \emptyset$''
if \emph{Alice} responded with ``$i$'' and $i\in B$ (i.e., the weight of the
routing path is $\sqrt{n}+1$ since the edge from \emph{Bob} to $v_{i,m}$ has
weight $1$) and ``$A\cap B=\emptyset$'' otherwise.

If the output is ``$A\cap B\neq \emptyset$'', it is correct because $i\in A\cap
B$. On the other hand, if it is ``$A\cap B=\emptyset$'', the route from
\emph{Alice} to \emph{Bob} must contain an edge of weight $\omega_{\max}$,
implying by the stretch guarantee that there is no path of weight $\sqrt{n}+1$
from \emph{Alice} to \emph{Bob}. This in turn entails that $A\cap B=\emptyset$
due to the assignment of weights and we conclude that the output is correct also
in this case. Hence the statement of the corollary follows from
\factref{lb-fact}.
\end{proof}
We remark that \theoremref{thm:lower_dist_independent},
\theoremref{thm-lb-diam}, \corollaryref{coro-lb2}, and \corollaryref{coro-lb}
have in common that if edge weight $0$ is permitted, no stretch bound faster
than the stated lower bounds even if the only other feasible edge weight is $1$.

Finally, we note that the hop-diameter is also an obvious lower
bound on the time required to approximate the weighted diameter,
construct stateless routing tables, etc. since if the running time is smaller
than $\HD$, distant parts of the graph (in the sense of hop-distance) cannot
influence the local output.

\section{Routing Algorithm}
\label{sec:routing}

\textbf{Overview.} To construct routing tables, one needs to learn about paths.
Na\"\i ve distributed algorithms explore paths sequentially, adding one edge at
a time, leading to potentially linear complexity, since shortest weighted paths
may be very long in terms of the number of edges. Our basic idea is to break
hop-wise long paths into small pieces by means of random sampling. Specifically,
motivated by the $\tilde\Omega(\sqrt n)$ lower bound of
\theoremref{thm-lb-diam}, we select a random subset of
$\tilde{\Theta}{\sqrt{n}}$ nodes we call the routing \emph{skeleton}. It follows
that, w.h.p., (1)~any simple path of hop-length $\tilde\Omega(\sqrt{n})$
contains a skeleton node, and (2)~any node has a skeleton node among its closest
$\tilde\BO(\sqrt{n})$ nodes. The route that our scheme will select from a given
source to a given destination depends on their distance: If the destination is
one of the $\tilde\BO(\sqrt{n})$ nodes closest to the destination, routing will
be done using a ``short range scheme'' (see below); otherwise, the short range
scheme is used to route from the source to the nearest skeleton node, from
which, using another scheme we call ``long distance routing,'' we route to the
skeleton node closest to the destination node, and finally, another application
of the short range scheme brings us to the destination. Intuitively, we can
split the problem into the following tasks:
\begin{compactenum}
\item Short range scheme: how to route efficiently from each node to its
$\tilde{\Theta}(\sqrt{n})$ closest nodes including at least one
skeleton node, and, conversely, from a
skeleton node to all its ``subordinates'' (note the asymmetry in this case).
\item Skeleton routing scheme: how to route between skeleton nodes
efficiently.
\end{compactenum}

The short range scheme is described in \sectionref{sec:short}. We note that
since a straightforward application of multiple-source shortest paths may result
in linear time,  we develop a hierarchical structure to solve the short-range
routing. This hierarchy bears resemblance to the Thorup-Zwick distance oracle
algorithm \cite{TZ-05}. Our long distance routing is described in 
\sectionref{sec:skeleton}. The main challenge there is to build the skeleton
graph; since it might be too dense, we sparsify it ``on the fly'' while
constructing it. This construction is implemented by adapting the spanner
algorithm of Baswana and Sen \cite{baswana07} to our setting.

We start by describing the variant of the Bellman-Ford algorithm we use as a
basic building block in \sectionref{sec:bsp}.

\subsection{Bounded Shortest Paths}
\label{sec:bsp}

\begin{algorithm}[tb!]
\small
\SetKwInOut{Input}{input}
\SetKwInOut{Output}{computes}
\Input{
  $h$ \REM{range parameter: hop bound on path lengths, globally known}\\
  $\Delta$ \REM{overlap parameter: number of closest sources each node needs to
  detect, globally known}\\
  $\Src:V\to \mathrm{SID}\cup \{\bot\}$ \REM{each $v$
  knows $\Src(v)$; $\Src(v)=\bot$ means $v$ is not a
  source}\\
}
\Output{For all $t\in \{1,\ldots,h\}$: $t$-weighted distance and the next hop
from $v$ to each of the closest $\Delta$ source node sets using paths of
  at most $t$ edges (or all such sets, if there are at most $\Delta$ within $t$
  hops).
}
\lIf {$\Src(v)\neq \bot$\nllabel{bsp:init}}
  {$L_v(0):=\Set{(0,\Src(v),v)}$}
\lElse
  {$L_v(0):=\emptyset$ \nllabel{bsp:init2} \REM{initialization}}\\
\For {$t:=1$ \KwTo $h$\nllabel{bsp:iter}}{
  send $L_v(t-1)$ to all neighbors; $L_v(t):=\emptyset$\nllabel{bsp:repeat}\\
  \ForEach {neighbor $u$}{
    receive $L_u(t-1)$\\
    \ForEach(\REM{Bellman-Ford relaxation}) {$(d_u,s_u,\Next_u)\in L_u(t-1)$}{
      \If {$\exists(d_v,s_v,\Next_v)\in L_v(t)$ s.t.\ $s_v=s_u$}{
        \lIf {$(d_u\!+\!W(u,v),u)<(d_v,\Next_v)$} {
          \REM{comparisons are lexicographical}\\
          $L_v(t):= L_v(t)\setminus\Set{(d_v,s_v,\Next_u)}
          \cup\Set{(d_u\!+\!W(u,v),s_u,u)}$
        }
      }
      \lElse {$L_v(t):= L_v(t)\cup\Set{(d_u\!+\!W(u,v),s_u,u)}$}
    }
  }
  truncate $L_v(t)$ to smallest $\Delta$ entries 
  \REM{order is lexicographical}\nllabel{bsp:truncate}
}
\Return $(L_v(1),\ldots,L_v(t))$
\caption{$\BSP(h,\Delta,S)$: Bounded shortest paths, computed at node
$v\in V$.}\label{alg:bsp}
\end{algorithm}

We now describe a basic subroutine we use. Algorithm
$\BSP$, whose pseudo code is given in \algref{alg:bsp}, is
essentially a standard multiple-source distributed Bellman-Ford
algorithm, with two restrictions: first, the algorithm is run for
only $h$ rounds (cf.\ \lineref{bsp:iter}); and second, nodes never
report more than $\Delta$ sources (cf.\
\lineref{bsp:truncate}).

We consider a slightly extended variant of the algorithm: In the original
algorithm, each node is a ``source'' and the goal is to compute the distances of
all nodes to it. Here we assume that (i) not all nodes are sources, and (ii)
sets of nodes may act as a single source, as if there were 0-weight edges
connecting them. Both extensions are modeled by the $\textrm{source}$ function,
that maps a node to $\bot$ if it is not a source, or multiple nodes to the same
source ID if they are in the same source set. We use $S$ to denote the
set of \emph{sources}, i.e.,  $S=\Set{\Src(v)\mid v\in
  V}\setminus\Set\bot$, and for each $s\in S$,
the \emph{source nodes} of $s$ is $SN(s):=\Set{v\mid \Src(v)=s}$. Note that the
source function uniquely determines the source sets and vice versa. We assume
that a source ID can be encoded using $\BO(\log n)$ bits.

We analyze the algorithm leveraging the correctness of the basic Bellman-Ford
algorithm. To this end, let us define \algref{alg:bsp}* by omitting
\lineref{bsp:truncate} from \algref{alg:bsp} and fixing $h=n-1$. Observing that
\algref{alg:bsp}* is exactly the distributed Bellman-Ford algorithm, we may
conclude the following standard property.

\begin{lemma}\label{lem:bf}
Fix an execution of \algref{alg:bsp}*. Denote by $L_v^*(t)$ for some $0\le t\le
n-1$ and $v\in V$ the contents of the $L_v$ variable at node $v$ after $t$
iterations of \algref{alg:bsp}*. Then for each $(d,s,\Next)$ entry in $L_v^*(t)$
we have that $s$ is a source and $\Wd_t(v,s):=\min_{u\in
SN(s)}\{\Wd_t(v,u)\}=d$, namely $d$ is the length of the shortest path that
consists of at most $t$ edges from $v$ to any node $u$ in the source set of $s$.
Moreover, $\Next$ is the next node on that shortest path from $v$ to $u$.
\end{lemma}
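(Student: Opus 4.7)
The plan is to prove the statement by induction on $t$, leveraging the fact that $\BSP^*$ is essentially multi-source distributed Bellman-Ford where each node keeps exactly one entry per source ID, namely the lexicographically smallest $(d,\Next)$ pair seen so far. Under this reading, the algorithm computes, for each source ID $s$ and each node $v$, the quantity $\min_{u\in SN(s)}\Wd_t(v,u)$ together with a next-hop witness. The base case $t=0$ is immediate from the initialization on \lineref{bsp:init}--\lineref{bsp:init2}: a source node $v$ has $L_v^*(0)=\{(0,\Src(v),v)\}$, matching $\Wd_0(v,v)=0$, and a non-source has $L_v^*(0)=\emptyset$, matching $\Wd_0(v,u)=\infty$ for $u\neq v$.

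For the inductive step, assume the claim at $t-1$. The \emph{soundness} direction (every entry is correct) follows by inspecting the relaxation: any $(d,s,\Next)\in L_v^*(t)$ was produced either by a neighbor $u$'s entry $(d_u,s,\cdot)\in L_u^*(t-1)$, giving $\Next=u$ and $d=d_u+W(u,v)$, or by carrying forward $v$'s own previous entry. The induction hypothesis gives $d_u=\min_{w\in SN(s)}\Wd_{t-1}(u,w)$, and prepending the edge $(v,u)$ to a witnessing path shows $d\geq \min_{w\in SN(s)}\Wd_t(v,w)$. For \emph{completeness}, fix a source $s$ with $D:=\min_{w\in SN(s)}\Wd_t(v,w)<\infty$ and a realizing path $P=\langle v=v_0,\ldots,v_k=w^*\rangle$ of length $k\leq t$. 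If $k=0$ then $v\in SN(s)$ and the source entry survives into $L_v^*(t)$. Otherwise, the suffix from $v_1$ onwards is a path of $\leq t-1$ edges witnessing $\Wd_{t-1}(v_1,w^*)\leq D-W(v,v_1)$; induction places a corresponding entry into $L_{v_1}^*(t-1)$, so the relaxation at $v$ produces an entry of distance at most $D$ with source ID $s$. Combined with soundness, $d=D$, and the $\Next$ field is exactly $v_1$. Iterating this argument along the $\Next$ pointers, which inductively trace valid shortest paths of one fewer hop at each step, confirms that $\Next$ is the next hop on a length-$\leq t$ shortest path from $v$ to some $w\in SN(s)$.

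The main delicate point is two-fold. First, the pseudocode sets $L_v(t):=\emptyset$ at the start of each iteration and only populates it from neighbors' $L_u(t-1)$, so one must either treat each node as a virtual neighbor of itself via a weight-$0$ self-loop, or interpret $L_v^*(t-1)$ as an implicit additional input to the relaxation; either reading keeps the source entry $(0,\Src(v),v)$ alive at source nodes and is equivalent to standard Bellman-Ford. Second, since several nodes in $SN(s)$ share one ID and the algorithm retains only the minimum, one might worry that a clever path leaving one member of $SN(s)$, crossing another, and continuing could beat the per-node minimum; but any path reaching any $w\in SN(s)$ already realizes $\min_{w\in SN(s)}\Wd_t(v,w)$, so extending it cannot help, and the single-entry-per-source-ID invariant loses no information relative to the lemma's stated quantity. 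These two observations close the induction.
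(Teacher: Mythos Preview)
Your proof is correct and actually more detailed than the paper's own treatment: the paper does not give a proof of this lemma at all, instead simply remarking that \algref{alg:bsp}$^*$ \emph{is} the distributed Bellman--Ford algorithm and invoking its standard correctness property. Your inductive argument is the standard way to unpack that claim, and you even flag the subtlety (that the pseudocode resets $L_v(t):=\emptyset$ each round, so source entries only persist under the natural self-loop / implicit-carry-forward reading) that the paper glosses over entirely.
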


\lemmaref{lem:bf} says that running only $h$ iterations is sufficient if we
are interested in paths of $h$ or less edges only. We now consider the effect of
repeatedly truncating the distance vector.

\begin{lemma}\label{lem:bsp_trunc}
Consider executions of \algref{alg:bsp} and of \algref{alg:bsp}* on the same
graph and with the same $\textrm{source}$ function. Let $L_v(t)$ and
$L_v^*(t)$ denote the contents of the $L_v$ variable at node $v$ after $t$
iterations under \algref{alg:bsp} and under \algref{alg:bsp}*, respectively.
Then $L_v(t)$ contains exactly the smallest $\Delta$ entries of $L_v^*(t)$ with
respect to lexicographical ordering (or the entire list, if $|L_v^*(t)|\leq
\Delta$).
\end{lemma}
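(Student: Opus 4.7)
The plan is a straightforward induction on $t$, with the only real subtlety being that the lexicographical order on triples must interact cleanly with Bellman--Ford relaxation.

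\textbf{Base case ($t=0$).} Both algorithms execute the same initialization in Lines~\ref{bsp:init}--\ref{bsp:init2}, so $L_v(0)=L_v^*(0)$, and since this set has at most one element, it trivially equals its own $\Delta$ smallest entries.

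\textbf{Inductive step.} Assume that for every node $u$, the set $L_u(t-1)$ consists exactly of the $\Delta$ lexicographically smallest entries of $L_u^*(t-1)$. I want to show that the relaxation-and-truncate step of \algref{alg:bsp} applied to these truncated lists reproduces the $\Delta$ smallest entries of $L_v^*(t)$.

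The key claim is: for every triple $(d,s,\Next)$ that lies among the $\Delta$ smallest entries of $L_v^*(t)$, setting $u:=\Next$, the ``parent'' entry $(d-W(u,v),s,\Next_u)\in L_u^*(t-1)$ used to produce it lies among the $\Delta$ smallest entries of $L_u^*(t-1)$. I would prove this by contradiction: suppose there are $\Delta$ triples $(d_u^{(i)},s^{(i)},\cdot)$ in $L_u^*(t-1)$ that are lexicographically strictly smaller than $(d-W(u,v),s,\cdot)$. Since \algref{alg:bsp}* stores at most one entry per source, the $s^{(i)}$ are pairwise distinct. Relaxing each through the edge $(u,v)$ produces a candidate $(d_u^{(i)}+W(u,v),s^{(i)},u)$; by the update rule of \algref{alg:bsp}*, the entry for source $s^{(i)}$ that actually appears in $L_v^*(t)$ has distance $\tilde d^{(i)}\leq d_u^{(i)}+W(u,v)$. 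A small case distinction (either $d_u^{(i)}<d-W(u,v)$, in which case $\tilde d^{(i)}<d$; or $d_u^{(i)}=d-W(u,v)$ and $s^{(i)}<s$, in which case $\tilde d^{(i)}\leq d$ and the source tiebreaker gives the strict inequality) shows that $(\tilde d^{(i)},s^{(i)},\cdot)$ is lexicographically strictly smaller than $(d,s,\Next)$. This yields $\Delta$ distinct entries of $L_v^*(t)$ strictly smaller than $(d,s,\Next)$, contradicting the assumption that $(d,s,\Next)$ belongs to the $\Delta$ smallest.

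\textbf{Conclusion.} The key claim implies that every entry needed to build the $\Delta$ smallest entries of $L_v^*(t)$ is already present in the truncated list $L_u(t-1)$ that $u$ transmits in \algref{alg:bsp}. The Bellman--Ford relaxation in Lines within the \textbf{for} loop depends only on per-source comparisons, and since \algref{alg:bsp} applies exactly the same comparison rule as \algref{alg:bsp}*, any entry produced from a neighbor-contribution present in both algorithms is processed identically. Hence, before the truncation in \lineref{bsp:truncate}, the algorithm has at its disposal a superset of the $\Delta$ smallest entries of $L_v^*(t)$, and the truncation to the $\Delta$ lexicographically smallest entries yields exactly those. This completes the induction.

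\textbf{Main obstacle.} The one point that requires care is handling the lexicographical order under relaxation: adding $W(u,v)>0$ preserves strict order on the distance component, but ties on the distance component must be broken consistently by source ID. The case distinction above addresses exactly this; the fact that each list contains at most one entry per source ensures the $\Delta$ smaller parent entries produce $\Delta$ \emph{distinct} (hence genuinely $\Delta$) smaller descendant entries.
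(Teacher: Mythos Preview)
Your proof is correct and follows the same inductive skeleton as the paper's proof. The paper dispatches the inductive step in one sentence---``the smallest $k$ entries of a union of sets are contained in the union of the smallest $k$ entries from each set''---whereas you unpack this into an explicit contradiction argument tracking the parent entry at the neighbor $u=\Next$ and verifying that the lexicographic order (including the source tiebreaker) survives relaxation; both arguments establish the same key claim, and yours simply makes the paper's terse observation rigorous.
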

\begin{proof}
By induction on $t$. The base case is $t=0$, and the lemma clearly holds upon
initialization (\lineref{bsp:init}). For the induction step, assume that the
lemma holds for $t-1\in \{0,\ldots,h-1\}$ at all nodes, and consider iteration
$t$ at some node $v$. By the induction hypothesis, we have that the message
received by $v$ from each neighbor $u$ at time $t$ under \algref{alg:bsp} is
exactly the top $\Delta$ entries sent by node $u$ at time $t$ under
\algref{alg:bsp}*, because these entries are computed at the end of iteration
$t-1$. The lemma therefore follows from the fact that for any $k\ge0$, the
smallest $k$ entries of a union of sets are contained in the union of the
smallest $k$ entries from each set.
\end{proof}

Note that the information provided by $L_v(h)$ is insufficient for routing:
since the $\Delta$ closest source node sets may differ between neighbors, it
may be the case that for some source identifier $s$ and two neighbors $v$ and
$u$ we have that $u$ is the next node from $v$ to $s$ in $L_v(h)$, but there is
no entry for source $s$ in $L_u(h)$! This occurs, for example, if in  iteration
$h$, $u$ learns about a source set closer than $SN(s)$, pushing $s$ out of
$L_u(h)$. However, since the algorithm returns $(L_v(1),\ldots,L_v(h))$ instead of
simply $L_v(h)$, we can still reconstruct the detected paths.
\begin{lemma}\label{lemma:bsp_route_stateful}
For any node $v$ and any entry $(d,s,\Next)\in L_v(h)$, a routing path of at
most $h$ hops from $v$ to a node in $s$ of weight $d$ can be constructed using
the $L$ tables at the nodes and a hop counter.
\end{lemma}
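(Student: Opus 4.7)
The plan is to prove this by following the path hop-by-hop, using the hop counter to index into the correct snapshot of the $L$ tables. The key observation is that when a packet arrives at the $t^{\mathrm{th}}$ node along the path (having traversed $t$ edges so far), the next hop should be determined by the table $L_{v_t}(h-t)$, not $L_{v_t}(h)$. This is precisely the role of the hop counter: it selects which of the $h$ snapshots produced by \algref{alg:bsp} to consult at the current node.

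I would set $v_0 := v$, and maintain the invariant that at the $t^{\mathrm{th}}$ step the packet is at a node $v_t$ for which $L_{v_t}(h-t)$ contains an entry $(d_t, s, v_{t+1})$ with $d_t = \Wd_{h-t}(v_t, s)$ and $\sum_{i=0}^{t-1} W(v_i,v_{i+1}) + d_t = d$. Given this invariant, the packet moves from $v_t$ to $v_{t+1}$ and increments the counter, and the argument proceeds by induction on $t$. The base case $t = 0$ follows by combining \lemmaref{lem:bf} and \lemmaref{lem:bsp_trunc}: $(d,s,u) \in L_v(h)$ implies $(d,s,u) \in L_v^*(h)$, whence $d = \Wd_h(v,s)$ and $u$ is the first hop of a corresponding $h$-hop path. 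The inductive step rests on how the entry for source $s$ in $L_{v_{t-1}}(h-t+1)$ was computed during iteration $h-t+1$ of \algref{alg:bsp}: its next-hop field $v_t$ can only have been assigned if $v_{t-1}$ received a message carrying an entry $(d_{t-1} - W(v_{t-1},v_t), s, \cdot)$ from $v_t$, and that message is exactly $L_{v_t}(h-t)$ (as sent in \lineref{bsp:repeat}). Hence $L_{v_t}(h-t)$ contains the required entry for $s$, preserving the invariant, and the weight telescopes correctly.

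Termination follows because $\Wd_{h-t}(v_t, s)$ strictly decreases with $t$ (each step consumes at least one hop), and once we reach $t$ such that $v_t \in SN(s)$, the entry in $L_{v_t}(h-t)$ is $(0, s, v_t)$ inserted in \lineref{bsp:init}, signaling arrival; by the invariant this happens within at most $h$ steps and with total accumulated weight exactly $d$.

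The main obstacle is the subtlety about truncation in \lineref{bsp:truncate}: a priori, the entry for source $s$ could be absent from $L_{v_t}(h-t)$ if $v_t$ learned of $\Delta$ closer source sets by iteration $h-t$. The resolution is the observation above, namely that the very existence of the pointer to $v_t$ at the previous node forces the entry to have been present in the message $v_t$ transmitted, i.e., in $L_{v_t}(h-t)$ itself. Spelling out this forward-chaining from the Bellman-Ford relaxation is the one place where care is required; everything else is straightforward bookkeeping.
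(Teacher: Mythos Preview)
Your proof is correct and follows the same overall plan as the paper's: route by consulting $L_{v_t}(h-t)$ at step $t$, and prove by induction along the path that an entry for $s$ is present there with the right distance. The one notable difference is in how the inductive step is justified. The paper argues by contradiction via the truncation rule: if the entry for $s$ were absent from $L_w(h-1)$ (where $w$ is the next hop on a shortest path), then $L_w(h-1)$ would hold $\Delta$ strictly smaller entries, which $w$ would forward in iteration $h$, displacing $s$ from $L_v(h)$ as well. You instead argue directly from the relaxation code: the assignment $\Next = v_t$ in the entry of $L_{v_{t-1}}(h-t+1)$ can only have been made upon receiving an entry for $s$ in the message $L_{v_t}(h-t)$ from $v_t$, so that entry is present. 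Your forward-chaining argument is a touch more elementary, reading the invariant straight off the pseudocode without invoking the ranking behavior; the paper's contrapositive, on the other hand, makes explicit why truncation cannot sever a routing path, which is precisely the conceptual worry the lemma is meant to dispel.
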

\begin{proof}
The routing decision for hop $t$ at the current node $v_{t-1}$ (where $v_0:=v$)
is made by looking up the entry $(d_{t-1},s,\Next)\in L_v(h-(t-1))$. We show by
induction on the length $\ell\leq h$ of a shortest path from $v$ to its closest
node $u\in SN(s)$ that such an entry always exists. Note that by
Lemmas~\ref{lem:bf} and~\ref{lem:bsp_trunc}, such an entry satisfies that
$d_{t-1}=\Wd_{h-(t-1)}(v_{t-1},u)$ and thus the constructed path has weight
$\Wd_h(v,u)=d$. Trivially, the claim is true for $\ell=0$ by initialization of
the lists $L_u(0)$, $u\in V$.

Now suppose the claim holds for $\ell\in \N_0$ and consider node $v$ with entry
$(\Wd(v,u),s,\Next)\in L_v(h)$. Suppose $w$ is the neighbor of $v$ which is next
on the shortest $\ell$-hop path from $v$ to $u$. Hence it is the endpoint of a
of a shortest $(\ell-1)$-hop path from $w$ to $u$, and there is no shorter path
from $w$ to any node in $SN(s)$ of at most $\ell-1$ hops (otherwise there would
be a shorter path of at most $\ell$ hops from $v$ to a node in $SN(s)$).
Therefore, by \lemmaref{lem:bf}, $(\Wd_{h-1}(w,u),s,\Next_w)\in L_u^*(h-1)$ for
some $\Next_w$. Assuming for contradiction that
$(\Wd_{h-1}(w,u),s,\Next_w)\notin L_u(h-1)$ implied, by
\lemmaref{lem:bsp_trunc}, that there are $\Delta$ entries $(d,s',\Next)\in
L_u(h-1)$ that are lexicographically smaller than $(\Wd_{h-1}(w,u),s,\Next_w)$.
Node $w$ would send these smaller entries in iteration $h$ of \algref{alg:bsp},
yielding the contradiction that $(\Wd(v,u),s,\Next_w)\notin L_v(h)$. It follows
that indeed $(\Wd_{h-1}(w,u),s,\Next_w)\in L_u(h-1)$ and the proof concludes.
\end{proof}

We summarize the properties of \algref{alg:bsp} with the following
theorem. 
\begin{theorem}\label{thm:bsp}
\algref{alg:bsp} computes the $h$-weighted distance and next hop of a shortest
path of at most $h$ edges from each node to its closest $\Delta$ source sets.
Each node on the corresponding shortest path can determine the next hop on the
path out of the number of preceding hops and the output of the algorithm.
The time complexity of \algref{alg:bsp} in the \CONGEST\ model is $\BO(\Delta
h)$ rounds.
\end{theorem}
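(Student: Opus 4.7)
The plan is to assemble the theorem from the three lemmas already proved in this subsection, together with a straightforward round-count.

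First I would establish the distance-correctness claim: by \lemmaref{lem:bsp_trunc}, for every $t\le h$, the list $L_v(t)$ produced by \algref{alg:bsp} consists of the lexicographically smallest $\Delta$ entries of $L_v^*(t)$ (the list produced by the untruncated Bellman-Ford variant \algref{alg:bsp}*). By \lemmaref{lem:bf}, every entry $(d,s,\Next)\in L_v^*(h)$ records, for the source set $s$, the value $d=\Wd_h(v,SN(s))$ together with the correct next hop on the realizing path. Intersecting these two statements gives precisely that $L_v(h)$ holds the $h$-weighted distance and next-hop information for the $\Delta$ source sets closest to $v$ among those reachable within $h$ hops (or all of them, if fewer than $\Delta$ are reachable).

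Next I would handle the reconstruction claim. The subtle point, as already noted in the discussion preceding the theorem, is that the next-hop entry stored at $v$ for a source $s$ may not appear in the neighbor's list $L_u(h)$ because $u$ may have pushed $s$ out of its top-$\Delta$ by the time iteration $h$ completes. However, \lemmaref{lemma:bsp_route_stateful} resolves this exact issue: a node arriving at $v_{t-1}$ on a path with $t-1$ previous hops only needs to consult $L_{v_{t-1}}(h-(t-1))$, i.e., the list after $h-(t-1)$ iterations, which is guaranteed to still contain the relevant entry for $s$. Thus the sequence of outputs $(L_v(1),\ldots,L_v(h))$ together with a hop counter suffices to recover the next hop at every point along the path.

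For the round complexity, I observe that the main loop runs for exactly $h$ iterations, and the only inter-node communication in iteration $t$ is the transmission of $L_v(t-1)$ to each neighbor. Since $L_v(t-1)$ is truncated to at most $\Delta$ entries in iteration $t-1$, and each entry $(d,s,\Next)$ fits in $\BO(\log n)$ bits (distances are polynomially bounded, source IDs and node IDs are $\BO(\log n)$-bit by assumption), sending $L_v(t-1)$ over each incident edge takes $\BO(\Delta)$ rounds in the \CONGEST\ model. Local computation adds no rounds. Summing over $t=1,\ldots,h$ yields the claimed $\BO(\Delta h)$ bound.

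The only nontrivial step is assembling the correctness part, and that work has already been done in \lemmaref{lem:bf}, \lemmaref{lem:bsp_trunc}, and \lemmaref{lemma:bsp_route_stateful}; the main obstacle—namely the fact that truncation can hide entries needed for later routing—is precisely what \lemmaref{lemma:bsp_route_stateful} circumvents by returning the full history $(L_v(1),\ldots,L_v(h))$ rather than just the final list. The rest is bookkeeping.
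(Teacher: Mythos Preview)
Your proposal is correct and matches the paper's own proof essentially line for line: correctness from \lemmaref{lem:bf} and \lemmaref{lem:bsp_trunc}, path reconstruction from \lemmaref{lemma:bsp_route_stateful}, and the $\BO(\Delta h)$ round bound from $h$ iterations each sending $\BO(\Delta)$ entries of $\BO(\log n)$ bits. The paper's version is terser, but the content is identical.
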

\begin{proof}
Correctness follows from Lemmas \ref{lem:bf} and \ref{lem:bsp_trunc}.
\lemmaref{lemma:bsp_route_stateful} proves that the paths can be reconstructed
as stated. The time complexity follows from the fact that the algorithm runs for
$h$ iterations, and each iteration can be implemented  in $\BO(\Delta)$ rounds
in the \CONGEST\ model since the messages contain $\BO(\Delta)$ IDs and
distances.
\end{proof}

\emph{Stateless routing.} 
The routing mechanism suggested by \lemmaref{lemma:bsp_route_stateful} has the
disadvantage that it is stateful, as the routing decision depends on the number
of previous routing hops. It is easy to make it stateless: at each
node, a packet is directed toward the hop that reported the best
distance estimate, i.e., the next hop to take at node $v$ for
destination $s$ is $\arg\min_{\Next}\Set{d\,:\,(d,s,\Next)\in
  \bigcup_t L_v(t)}$.
\begin{corollary}\label{coro:bsp_route_stateless}
For any node $v$ and any entry $(d,s,\Next)\in L_v(h)$, a routing path of at
from $v$ to a node in $s$ of weight $d$ can be constructed using the local
knowledge of the nodes only.
\end{corollary}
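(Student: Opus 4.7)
The plan is to exhibit an integer-valued potential that strictly decreases along the stateless routing path, whose initial value is at most $d$, which directly yields termination at a node of $SN(s)$ within at most $d$ hops and with total traversed weight at most $d$. For each node $w$, define
\[
  d^*_w \DEF \min\bigl\{d' \,:\, (d',s,\cdot) \in \bigcup\nolimits_t L_w(t)\bigr\},
\]
with the convention $d^*_w = \infty$ when $s$ never appears. The stateless rule at $w$ picks a neighbor $w'$ realizing $d^*_w$, i.e., one for which some iteration $t^*$ satisfies $(d^*_w,s,w') \in L_w(t^*)$. Since $(d,s,\Next) \in L_v(h)$ is itself such a candidate at the starting node, we immediately obtain $d^*_v \leq d$.

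The key monotonicity step is to show $d^*_{w'} \leq d^*_w - W(w,w')$. The entry $(d^*_w,s,w') \in L_w(t^*)$ was produced by a Bellman-Ford relaxation in iteration $t^*$, which required $w$ to receive from $w'$ a message $(d^*_w - W(w,w'),s,\cdot)$. Such a message is sent only if the tuple survived the truncation at $w'$ at the end of iteration $t^*-1$, so $(d^*_w - W(w,w'),s,\cdot) \in L_{w'}(t^*-1)$ and the claim follows.

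Iterating along the stateless routing path $v = v_0, v_1, \ldots$ and telescoping yields $\sum_i W(v_i, v_{i+1}) \leq d^*_{v_0} \leq d$. Moreover, every entry ever added to some $L_w(t)$ with $t \geq 1$ originates from a Bellman-Ford relaxation involving at least one edge of positive integer weight, so $d^*_w \geq 1$ whenever $d^*_w$ is finite. Combined with the strict decrease by at least $1$ per hop, the path length is bounded by $d^*_v \leq d$; its last node cannot admit a further hop without violating this bound, which can only happen if that node lies in $SN(s)$ and the packet declares itself delivered.

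The main obstacle I foresee is the interplay between truncation and the monotonicity step. One must verify that the witness tuple for $w'$ lies in the \emph{truncated} list $L_{w'}(t^*-1)$ and not merely in the untruncated $L^*_{w'}(t^*-1)$, since only the $\Delta$ surviving entries are broadcast in \lineref{bsp:repeat}. This is immediate from the pseudocode, but it is precisely the subtlety that forced \lemmaref{lemma:bsp_route_stateful} to rely on a hop counter, and the trick that makes the stateless version go through is that taking the minimum estimate across all iterations automatically picks a time $t^*$ at which the witnessing entry survived truncation.
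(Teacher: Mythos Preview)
Your proof is correct and takes essentially the same route as the paper: both follow the minimum-distance entry over all iterations and use the observation that the witnessing tuple at the next hop must have survived truncation (since it was the message actually sent), giving a strictly decreasing potential along the stateless path. The paper's version is terser—it defers the monotonicity step to \lemmaref{lemma:bsp_route_stateful} rather than rederiving it from the pseudocode—but the argument is identical.
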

\begin{proof}
\lemmaref{lemma:bsp_route_stateful} shows that if a node $w$ follows the
$\Next_w$ pointer of \emph{any} entry $(d_w,s,\Next_w)\in L_w(t)$ for \emph{any}
$t\in \{1,\ldots,h\}$, node $\Next_w$ has an entry
$(d'-W(w,\Next_w),s,\Next_{\Next_w})\in L_w(t-1)$. We thus can simply choose to
follow at each node $v$ the $\Next$ pointer of entry $(d,s,\Next)\in
\bigcup_{t\in \{1,\ldots,h\}}L_v(t)$ with minimal $d$ and are guaranteed to
eventually arrive at some node in $s$ using a path of weight at most $d$.
\end{proof}
Note that in general we cannot guarantee that the constructed path has at most
$h$ hops when applying this mechanism; this holds true, however,  if we
are routing to one of the $h$ nodes 
closest to the source of the
routing request (by \lemmaref{lem:h}). This observation will be crucial for
making our general routing scheme stateless.

\subsection{The Short-Range Scheme}
\label{sec:short}

With Algorithm $\BSP$ at hand, we can now describe our short-range routing
scheme. Our goal is to allow each node to find a route to each of its closest
$\tilde\Theta(\sqrt{n})$ neighbors. A na\"\i ve application of Algorithm $\BSP$,
where all nodes are sources, would set the overlap parameter to
$\tilde\Theta(\sqrt{n})$ (this is the number of nodes we want to know about),
and the range parameter to $\tilde\Theta(\sqrt{n})$ too (in order to find the
closest $\tilde\Theta(\sqrt{n})$ nodes it suffices to go to this hop-distance,
cf.\ \lemmaref{lem:h}). However, \theoremref{thm:bsp} tells us that in this
case, the time complexity would be $\BO(\Delta h)\subset\tilde\BO(n)$, a far cry
from the $\tilde\Omega(\sqrt n)$ lower bound from Corollaries \ref{coro-lb2} and
\ref{coro-lb}. Our solution is a hierarchical bootstrapping process that
converges in double-exponential speed. We show that the stretch is proportional
to the number of stages in the hierarchy.

\subsubsection*{The Construction}
The construction is done iteratively in $L$ stages. In the interest of clarity
we describe the construction intuitively first and then formalize it. The idea
is that on the one hand we want to spend
at most a certain amount of time, but on the other hand with each stage try to
reduce the number of landmarks as quickly as possible. This approach is
the spirit of Thorup-Zwick distance oracles and routing
schemes~\cite{TZ-routing,TZ-05}, and it is also used in a distributed
fashion in~\cite{DDP}. The difficulty lies in constructing such a
hierarchy quickly.%
\footnote{In \cite{DDP}, distance sketches are constructed
  distributedly using exhaustive search with
  respect to distances, i.e., Bellmann-Ford is run for
  sufficiently many iterations until all routes become stable. This
  approach has time complexity $\Omega(\SPD)$ and therefore cannot guarantee a
  running time of $\tilde{o}(n)$ on all graphs of diameter $\HD\in \tilde{o}(n)$.
}

\begin{wrapfigure}{r}{2.2in}
\centering
\includegraphics[width=1.8in]{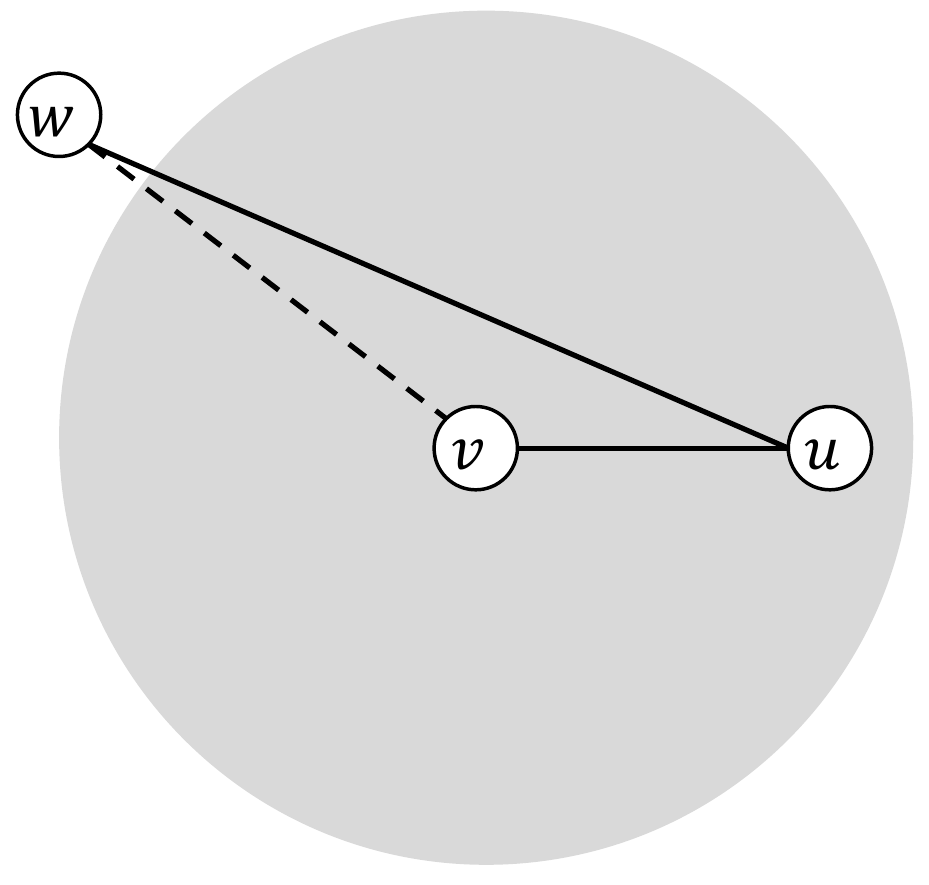}
\caption{\small The distance from $v$ to $w$ is at least one third
of the length of the route from $v$ to $w$ via $u$.}
\label{fig:tri}
\end{wrapfigure}
The sets of landmarks, denoted $S_1,\ldots,S_L$, are sampled uniformly
and independently 
at random 
without any coordination overhead, with $S_0\DEF V$, and 
$S_{i}\subseteq S_{i-1}$ for $1\le i\le L$.
In the $i^{th}$ stage, each node finds
a route to the closest node in $S_i$ as well as
to all nodes in $S_{i-1}$ that are closer to it.
This property 
allows us to bound the routing stretch. The basic argument is a simple
application of the triangle inequality (see \figref{fig:tri}): Consider a route  from node
$v$ to node $w$. If there is a node $u\in S_1$ that is closer to $v$ than
$w$, then the route of shortest paths via $u$ has stretch at most $3$.
It is therefore sufficient for $v$
to determine (the next hop of) least-weight routes to nodes in
$S_0\DEF V$ that are closer to it than the closest node in $S_1$
only. Using double induction, we can bound the stretch of the
multi-stage application of this technique we employ.

To this end, in each stage we invoke \algref{alg:bsp} with source set
$S_{i-1}$. We now explain how to choose the parameters $h_i$ and
$\Delta_i$ for this invocation. Let $p_i$ be the probability of 
a node to be selected into $S_i$. Then w.h.p., each node $v$ has a member of
$S_i$ among the $\BO(\log n/p_i)$ nodes closest to $v$. Hence, this is a good
choice for the distance parameter $h_i$. The expected number of nodes from
$S_{i-1}$ among the  $h_i$ nodes closest to a given node is $p_{i-1}h_i$.
Applying Chernoff's bound shows that this number is bounded by
$\BO(p_{i-1}h_i)=\BO(p_{i-1}\log n /p_i)$ w.h.p. This is an upper bound on
the number of sources that need to be detected by each node and therefore is our
choice of the overlap parameter $\Delta_i$.

The resulting running time of the call to Algorithm $\BSP$ is
$\BO(h_i\Delta_i)\subset \tilde{\BO}(p_{i-1}/p_i^2)$. Since this is the
dominating term in the running time in each stage, it is now easy to determine
the sampling probabilities: neglecting polylogarithmic factors, we get the
simple recursion $p_i=\sqrt{p_{i-1}/T}$, where $T$ is the desired running time
and $p_0\DEF 1$. For example, if we want to ensure a running time bound of
$T\in \tilde{\BO}(\sqrt{n})$, we obtain: 
\begin{compactitem}
\item sampling probabilities of $n^{-1/4}, n^{-3/8}, n^{-7/16},\ldots$, i.e.,
$p_i=n^{-(2^i-1)/2^{i+1}}$;
\item expected set sizes of
$\Theta(n^{3/4}),\Theta(n^{5/8}),\Theta(n^{9/16}),\ldots$, i.e., $|S_i|\in
\Theta(n^{1/2+1/2^i})$ w.h.p.;
\item range parameters of $\Theta(n^{1/4}\log n),\Theta(n^{3/8}\log
n),\Theta(n^{7/16}\log n)$, i.e., $h_i\in \Theta(n^{1/2-1/2^{i+1}}\log n)$;
\item overlap parameters of $\Theta(n^{1/4}\log n),\Theta(n^{1/8}\log
n),\Theta(n^{1/16}\log n)$, i.e., $\Delta_i\in \Theta(n^{1/2^{i+1}}\log n)$.
\end{compactitem}
(Note that $L=\log \log n$ stages suffice to ensure that $S_L\in
\Theta(\sqrt{n})$ w.h.p.) Running Algorithm $\BSP$ with parameters as above, we
get that w.h.p., after $\tilde\BO(\sqrt n)$ time, each node knows of the closest
$\Delta_i$ nodes from $S_{i-1}$ and how to route to them for
all $1\le i\le L$. But this is not sufficient: we also need to be able to route
back from the nodes in $S_i$.

Given a node $v$, define $\Lead_i(v)$ to be the node closest to $v$ in $S_i$
(symmetry broken by identifiers), and let $C_v(i)\DEF \{u\in
V\,|\,\Lead_i(u)=v\}$, i.e., for each stage $i$, the sets $C_v(i)$ are a Voronoi
decomposition of $V$ with centers $\Lead_i(V)$.
Note that routing from $\Lead_i(v)$ to $C_v(i)$ is not as simple as thee other
direction: While the depth of the tree rooted at $\Lead_i(v)$ is bounded by
$h_i$, there is no non-trivial upper bound on the number of nodes in the tree.
This can be solved by a number of standard techniques for tree routing (e.g.,
\cite{SK}). To minimize space consumption, we use the technique of
\cite{TZ-routing}, which constructs routing tables of size $\tilde{\BO}(1)$ and
node labels of $\BO(\log n)$ bits in $\BO(h_i)$ time.
In a nutshell, the idea is first to count the sizes of subtrees (which can be
done in $\BO(h_i)$ rounds) and then construct ``mini routing tables'' for the
``heavy'' part of the tree, where a node is considered heavy if its subtree
contains at least $n/\lceil\sqrt{\log n}\rceil$ nodes. Then this process
is applied recursively in the subtrees rooted at children of heavy
nodes. From the description in~\cite{TZ-routing}, one can verify that each
recursive step of the construction can be performed in time $\tilde\BO(h_i)$ in
a tree of depth $h_i$ in the $\CONGEST(\log n)$ model. There are at most
$\log_{\sqrt{\log n}}n$ recursive steps, summing up to a total of
$\tilde\BO(h_i)$ rounds to construct labels and routing tables.

Formally, given natural numbers $n$ and $L\leq \log \log n$, we define the
following for $1\le i\le L$.
\begin{compactitem}
\item  $p_0\DEF 1$, and $p_i\DEF(\sqrt{n})^{-(2^L/(2^L-1))(2^i-1)/2^i}$.
\item For each node $v$, $\Lead_v(i)$ is  the node from $S_i$ closest to
$v$ (ties broken by hop distance and ID).
\item   For each $u\in S_{i}$, define
$C_u(i)\DEF\Set{v\mid\Lead_v(i)=u}$, and $C_u(0)\DEF \{u\}$.
\item For each node $v$, define $H_v(i)\DEF \{u\in S_{i-1}\,|\,\Wd(v,u)\leq
\Wd(v,\Lead_v(i))\}$.
\end{compactitem}
Our construction maintains (w.h.p.)\ the following
properties at stage $i\in \{1,\ldots,L\}$.\vspace*{1ex}

\newlength{\fparwidth}\addtolength{\fparwidth}{\textwidth}\addtolength{\fparwidth}{-13pt}
\noindent\fbox{\begin{minipage}{\fparwidth}
\begin{compactenum}[(1)]
\item\label{prop-prob} $S_i$ is a uniformly random subset of $S_{i-1}$, 
where $\displaystyle\Pr[v\in S_i]=p_i$ and $\displaystyle\Pr[v\in
S_i\mid v\in S_{i-1}]= p_i/p_{i-1}=(\sqrt{n})^{-2^L/(2^i(2^L-1))}$.
\item \label{prop-y}
For any node $v$, it it is possible to route from $v$ to $\Lead_v(i)$
on a least-weight path.
\item \label{prop-f}
For any node $v$, it is possible to compute $\Lead_v(i)$ and
$\Wd(v,\Lead_v(i))$ from the label of $v$.
\item \label{prop-c}
For any node $u\in S_i$, it is possible to route from $u$ to any node $w\in
C_u(i)$ on a least-weight path. 
\item\label{prop-h}
For any node $v$,  $H_v(i)$ is locally known at $v$, and it is
possible to route
from $v$ to any node $u\in H_v(i)$ on a least-weight path (whose weight is known
at $v$).
\end{compactenum}
\end{minipage}
}
\medskip

Suppose that we have such a hierarchy of $L$ stages. Then, given the label of
any node $w\in \bigcup_{1\le i\le L}\bigcup_{u\in H_v(i)}C_u(i-1)$, node $v$ can
route a message to $w$ as follows: First, find some $i\in
\{1,\ldots,L\}$ such that $w\in C_u(i-1)$ for some $u\in H_v(i)$
(cf.\ \pprtyref{prop-f} and \pprtyref{prop-h} of the construction). The route
from $v$ to $w$ is then defined by the concatenation of two shortest paths:
the one from $v$ to $u$, and the one from $u$ to $w$ (cf.\ \pprtyref{prop-c} and
\pprtyref{prop-h}). Moreover, the long-range scheme will make sure that we can
always route to any destination via the closest skeleton nodes in $S_L$, which
is feasible due to \pprtyref{prop-y} and \pprtyref{prop-c}. By always choosing
from the available routes such that the weight of the computed route is minimal
(which can be done by \pprtyref{prop-f} and \pprtyref{prop-h} for the
short-range construction, and will also be possible for the long-range scheme),
routing becomes stateless.

\subsubsection*{Stretch Analysis}

We now bound the weight of the routes constructed by the stated scheme with
respect to the weight of the shortest paths. We note that the argument for the
general case is similar in spirit to the simple case of $i=1$ illustrated
in \figref{fig:tri}. We start with the following key lemma.
\begin{lemma}
\label{lem-sep}
Suppose that for $v,w\in V$ and $1\le j\le L$ we have that
$w\notin\bigcup_{i=1}^j\bigcup_{u\in H_v(i)}C_{u}(i-1)$. Then (a)
$\Wd(v,\Lead_v(j))\leq (2{j}-1)\Wd(v,w)$, and (b) $\Wd(w,\Lead_w(j))\leq 
2j\Wd(v,w)$.
\end{lemma}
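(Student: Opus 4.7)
The plan is to prove both parts simultaneously by induction on $j$. The crux is translating the hypothesis that $w\notin\bigcup_{i=1}^j\bigcup_{u\in H_v(i)}C_u(i-1)$ into a usable distance inequality. First I would extract the following key observation: since $\Lead_w(i-1)\in S_{i-1}$ and $w\in C_{\Lead_w(i-1)}(i-1)$ by definition, the hypothesis forces $\Lead_w(i-1)\notin H_v(i)$ for every $i\in\{1,\ldots,j\}$, which by the definition of $H_v(i)$ means
\[
\Wd(v,\Lead_w(i-1))>\Wd(v,\Lead_v(i)) \qquad\text{for all }1\leq i\leq j.
\]
This is the one structural fact that drives the entire induction.

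For the base case $j=1$, note that $S_0=V$ and $\Lead_w(0)=w$, so the observation directly yields $\Wd(v,\Lead_v(1))<\Wd(v,w)$, which is (a). Part (b) then follows from $\Lead_v(1)\in S_1$ (so $\Lead_w(1)$ is at least as close to $w$) and a single triangle-inequality step:
\[
\Wd(w,\Lead_w(1))\leq \Wd(w,\Lead_v(1))\leq \Wd(v,w)+\Wd(v,\Lead_v(1))\leq 2\Wd(v,w).
\]

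For the inductive step, I would first note that the hypothesis for $j$ implies the hypothesis for $j-1$, so the inductive assumption applies and gives $\Wd(w,\Lead_w(j-1))\leq 2(j-1)\Wd(v,w)$. Combining this with the key observation at level $i=j$ and the triangle inequality,
\[
\Wd(v,\Lead_v(j))<\Wd(v,\Lead_w(j-1))\leq \Wd(v,w)+\Wd(w,\Lead_w(j-1))\leq (2j-1)\Wd(v,w),
\]
which is (a). Then (b) drops out exactly as in the base case, using $\Lead_v(j)\in S_j$:
\[
\Wd(w,\Lead_w(j))\leq \Wd(w,\Lead_v(j))\leq \Wd(v,w)+\Wd(v,\Lead_v(j))\leq 2j\Wd(v,w).
\]

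The only real obstacle is the first step, namely seeing that the somewhat opaque ``$w$ is not covered'' hypothesis is equivalent to a clean comparison of $\Wd(v,\Lead_w(i-1))$ with $\Wd(v,\Lead_v(i))$; once that translation is in hand, the induction is just two triangle inequalities per level. Nothing else is delicate: tie-breaking in the definition of $\Lead$ does not affect any weight comparison used here.
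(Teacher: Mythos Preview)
Your proof is correct and follows essentially the same approach as the paper: both extract the key fact that the hypothesis forces $\Lead_w(i-1)\notin H_v(i)$ (hence $\Wd(v,\Lead_v(i))<\Wd(v,\Lead_w(i-1))$), and then alternate this with the triangle inequality in an induction that proves (a) and (b) in tandem. The only cosmetic difference is that the paper fixes $j$ and inducts on $i$ up to $j$, whereas you induct on $j$ directly using that the hypothesis is monotone in $j$; the chain of inequalities is identical.
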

\begin{proof}
We prove the lemma by induction on $i$, for a fixed $j$.
More specifically, we show for each $1\leq i\leq j$ that (a) $\Wd(v,v_{i})\leq
(2i-1)\Wd(v,w)$ and (b) $\Wd(w,w_{i})\leq  2i\Wd(v,w)$. For the basis of the
induction, consider $i=0$ in Statement (b). In this case, since $S_0=V$, we
have that, $\Lead_w(0)=w$ and Statement (b) holds because $\Wd(v,w)\ge
0=\Wd(w,w)$.

For the inductive step, assume that Statement (b) holds for $0\leq i<j$ and
consider $i+1$. Since trivially $w\in C_{\Lead_w(i)}(i-1)$, the premise of the
lemma implies that $\Lead_w(i)\not \in H_v(i+1)$. However, $\Lead_v(i+1)\in
H_v(i+1)$, and hence we obtain
\begin{eqntext}
\Wd(v,\Lead_v(i+1))&\leq&\Wd(v,\Lead_w(i))\\
&\leq& \Wd(v,w)+\Wd(w,\Lead_w(i))& \text{triangle inequality}\\
&\leq& (2i+1)\Wd(v,w)& \text{by induction hypothesis}
\end{eqntext}
This proves part (a) of the claim. Using the above inequality we also
obtain
\begin{eqntext}
\Wd(w,\Lead_w(i+1))
&\leq&\Wd(w,\Lead_v(i+1))& \text{$\Wd(w,\Lead_w(i+1))\le\Wd(w,u)$
for $u\in S_{i+1}$}\\
&\leq &\Wd(w,v)+\Wd(v,\Lead_v(i+1))& \text{triangle inequality}\\
&\leq &(2i+2)\Wd(v,w)& \text{by the proof of part (a)},
\end{eqntext}
which proves part (b) of the claim, completing the inductive step.
\end{proof}
\lemmaref{lem-sep} allows us to prove the following positive result.
\begin{corollary}
\label{cor-short}
Let $v,w\in V$, and let $1\le i_0\le L$ be minimal such that
$\Lead_w(i_0-1)\in H_v(i_0)$.
Then $\Wd(v,\Lead_w(i_0-1))+\Wd(\Lead_w(i_0-1),w)\le(4i_0-3)\Wd(v,w)\in
\BO(L\cdot\Wd(v,w))$.
\end{corollary}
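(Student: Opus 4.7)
The plan is to combine the minimality of $i_0$ with \lemmaref{lem-sep} applied at index $j = i_0 - 1$, and then derive the bound on $\Wd(v, \Lead_w(i_0 - 1))$ by routing through $w$ via the triangle inequality.

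First I would dispose of the edge case $i_0 = 1$. Since $S_0 = V$, we have $\Lead_w(0) = w$, so the corollary reduces to $\Wd(v,w) + \Wd(w,w) = \Wd(v,w) \le \Wd(v,w)$, which holds trivially with constant $4 \cdot 1 - 3 = 1$.

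Next, assume $i_0 \ge 2$. The key observation is that the minimality of $i_0$ implies that for every $1 \le i < i_0$, $w \notin \bigcup_{u \in H_v(i)} C_u(i-1)$: indeed, because each $w$ lies in exactly one cluster $C_u(i-1)$ at stage $i-1$, namely the one centered at $u = \Lead_w(i-1)$, membership of $w$ in the union would force $\Lead_w(i-1) \in H_v(i)$, contradicting the choice of $i_0$. Hence the hypothesis of \lemmaref{lem-sep} is satisfied for $j = i_0 - 1$, and part (b) of that lemma yields
\[
\Wd(w, \Lead_w(i_0 - 1)) \le 2(i_0 - 1)\,\Wd(v,w) = (2i_0 - 2)\,\Wd(v,w).
\]

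The second summand $\Wd(\Lead_w(i_0 - 1), w)$ is thus already bounded. For the first summand, I would not use the definition of $H_v(i_0)$, but instead apply the triangle inequality directly through $w$:
\[
\Wd(v, \Lead_w(i_0 - 1)) \le \Wd(v, w) + \Wd(w, \Lead_w(i_0 - 1)) \le \Wd(v,w) + (2i_0 - 2)\,\Wd(v,w) = (2i_0 - 1)\,\Wd(v,w).
\]
Adding the two bounds gives $(2i_0 - 1)\,\Wd(v,w) + (2i_0 - 2)\,\Wd(v,w) = (4i_0 - 3)\,\Wd(v,w)$, as claimed, and since $i_0 \le L$, this is in $\BO(L \cdot \Wd(v,w))$.

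The main conceptual step is verifying that the minimality of $i_0$ translates cleanly into the hypothesis of \lemmaref{lem-sep} for all smaller indices simultaneously; once that is in place, the rest is a two-line triangle-inequality calculation. No real obstacle is expected beyond being careful that the decomposition into the two segments $v \to \Lead_w(i_0 - 1)$ and $\Lead_w(i_0 - 1) \to w$ is not bounded directly via the definition of $H_v$, but via a detour through $w$ that reuses the estimate from part (b) of \lemmaref{lem-sep}.
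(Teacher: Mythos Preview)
Your proof is correct and follows essentially the same approach as the paper: both bound $\Wd(v,\Lead_w(i_0-1))$ via the triangle inequality through $w$ and then invoke part (b) of \lemmaref{lem-sep} with $j=i_0-1$ to control $\Wd(w,\Lead_w(i_0-1))$, yielding the same $(4i_0-3)$ constant. You are simply more explicit than the paper in separating out the base case $i_0=1$ and in verifying that the minimality of $i_0$ furnishes the hypothesis of \lemmaref{lem-sep}.
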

\begin{proof}
Note that
\begin{eqntext}
\Wd(v,\Lead_w(i_0-1))+\Wd(\Lead_w(i_0-1),w)
&\le& \Wd(v,w)+2\Wd(w,\Lead_w(i_0-1)) & \text{triangle inequality}\\
&\le& \Wd(v,w)+4(i_0-1)\Wd(v,w)& \text{\lemmaref{lem-sep}}\\
&=& (4i_0-3)\Wd(v,w)
\end{eqntext}
and the corollary is proved.
\end{proof}
On the other hand, if there is no $i_0$ as in the corollary, we can conclude
from \lemmaref{lem-sep} that routing via the skeleton nodes closest to source
and destination, respectively, incurs bounded stretch.

\subsubsection*{Implementation and Time Complexity}
We now explain how to construct the hierarchy efficiently in more detail, and
analyze the time complexity of the construction. \algref{algo:close} gives the
pseudocode of the above scheme. The algorithm is parametrized by the total
number of nodes $n$ and the number $L$ of hierarchy stages. Appropriate
constants $c$ and $c'$ are supposed to be predefined in accordance with the required lower bound
on the probability of success.\footnote{%
  One can verify the properties of the construction and restart a
  failed iteration within $\BO(\HD)$ time if desired, implying that
  the stretch guarantee becomes deterministic and the running time
  probabilistically bounded instead.
}

\begin{algorithm}[ht!]
\small
\SetKwInOut{Input}{input}
\SetKwInOut{Output}{computes}
\Input{$n\in \N$ \REM{number of nodes}\\
$L\in \{1,\ldots,\log \log n\}$ \REM{number of stages in the hierarchy}}
\Output{$l_v\in \{0,\ldots,L\}$ \REM{level of $v$; $v\in S_i \Leftrightarrow
  l_v\geq i$}\\
$\forall i\in \{1,\ldots,L\}: \Lead_v(i)\in S_i$\REM{closest node in $S_i$}\\
$\forall i\in \{1,\ldots,L\}: H_v(i)=\{w\in S_{i-1}\,|\,\Wd(v,w)\leq
\Wd(v,\Lead_i(v)\}$\\
$\forall i\in \{1,\ldots,L\}\,\forall u\in H_v(i): \Next_v(u),d_v(u)$ \REM{next
routing hop ($v$ if $v=u$) and distance to $u$}
}
\lFor{$i\in \{0,\ldots,L\}$}{$p_i:=(\sqrt{n})^{-(2^L/(2^L-1))(2^i-1)/2^i}$\\}
$l_v:= i \text{ with probability }p_i-\sum_{j=i+1}^{L}p_j \text{ for }i\in
\{0,\ldots,L\}$\nllabel{line:set}\\
\For{$i\in \{1,\ldots,L\}$}{
  $h_i:=c\cdot \log n /p_i$ \REM{$c$ and $c'$ are predefined constants
  controlling the probability of failure}\\
  $\Delta_i:=c'\cdot h_i p_{i-1}$\\
  \lIf{$l_v\ge i$}{$\Src(v):=(v,l_v)$} \lElse{$\Src(v):=\bot$}\\
  $(L_v(1),\ldots,L_v(h_i)):=\BSP(h_i,\Delta_i,\Src)$ \REM{only $L_v(h_i)$
  needed}\\
  $H_v(i):=\emptyset$\\
  \Repeat{$l_u\geq i$}{
    let $(d,(u,l_u),w)$ be the next entry in $L_v(h_i)$ in ascending
    lexicographic order\\
    $H_v(i):=H_v(i)\cup \{u\}$\\
    $\Next_v(u):=w$; $d_v(u):=d$ \REM{exact shortest paths, no distinction of
    stages needed}
  }
  $\Lead_v(i):=u$
  \REM{$u$ is the node from $S_{i+1}$ closest to $v$}\\
  construct labels of stage $i$
}
\caption{Distributed construction of data structure for close-distance
  routing at $v\in V$.
}
\label{algo:close}
\end{algorithm}

Choosing the sets $S_i$ is performed locally without communication. Each node
$v$ has level $l_v$ chosen independently so that $\Pr[l_v\geq i]=p_i$
(\lineref{line:set}). Setting $S_i\DEF \{v\in V\,|\,l_v\geq i\}$ as indicated in
the algorithm thus satisfies \pprtyref{prop-prob}. In addition, the following
properties are easily derived using the Chernoff bound, and we state them
without proof.

\begin{lemma}\label{lem-short-corr1-general}
For appropriate choices of the constants $c$, $c'$ in
\algref{algo:close}, for all $1\le i\le L$ it holds w.h.p.\ that:
\begin{compactitem}
\item $|S_i|\in \Theta(p_i n)$ $(|S_0|=n)$.
\item For all $v\in V$, $|S_i\cap \Ball_v(h_i)|\in \Theta(\log n)$.
\item For all $v\in V$, $H_v(i)\subset \Ball_v(h_i)$.
\item For all $v\in V$, $|H_v(i)|\in \Theta(h_i p_{i-1})=\Theta(p_{i-1}\log
n / p_i)$.
\item For all $v\in V$, $\Delta_i\geq |H_v(i)|$.
\end{compactitem}
\end{lemma}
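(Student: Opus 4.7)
The plan is to derive all five bullets from Chernoff bounds on the sums of independent Bernoullis built into the level assignments $l_v$ in \lineref{line:set}, followed by a union bound over the $\BO(n \log \log n)$ relevant events to turn each per-$(v,i)$ statement into a uniform one. The key observation is that because each $l_v$ is chosen independently, for any deterministic set $U \subseteq V$ the intersection $|S_i \cap U|$ is binomial with $|U|$ trials and success probability $p_i$; in particular this applies both to $U = V$ and to $U = \Ball_v(h_i)$, since $\Ball_v(h_i)$ depends only on the edge weights and not on the random levels.

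For bullet 1, the expectation is $p_i n \geq p_L n = \sqrt{n}$, which is polynomially larger than $\log n$, so a multiplicative Chernoff bound yields $|S_i| \in \Theta(p_i n)$ with failure probability $n^{-\Omega(1)}$, and a union bound over $i \leq L \leq \log \log n$ handles all levels at once. For bullet 2, $\E[|S_i \cap \Ball_v(h_i)|] = h_i p_i = c \log n$; choosing the constant $c$ sufficiently large, Chernoff gives concentration around this value with failure probability $n^{-c''}$ for an arbitrarily large $c''$, which absorbs the union bound over $v \in V$ and $i \leq L$.

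Bullet 3 is then deterministic given bullet 2: since $|S_i \cap \Ball_v(h_i)| \geq 1$, the closest $S_i$-node $\Lead_v(i)$ already lies in $\Ball_v(h_i)$, and any $u \in S_{i-1}$ with $\Wd(v,u) \leq \Wd(v, \Lead_v(i))$ is therefore among the $h_i$ closest nodes to $v$ (modulo tie-breaking, which is chosen consistently with the definition of $\Lead_v(\cdot)$), hence in $\Ball_v(h_i)$. This gives the upper bound $|H_v(i)| \leq |S_{i-1} \cap \Ball_v(h_i)|$; since $\E[|S_{i-1} \cap \Ball_v(h_i)|] = h_i p_{i-1} \geq c \log n$, a further Chernoff bound yields $|H_v(i)| \leq \BO(h_i p_{i-1})$ w.h.p.\ (the matching lower bound in bullet 4 should be read as a typical-case statement and is not actually required for what follows). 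Bullet 5 is then obtained simply by choosing the constant $c'$ in the definition $\Delta_i = c' h_i p_{i-1}$ strictly larger than the Chernoff constant appearing in the upper bound for $|H_v(i)|$.

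The only real obstacle is bookkeeping of constants: $c$, $c'$, the tolerated Chernoff deviations, and the constants hidden in ``w.h.p.'' must be fixed in the right order so that the $\polylog(n)$ number of union-bound failure terms still sums to $n^{-\Omega(1)}$. This is routine once the dependencies are laid out: pick the desired final failure exponent first, then choose $c, c'$ and the Chernoff tolerances accordingly.
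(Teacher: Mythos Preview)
Your proposal is correct and matches the paper's own approach: the paper explicitly states that these properties ``are easily derived using the Chernoff bound, and we state them without proof,'' and your sketch is precisely the intended Chernoff-plus-union-bound derivation. Your observation that the lower bound in bullet~4 cannot hold for \emph{every} $v$ (e.g., when $v\in S_i$ so that $\Lead_v(i)=v$ and $|H_v(i)|=1$) is a valid catch about the statement itself; as you note, only the upper bound is used downstream.
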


By these properties and \theoremref{thm:bsp}, \pprtyref{prop-h} is satisfied
w.h.p., because we invoke Algorithm $\BSP$ with sources $S_i$, depth parameter
$h_i$, and overlap parameter $\Delta_i$: after this invocation, each node $v$
can identify the set $H_v(i)$ and route to any $u\in H_v(i)$ on a path of known
weight; since $H_v(i)\subset \Ball_v(h_i)$ w.h.p., these routing paths are
shortest paths. Moreover, the invocation of $\BSP(h,\Delta,S_{i})$ allows each
node $v$ also to learn what is $\Lead_v(i)$ and route to it on a shortest path
of known weight, establishing \pprtyref{prop-y}. In order to satisfy
\pprtyref{prop-f}, we simply add $\Lead_v(i)$ and
$\Wd(v,\Lead_v(i))=d_v(\Lead_v(i))$ to the label of $v$ for all $i$. As
discussed earlier, routing tables of size $\log^{\BO(1)}n$ and labels of
size $(1+o(1))\log n$ to route within $C_{\Lead_v(i)}$ can be constructed within
$\tilde\BO(h_i)$ rounds using the scheme from~\cite{TZ-routing}, and we add the
respective tree label to $v$'s label to ensure \pprtyref{prop-c}.

We can therefore summarize the complexity of the construction as follows.
\begin{lemma}
\label{lem-short-perf}
Given $1\le L\le\log\log n$, constructing the $L$-stages short-range routing
tables and labels can be done in $\BO(L(\sqrt{n})^{2^L/(2^L-1)}\log^2 n)\subset
\tilde{\BO}((\sqrt{n})^{2^L/(2^L-1)})$ rounds, and the total label size of a
node is $\BO(L\log n)$.
\end{lemma}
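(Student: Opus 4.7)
The plan is to compute the running time of each of the $L$ stages of \algref{algo:close} separately and then sum. In stage $i$ the dominant cost is the invocation $\BSP(h_i,\Delta_i,\Src)$, which by \theoremref{thm:bsp} completes in $\BO(h_i\Delta_i)$ rounds. On top of this, one must construct the intra-cell routing tables via the tree-routing scheme of~\cite{TZ-routing} applied to the shortest-path trees rooted at the nodes of $S_i$; as discussed in the text preceding the lemma, this costs $\tilde\BO(h_i)$ rounds per stage and is therefore dominated by the BSP cost (since $\Delta_i\ge 1$).

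The key step is to show that $h_i\Delta_i\in \BO((\sqrt{n})^{2^L/(2^L-1)}\log^2 n)$ uniformly over $i\in\{1,\ldots,L\}$. From the parameter definitions, $h_i\Delta_i=cc'h_i^2p_{i-1}\in\BO(\log^2 n\cdot p_{i-1}/p_i^2)$. Writing $\alpha:=2^L/(2^L-1)$ and $p_i=(\sqrt{n})^{-\alpha(1-1/2^i)}$ (the form which also covers $p_0=1$), a direct computation of exponents gives $p_{i-1}/p_i^2=(\sqrt{n})^{\alpha}$ independently of $i$: the $1/2^i$-terms cancel because $1/2^{i-1}=2/2^i$, so the exponent simplifies as $-\alpha(1-1/2^{i-1})+2\alpha(1-1/2^i)=\alpha$. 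This is precisely the telescoping property built into the recursion $p_i=\sqrt{p_{i-1}/T}$ with $T=(\sqrt{n})^{\alpha}$. Summing the $L$ equal per-stage contributions yields the claimed $\BO(L(\sqrt{n})^{\alpha}\log^2 n)$ running time.

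For the label size, each stage $i$ appends only $\BO(\log n)$ bits to the label of a node $v$: the pair $(\Lead_v(i),\Wd(v,\Lead_v(i)))$ needed to enforce \pprtyref{prop-f}, together with the $(1+o(1))\log n$-bit tree label produced by the Thorup--Zwick construction that realizes \pprtyref{prop-c} inside $C_{\Lead_v(i)}(i)$. Summing over the $L$ stages gives the $\BO(L\log n)$ bound on the total label length.

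The only nontrivial obstacle is the exponent bookkeeping showing that $p_{i-1}/p_i^2$ is the same for every $i$; once that identity is in place, correctness of each stage is encapsulated by \theoremref{thm:bsp} and \lemmaref{lem-short-corr1-general} (which guarantee, w.h.p., that $h_i$ and $\Delta_i$ are large enough for each node to recover both $H_v(i)$ and $\Lead_v(i)$ together with shortest-path next hops), and summing $L$ equal terms is immediate.
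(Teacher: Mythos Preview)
Your proof is correct and follows essentially the same approach as the paper: bound each stage by $\BO(h_i\Delta_i)$ via \theoremref{thm:bsp}, reduce this to $\BO((\sqrt{n})^{2^L/(2^L-1)}\log^2 n)$ through the identity $p_{i-1}/p_i^2=(\sqrt{n})^{2^L/(2^L-1)}$, observe that the tree-labeling cost $\tilde\BO(h_i)$ is dominated, and sum over the $L$ stages; the label-size argument is likewise identical. If anything, you spell out the exponent cancellation more explicitly than the paper does.
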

\begin{proof}
The implementation of stage $i\in \{1,\ldots,L\}$ involves invoking $\BSP$ with
parameters $h_i$ and $\Delta_i$, which, by 
\theoremref{thm:bsp}, takes 
\begin{equation*}
\BO(h_i\Delta_i)=\BO\left(\frac{p_{i-1}\log^2 n}{p_i^2}\right)
=\BO\left(\left(\sqrt{n}\right)^{2^L/(2^L-1)}\log^2 n\right)
\end{equation*}
rounds. In addition, we need to relabel the nodes, which, as explained above,
can be done in time $\tilde\BO(h_i)\subseteq \tilde\BO(h_i\Delta_i)$, since the
depth of the shortest paths tree is bounded by $h_i\leq h_i\Delta_i$. Since
there are $L\leq \log \log n$ stages, the total number of rounds thus satisfies
the stated bounds. With respect to the label size, note that each stage $i$ adds
to the label of node $v$ the identifier of and distance to $\Lead_v(i)$ and a
tree label of size $(1+o(1))\log n$, for a total of $\BO(\log n)$ bits per
stage.
\end{proof}

\subsection{Long-Distance Routing}
\label{sec:skeleton}

We now explain how to route between the nodes in the top level of the
hierarchy created by the short-range scheme. Our central
concept is the \emph{skeleton graph},  defined as follows.

\begin{definition}[Skeleton Graph]
Let $G=(V,E,W)$ be a weighted graph. Given $S\subseteq V$ and $h\in \N$, the
\emph{$h$-hop skeleton-$S$ graph} is the weighted graph
$G_{S,h}=(S,E_{S,h},W_{S,h})$ defined by
\begin{compactitem}
\item $E_{S,h}\DEF \Set{\{v,w\}\mid v,w\in S,v\neq w,\mbox{and }\Hd(v,w)\le
h}$
\item For $\{v,w\}\in E_{S,h}$, define $W_{S,h}(v,w)$ to be the
$h$-weighted distance between $v$ and $w$ in $G$, i.e.,
$W_{S,h}(v,w)\DEF\Wd_h(v,w)$.
\end{compactitem}
\end{definition}

The main idea in the long-distance scheme is to construct a skeleton
graph with $S=S_L$ (the top level of the short-range hierarchy as constructed in
\sectionref{sec:short}). The choice of $h$ needs to balance two 
goals: on the one hand, the skeleton graph needs to accurately reflect the
distances of skeleton nodes in $G$, and on the other hand, we must be able to
quickly set up a tables that allow routing of small stretch between the skeleton
nodes.

A simple but crucial observation on skeleton graphs is that if the skeleton $S$
is a random set of nodes, and if $h\in \Omega(n\log n/|S|)$, then w.h.p., the
distances in $G_{S,h}$ are equal to the corresponding distances in $G$. This
means that it suffices to consider paths of $\BO(n\log n/|S|)$ hops in $G$ in
order to find the exact distances in $G$. The following lemma formalizes this
idea. (We state it for a skeleton \emph{containing} a random subset;
this generality will become useful in \sectionref{sec:steiner}.)

\begin{lemma}\label{lemma:distances}
Let $S_R$ be a set of random nodes defined
by $\Pr[v\in S_R]=\pi$ independently for all nodes for some given
$\pi$.
Let $S\supseteq S_R$.
If  $\pi\geq c\log n/h$ for a sufficiently large
constant $c>0$, then w.h.p., $\Wd_{S,h}(v,w)=\Wd(v,w)$ for all $v,w\in S$.
\end{lemma}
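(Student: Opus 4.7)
The direction $\Wd_{S,h}(v,w) \geq \Wd(v,w)$ is immediate: each edge of $G_{S,h}$ is realized by a concrete path in $G$ of the same weight, so any walk in $G_{S,h}$ lifts to a walk in $G$ of equal total weight. Hence the content of the lemma lies in the reverse inequality.

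To establish it, I would first fix deterministically, before revealing the random sampling, one shortest path $P_{v,w}$ in $G$ for every ordered pair $(v,w) \in V\times V$ (say, the lexicographically smallest one), giving at most $n^2$ canonical paths that are independent of $S_R$. If $v,w \in S$ satisfy $\Hd(v,w) \leq h$, then $\{v,w\}$ is already an edge of $G_{S,h}$ with weight $\Wd_h(v,w) \leq W(P_{v,w}) = \Wd(v,w)$ and there is nothing to show. Otherwise, write $P_{v,w} = \Seq{v_0=v, v_1, \ldots, v_L = w}$ with $L > h$; the goal becomes to subdivide $P_{v,w}$ into pieces of at most $h$ hops whose endpoints lie in $S$.

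The geometric observation that drives the argument is: if every sliding window $\{v_j, v_{j+1}, \ldots, v_{j+h-1}\}$ of $h$ consecutive nodes along $P_{v,w}$ contains at least one element of $S_R \subseteq S$, then consecutive $S$-nodes on $P_{v,w}$ (including the endpoints $v$ and $w$) are at most $h$ positions apart, since a larger gap would expose a window of $h$ purely non-$S$ nodes. Given such break points $v = v_{j_0}, v_{j_1}, \ldots, v_{j_k} = w$, each consecutive pair $(v_{j_i}, v_{j_{i+1}})$ defines an edge of $G_{S,h}$ whose weight $\Wd_h$ is bounded by the weight of the corresponding sub-path of $P_{v,w}$; telescoping yields $\Wd_{S,h}(v,w) \leq W(P_{v,w}) = \Wd(v,w)$.

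For the probabilistic estimate, a fixed window misses $S_R$ entirely with probability $(1-\pi)^h \leq e^{-\pi h} \leq n^{-c}$, using $\pi \geq c\log n/h$ and the independence of the sampling. Each canonical path contains fewer than $n$ windows, and there are at most $n^2$ canonical paths, so a union bound over the $\leq n^3$ window events gives a total failure probability of at most $n^{3-c}$, which is polynomially small once $c$ is taken sufficiently large. The main obstacle I anticipate is precisely keeping this union bound under control: there are in general exponentially many shortest paths, and the fix is to preselect one canonical path per pair so that only $n^2$ paths are relevant; the extra generality $S \supseteq S_R$ in the lemma statement creates no difficulty, as the break points are already produced inside the smaller set $S_R$.
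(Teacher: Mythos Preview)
Your argument is essentially the paper's: fix shortest paths in advance, show (w.h.p.) each can be broken at $S$-nodes into segments of at most $h$ hops, and telescope. The paper phrases this as an induction on path length (find one $S$-node on the path, recurse on the two pieces), whereas you use a direct sliding-window formulation; these are equivalent, and the union-bound bookkeeping is identical.

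One small slip to correct: your base case splits on $\Hd(v,w)\le h$ rather than on $\ell(P_{v,w})\le h$. These are not the same, and under the former the claimed inequality $\Wd_h(v,w)\le W(P_{v,w})$ can fail---if $\Hd(v,w)\le h$ but the canonical shortest path has more than $h$ hops, then $\Wd_h(v,w)>\Wd(v,w)=W(P_{v,w})$. The fix is to split on $L=\ell(P_{v,w})$ (which your ``Otherwise'' clause already does implicitly): for $L\le h$ the path $P_{v,w}$ itself witnesses $\Wd_h(v,w)\le\Wd(v,w)$, and for $L>h$ your sliding-window argument applies verbatim.
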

\begin{proof}
Fix $v,w\in S$. Clearly, $\Wd_{S,h}(v,w)\geq \Wd(v,w)$ because each
path in $G_{S,h}$ corresponds to a path of the same weight in $G$. We
need to show that $\Wd_{S,h}(v,w)\leq \Wd(v,w)$ as well. Let
$p=\Seq{u_0=v,u_1,\ldots,u_{\ell(p)}=w}$ be a shortest path
connecting $v$ and $w$ in $G$, i.e., $W(p)=\Wd(v,w)$.
We show, by induction on $\ell(p)$, that $\Wd(p)\leq\Wd_{S,h}(v,w)$
w.h.p.

For the basis of the induction note that if $\ell(p)\leq h$, then by definition
$\Wd_{S,h}(v,w)\leq W(p)=\Wd(v,w)$ and we are done. For the inductive step,
assume that the claim holds for all values of $\ell(p)\leq i$ for some $i\ge h$
and consider a path of length $\ell(p)=i+1$. Now, $\E[\,|S\cap
\Set{u_1,\ldots,u_{i}}\!|\,]\geq \E[\,|S_R\cap
\Set{u_1,\ldots,u_{i}}\!|\,]=i\pi\geq h\pi\in\Omega(\log n)$, and hence,
applying Chernoff's bound, we may conclude that w.h.p.\ the intersection is
non-empty. Let $u\in \{u_1,\ldots,u_i\}\cap S$. Since $p$ is a shortest path in
$G$, so are $(v,\ldots,u)$ and $(u,\ldots,w)$. Both these paths are of length at
most $i$, implying by the induction hypothesis that $\Wd_{S,h}(v,u)\leq\Wd(v,u)$
and $\Wd_{S,h}(u,w)\leq\Wd(u,w)$ w.h.p., respectively. Therefore
$\Wd_{S,h}(v,w)\leq \Wd_{S,h}(v,u)+\Wd_{S,h}(u,w)\leq
\Wd(v,u)+\Wd(u,w)=W(p)=\Wd(v,w)$, completing the induction. Note that the total
number of events we consider throughout the induction is bounded by a
polynomial in $n$, and since the probability of the bad events is polynomially
small, the union bound allows us to deduce that the claim holds w.h.p.
\end{proof}

Based on this observation, an obvious strategy to solve
long-distance routing is  to construct $G_{S,h}$ and 
compute its all-pairs shortest paths. But implementing this approach is not
straightforward. First, the edges of the skeleton graph are virtual: each
edge represents the shortest path of up to $h$ hops in $G$; and
second, the number of skeleton graph edges may be as large as
$\Omega(|S|^2)$. We solve both problems together: While computing the edges of
the skeleton graph, we sparsify the graph, bringing the number of edges
down to near-linear in the skeleton size. Once we are done, we can afford to let  
each skeleton node learn the full topology of the sparsified skeleton
graph, from which approximate all-pairs routes and distances can be computed
locally.

Technically, we use  the classical concept of sparse spanners, defined
as follows.

\begin{definition}[Weighted $k$-Spanners]
Let $H=(V,E,W)$ be a weighted graph and let $k\ge 1$. A weighted $k$-spanner of
$H$ is a weighted graph $H'=(V,E',W')$ where $E'\subseteq V$,
$W'(e)=W(e)$ for all $e\in E'$, and $\Wd_{H'}(u,v)\le k\cdot\Wd_H(u,v)$ for
all $u,v\in V$ (where $\Wd_H$ and $\Wd_{H'}$ denote weighted distances in $H$
and $H'$, respectively).
\end{definition}
We shall compute a spanner of the skeleton graph, while running
on the underlying physical graph, without ever
constructing the skeleton graph explicitly. We do this by
simulating the spanner construction algorithm of Baswana and
Sen~\cite{baswana07} on the implicit skeleton graph. Let us recall the
algorithm of~\cite{baswana07}; we use a slightly simpler variant that may select
some additional edges, albeit without affecting the probabilistic upper bound on
the number of spanner edges (cf.~\lemmaref{lem-reduce}). The input is a graph
$H=(V_H,E_H,W_H)$ and a parameter $k\in \N$.
\vspace*{1ex}\hrule
\begin{compactenum}
\item Initially, each node is a singleton \emph{cluster}:
$R_1:=\Set{\Set{v}\mid v\in V_H}$.
\item Repeat $k-1$ times (the $i^{th}$ iteration is called ``phase $i$''):
\begin{compactenum}
\item Each cluster from $R_i$ is \emph{marked} independently
with probability $|V_H|^{-1/k}$. $R_{i+1}$ is defined to be the set of
clusters marked in phase $i$.
\item \label{bs-first} If $v$ is a node in an unmarked cluster:
\begin{compactenum}
\item Define $Q_v$ to be the set of edges that consists of the lightest edge
from $v$ to each of the clusters $v\in R_i$ is adjacent to.
\item If $v$ has no adjacent marked cluster, then $v$ adds to the
spanner all edges in $Q_v$.
\item Otherwise, let $u$ be the closest neighbor of $v$ in a
marked cluster. In this case $v$ adds to the spanner the edge
$\{v,u\}$, and also all edges $\{v,w\}\in Q_v$ with $(W_H(v,w),w)<(W_H(v,u),u)$
(i.e., the identifiers $w,u$ break symmetry in case $W_H(v,w)=W_H(v,u))$.
Furthermore $v$ \emph{joins} the cluster of $u$ (i.e., if $u$ is in cluster $X$,
then $X:=X\cup \{v\}$).
\end{compactenum}
\end{compactenum}
\item \label{bs-final}
Each node $v$ adds, for each cluster $X\in R_k$ it is adjacent to, the lightest
edge connecting it to $X$.
\end{compactenum}\smallskip
\hrule\medskip
For this algorithm, Baswana and Sen prove the following result.
\begin{theorem}[\cite{baswana07}]
\label{thm-bs}
Given a weighted graph $H=(V_H,E_H,W_H)$ and an integer $k\ge 1$, the
algorithm above computes a $(2k-1)$-spanner of the graph. It has
$\BO(k|V_H|^{1+1/k}\log n)$ edges w.h.p.%
\footnote{In \cite{baswana07}, it is proved that the expected number of edges is
$\BO(k|V_H|^{1+1/k})$. The modified bound directly follows from
\lemmaref{lem-reduce}.}
\end{theorem}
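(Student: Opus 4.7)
My plan is to prove Theorem~3.2 in two independent parts: the stretch guarantee $(2k-1)$ and the edge-count bound $\BO(k|V_H|^{1+1/k}\log n)$ w.h.p.

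\textbf{Stretch.} For the stretch, I would follow the standard Baswana--Sen cluster-growing argument. The key invariant is that, for every phase $i\in\{1,\ldots,k-1\}$ and every cluster $X\in R_{i+1}$, the spanner induced on $X$ already contains paths from its ``center'' $x_0\in R_1$ (the original singleton cluster seed) to every other $v\in X$ of weight at most the sum of the joining-edge weights $v$ used in phases $1,\ldots,i$. I would then fix an arbitrary edge $\{u,v\}\in E_H$ of weight $w$ and consider the minimal phase $i^\star\le k$ at which either $u$ or $v$ lies in an unmarked cluster (with phase $k$ corresponding to step~\ref{bs-final}). At that phase, the selection rule guarantees that either the edge $\{u,v\}$ itself, or an edge of weight at most $w$ that connects $u$'s cluster to $v$'s cluster in $R_{i^\star}$, is added to the spanner. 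Combining this crossing edge with the two intra-cluster paths (each of spanner-weight bounded inductively by edges of weight $\le w$, since joining edges in earlier phases were always chosen to be the lightest edge to the closest marked cluster), one gets a spanner path of total weight at most $(2(i^\star-1)+1)w\le(2k-1)w$. This covers all edges of $H$ and so bounds all pairwise distances.

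\textbf{Edge count in expectation.} I would first reproduce the expectation bound $\BO(k|V_H|^{1+1/k})$. In phase $i$, each unmarked node $v$ orders its neighboring $R_i$-clusters by the weight of the lightest edge to each, then scans them in increasing order until it meets a marked cluster. Each cluster in $R_i$ is independently marked with probability $|V_H|^{-1/k}$, so the number of clusters examined by $v$ before meeting a marked one is stochastically dominated by a geometric random variable with mean $|V_H|^{1/k}$; every spanner edge $v$ contributes in phase $i$ corresponds to one such examined cluster. Summing over all $v$, all $k-1$ phases, and adding the contribution from step~\ref{bs-final} (where the expected number of clusters in $R_k$ is $|V_H|^{1/k}$ and each node adds one edge per adjacent such cluster), yields the claimed expectation.

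\textbf{From expectation to high probability.} The main obstacle is that the per-phase edge counts are sums of dependent random variables (the marking pattern of a single high-level cluster influences all of its boundary nodes simultaneously), so a direct Chernoff bound on the overall count is not available. This is precisely where I invoke \lemmaref{lem-reduce}, which the paper uses to strengthen the Baswana--Sen expectation bound to a w.h.p.\ bound, presumably by bounding the number of spanner edges any single node contributes in a single phase by $\BO(|V_H|^{1/k}\log n)$ w.h.p.\ via a tail bound on the truncated geometric variable, and then using the union bound over the $|V_H|$ nodes and $k$ phases. Applied as a black box, this immediately upgrades the expectation bound to $\BO(k|V_H|^{1+1/k}\log n)$ with high probability, completing the proof of Theorem~3.2.
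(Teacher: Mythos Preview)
Your proposal is correct and matches the paper's approach. The paper does not give a self-contained proof of this theorem: the stretch bound and the expectation bound on the edge count are simply cited from~\cite{baswana07}, and the footnote states that the w.h.p.\ bound ``directly follows from \lemmaref{lem-reduce}.'' Your plan fills in exactly these pieces---sketching the standard cluster-growing stretch argument, the geometric-variable expectation bound, and then invoking \lemmaref{lem-reduce} to cap each node's per-phase contribution at $\BO(|V_H|^{1/k}\log n)$ w.h.p.---so it is the same route, just with the cited steps expanded. One small clarification: your guess about \lemmaref{lem-reduce} is right in spirit, but its actual statement is that w.h.p.\ the algorithm's output is unchanged if each node only considers its $c|V_H|^{1/k}\log n$ closest clusters; the per-node edge bound you want is an immediate corollary, and the proof is a direct calculation $(1-|V_H|^{-1/k})^{c|V_H|^{1/k}\log n}=n^{-\Omega(c)}$ rather than a geometric tail bound per se.
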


\subsubsection*{Constructing the Skeleton Graph}

In our case, each edge considered in Steps \eqref{bs-first} and \eqref{bs-final}
of the spanner algorithm corresponds to a shortest path. Essentially, we
implement these steps in our setting by letting each skeleton node find its
closest $\BO(|S|^{1/k}\log n)$ clusters (w.h.p.)\ by running Algorithm $\BSP$.
We now explain how. First, all nodes $v$ in a cluster $X$ use the same source
identifier $\textrm{source}(v)=X$ (as if they were connected by a $0$-weight
edge to a virtual node $X$). This ensures that the overlap parameter needs to
account for the number of detected \emph{clusters} only, i.e., the number of
nodes per cluster is immaterial. Note that this implies that the plain version
of Algorithm $\BSP$ thus will not permit to determine to which node a skeleton
edge connects; hence we append to each communicated triple $(d,s,\Next)$ the
identifier of the actual endpoint $u\in SN(s)$ of the respective path and store
it when adding a corresponding triple to $L_v$ (without otherwise affecting the
algorithm). We refer to the modified algorithm as $\BSP'$. Second, regarding the
range parameter, \lemmaref{lemma:distances} shows that it is sufficient to
consider paths of $\BO(n\log n/|S|)$ hops only. Finally, the following lemma
implies that we may modify the spanner construction algorithm in a way that
allows us to use a small overlap parameter.

\begin{lemma}\label{lem-reduce}
  W.h.p., the execution of the centralized spanner construction
  algorithm yields identical results if in Steps~(\ref{bs-first})
  and~(\ref{bs-final}), each node considers the lightest edges to the
  $c|V_H|^{1/k}\log n$ closest clusters only (for a sufficiently large
  constant $c>0$).
\end{lemma}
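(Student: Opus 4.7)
My plan is to argue, via a union bound over all nodes and all decision points of the algorithm, that the modification preserves the output of each step with probability at least $1-n^{-c'}$, where $c'$ can be tuned as large as desired by choosing $c$ large enough. Steps~(\ref{bs-first}) and~(\ref{bs-final}) will require somewhat different probabilistic arguments.

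For Step~(\ref{bs-first}) in phase $i$, I would observe that the behavior of a node $v$ in an unmarked cluster is fully determined by the closest adjacent marked (i.e.\ $R_{i+1}$) cluster together with the $R_i$-adjacent clusters strictly lighter than it. The plan is to split into two cases. If $v$ has at most $c|V_H|^{1/k}\log n$ $R_i$-adjacent clusters, the restricted view is trivially complete. Otherwise, the $c|V_H|^{1/k}\log n$ closest such clusters each independently land in $R_{i+1}$ with probability $|V_H|^{-1/k}$, so the event ``none of them is marked'' has probability at most $(1-|V_H|^{-1/k})^{c|V_H|^{1/k}\log n}\leq n^{-c}$. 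On the complementary event, both the closest marked cluster and every strictly lighter $R_i$-cluster already lie in the view, so the edges emitted match those of the original rule.

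Step~(\ref{bs-final}) is structurally different, because the output requires $v$ to see \emph{every} adjacent $R_k$-cluster rather than just the closest marked one. Here I would switch to a global bound on $|R_k|$: a cluster survives to $R_k$ iff it is marked in all $k-1$ phases, so $|R_k|$ is a sum of $|V_H|$ independent Bernoullis with mean $|V_H|^{1/k}$, and a Chernoff estimate yields $|R_k|\leq c|V_H|^{1/k}\log n$ w.h.p. Since the number of adjacent $R_k$-clusters at any node is trivially bounded by $|R_k|$, the restricted view is complete at every node simultaneously.

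A union bound over the $|V_H|\leq n$ nodes and the $\BO(k)\subseteq \BO(\log n)$ decision points then yields the claim. The main subtlety I anticipate is the bookkeeping of conditional independence: the markings of $R_i$-clusters into $R_{i+1}$ are mutually independent only conditional on the realization of $R_i$, which is itself random and depends on all prior phases. This causes no genuine difficulty, since each estimate above is established conditional on the outcomes of the preceding phases, and in particular on $R_i$; the conditional independence of markings within a single phase is all that the tail bounds require.
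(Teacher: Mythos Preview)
Your proposal is correct and follows essentially the same approach as the paper: the same case split for Step~(\ref{bs-first}) with the probability bound $(1-|V_H|^{-1/k})^{c|V_H|^{1/k}\log n}$ that one of the closest clusters is marked, and the same Chernoff argument bounding $|R_k|$ globally for Step~(\ref{bs-final}). Your explicit treatment of the conditional-independence bookkeeping is a welcome addition that the paper leaves implicit.
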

\begin{proof}
Fix a node $v$ and a phase $1\le i<k$. If $v$ has at most $c|V_H|^{1/k}\log n$
adjacent clusters, the lemma is trivially true. So suppose that $v$ has more
than $c|V_H|^{1/k}\log n$ adjacent clusters. By the specification of
Step~(\ref{bs-first}), we are interested only in the clusters closer than the
closest marked cluster. Now, the probability that none of the closest
$c|V_H|^{1/k}\log n$ clusters is marked is $(1-|V_H|^{-1/k})^{c|V_H|^{1/k}\log
n}\in n^{-\Omega(c)}$. In other words, choosing a sufficiently large constant
$c$, we are guaranteed that w.h.p., at least one of the closest
$c|V_H|^{1/k}\log n$ clusters is marked. Regarding Step~(\ref{bs-final}),
observe that a cluster gets marked in all of the first $k-1$ iterations with
independent probability $|V_H|^{-(k-1)/k}$. By Chernoff's bound, the probability
that more than $c|V_H|^{1/k}\log n$ clusters remain in the last iteration is
thus bounded by $2^{-\Omega(c\log n)}=n^{-\Omega(c)}$. Therefore, w.h.p.\ no
node is adjacent to more than $c|V_H|^{1/k}\log n$ clusters in
Step~(\ref{bs-final}), and we are done.
\end{proof}
As a consequence of \lemmaref{lem-reduce}, we may invoke Algorithm $\BSP$ with
$\Delta\in\Theta(|S|^{1/k}\log n)$, and the time complexity of the invocation is
$\BO(|V|\cdot|S|^{-1+1/k}\log^2n)$. Detailed pseudo-code of our implementation
is given in \algref{algo:skeleton}. Each skeleton node $v\in S$ records the ID
of its cluster in phase $i$ as $F_i(v)$; nodes in $V\setminus S$ or those who do
not join a cluster in some round $i$ have $F_i(v)=\bot$. \algref{algo:edges} is
used as subroutine to implement Steps \eqref{bs-first} or \eqref{bs-final}
(Lines \ref{sk-first} or \ref{sk-final} of \algref{algo:skeleton},
respectively).

\begin{algorithm}[t!]
\small
\SetKwInOut{Input}{input}
\SetKwInOut{Output}{output}
\Input{
  $S$: set of skeleton nodes\\
  $k$: integer in $[1,\log n]$\REM{determines approximation
  ratio and number of spanner edges}
}
\Output{
  $E_{h,k}$: spanner edges of skeleton graph
  \REM{$h$ is defined in  \lineref{spanner-h}}\\
  $W_{h,k}:E_{h,k}\to \R^+$ \REM{weights of spanner edges}
}
$R_1:=\Set{\Set{w}\mid w\in S}$\REM{initial clusters are singletons of $S$}\\
\nllabel{st-bcast}
Broadcast $R_1$ to all nodes\\
\lForEach{$w\in V$}{\lIf{$w\in R_1$}{$F_1(w):=w$~}\lElse{$F_1(w):=\bot$}}
\REM{initializing leaders}\\
$h:=c\cdot n\log n/|S|$\REM{the constant $c$ controls the probability of
failure}
\label{spanner-h}\\
$\Delta:= c\cdot|S|^{1/k}\log n$\\
\For{$i:=1$ \emph{\KwTo} $k-1$}{
  $R_{i+1}:=$ uniformly random subset of $R_i$ of size
  $|S|^{1-i/k}=|R_i|/|S|^{1/k}$ \REM{select marked clusters}\nllabel{st-rand}\\
  Broadcast $R_{i+1}$ to all nodes\nllabel{st-bcast2}\\
  $(E(i),W(i)):=\text{edges}(F_i,R_{i+1},h,\Delta)$
  \REM{select spanner edges, phase $i$}\nllabel{sk-first}\\
  \ForEach{$w\in V$\nllabel{st-comp1}}{
    \If{$F_i(w)\in R_{i+1}$}{
      $F_{i+1}(w):=F_i(w)$
    }
    \Else{
      Let $E_w$ be the edges incident to $w$ in $E(i)$\\
      \If{$E_w\neq \emptyset$}{
        Let $\{w,u\}$ be the heaviest edge in $E_w$\\
        \lIf{$\text{marked}(u)$}{$F_{i+1}(w):=F_i(u)$}
      }
      \lElse{$F_{i+1}(w):=\bot$\nllabel{st-comp2}}
    }
    Broadcast $F_{i+1}$ to all nodes\nllabel{st-bcast3}
  }
}
$(E(k),W(k)):=\text{edges}(F_k,\emptyset,h,\Delta)$ \REM{final phase}
\nllabel{sk-final}\\
\lForEach{$e\in \bigcup_{i=1}^{k}E(i)$} {$W_{h,k}(e):=W(k)(e)$}\\
Broadcast $E_{h,k}:=\bigcup_{i=1}^{k}E(i)$ and $W_{h,k}$ to all
nodes\nllabel{sk-bcast}
\caption{Construction of long range routing skeleton at $v\in V$.}
\label{algo:skeleton}
\end{algorithm}

\begin{algorithm}[ht!]
\small
\SetKwInOut{Input}{input}
\SetKwInOut{Output}{output}
\Input{
  $F: V\to V\cup\{\bot\}$ \REM{locally known, $v$'s leader if $v$ is in a
  cluster, otherwise $\bot$}\\
  $R\subseteq V$ \REM{globally known, indicates (identifiers of leaders of)
  marked clusters}\\
  $h$ \REM{globally known, depth parameter of the search}\\
  $\Delta$ \REM{globally known, number of closest source clusters to detect}
}
\Output{
  $E_+$: edges added to the spanner\\
  $W_+:E_+\to \R^+$ edge weights
}
\ForEach{$w\in V$}{
  $\Src(w):=\left\{\begin{matrix}
    (F(v),0) & \text{if }F(v)\notin R\cup \{\bot\}\\
    (F(v),1) & \text{if }F(v)\in R\hfill\\
    \bot     & \text{else}\hfill
  \end{matrix}\right.\REM{distinguish marked from unmarked clusters}$
}
$L_v := \BSP'(h,\Delta,\Src)$\REM{variant of Algorithm $\BSP$ that keeps track
of path endpoints}\nllabel{ed-bsp}\\
$E_+:=\emptyset$\\
\If{$F_v\notin R\cup \{\bot\}$}{
  //\textit{for each entry $(\Wd,(f,b),u,w)\in L_v$, $u$ is the next hop on a
  path of weight $\Wd$ to $w$ in cluster $f$}\\
  $L_v:=L_v\setminus \Set{(0,(F(v),0),v,v)}$
  \REM{remove loops (clusters are in distance $0$ of themselves)}\\
  //\textit{recall that $L_v$ is ordered; first entry with
  $b=1$ corresponds to closest marked cluster}\\
  \ForEach{$(\Wd,(f,b),u,w)\in L_v$}{
    broadcast $(\Wd,\{v,w\})$ \REM{all nodes perform
      operation!}\nllabel{ed-bcast}\\
    $E_+:=E_+\cup \{v,w\}$ \nllabel{line:add}\\
    $W_+(\{v,w\}):=\Wd$\\
    \If{$f\in R$}{
      break \REM{$f$ is closest marked cluster}\nllabel{add-edges-end}
    }
  }
}
\Return$(E_+,W_+)$
\caption{$\text{edges}$: Edge detection and announcement for long
  range routing skeleton at $v\in V$.}\label{algo:edges}
\end{algorithm}

To prove the algorithm correct, we show that its executions can be mapped to
executions of the centralized algorithm, and then apply \theoremref{thm-bs}.
Below, we sketch the main points of such a mapping. The implementation of
\algref{algo:skeleton} is quite straightforward. Note that the broadcast steps
in Line \ref{st-bcast}, \ref{st-bcast2}, and \ref{st-bcast3} ensure that all
nodes know the clusters and which are the active clusters in each phase. The
random choices (\lineref{st-rand}) are made by cluster leaders, namely the nodes
$v$ for which $F_i(v)=v$. Lines \ref{st-comp1}--\ref{st-comp2} are local
computations each node does to get a global picture of the clusters
for the next phase. The correctness of the implementation of the edge selection
of Steps \ref{bs-first} and \ref{bs-final} of the centralized algorithm by 
\algref{algo:edges} was discussed above. We summarize with the following lemma.

\begin{lemma}\label{lemma:spanner}
Suppose the set $S$ input to \algref{algo:skeleton} contains a uniformly random
subset $S_R$ of $V$ and set $h(S_R)\DEF c \cdot n \log n/|S_R|$ for a
sufficiently large constant $c$. Then w.h.p.\ the following holds.
\begin{compactitem}
\item[(i)] \algref{algo:skeleton} computes a weighted $(2k-1)$-spanner of
the skeleton graph $G_{S,h(S_R)}$ that is known at all nodes and has
$\BO(|S|^{1+1/k}\log n)$ edges.
\item[(ii)] The weighted distances between nodes in $S$ are identical in
$G_{S,h(S_R)}$ and $G$.
\item[(iii)] The algorithm terminates in
$\tilde{\BO}(n/|S_R|^{1-1/k}+|S|^{1+1/k}+\HD)$ rounds.
\end{compactitem}
\end{lemma}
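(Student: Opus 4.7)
The plan is to prove the three claims essentially simultaneously by showing that \algref{algo:skeleton} is a faithful distributed simulation of the centralized Baswana--Sen construction running on the (virtual) skeleton graph $G_{S,h(S_R)}$, so that \theoremref{thm-bs} directly yields (i), while (ii) and (iii) follow from the analysis of the subroutines used.

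For part (ii), I would simply invoke \lemmaref{lemma:distances} with the sampling probability $\pi$ that generates $S_R$. By a standard Chernoff bound, $|S_R|\in\Theta(\pi n)$ w.h.p., so the choice $h(S_R)=cn\log n/|S_R|$ satisfies the hypothesis $\pi\geq c'\log n/h$ of \lemmaref{lemma:distances} for sufficiently large $c$. This immediately gives $\Wd_{S,h(S_R)}(v,w)=\Wd(v,w)$ for all $v,w\in S$.

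For part (i), the correspondence argument has several ingredients. The broadcasts in Lines~\ref{st-bcast}, \ref{st-bcast2}, and~\ref{st-bcast3} ensure that at every phase $i$ all nodes share the current clustering $F_i$ and the marked set $R_{i+1}$, so local choices made by nodes can be emulated globally. Inside \algref{algo:edges}, the source function assigns the same identifier to all nodes of a given cluster (with an extra flag distinguishing marked from unmarked clusters), which makes $\BSP'$ find, for each skeleton node $v$ in an unmarked cluster, the $\Delta$ closest adjacent clusters in $G_{S,h}$; the auxiliary endpoint stored by $\BSP'$ determines which concrete skeleton endpoint realizes each cluster-edge. By (ii), paths of $h$ hops in $G$ realize the distances in $G_{S,h(S_R)}$ between skeleton nodes, so these edges really are the lightest edges to the $\Delta$ closest clusters in the skeleton graph. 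By \lemmaref{lem-reduce}, with $\Delta=c|S|^{1/k}\log n$, w.h.p.\ this truncation does not change which edges Baswana--Sen would have selected in Steps~(\ref{bs-first}) and~(\ref{bs-final}): the loop in \algref{algo:edges} walks through $L_v$ in lexicographic order, stops at the first marked cluster (the closest one), and keeps precisely the edges to clusters that are strictly closer, matching the centralized rule. Applying \theoremref{thm-bs} then yields the $(2k-1)$ stretch and the $\tilde\BO(|S|^{1+1/k})$ bound on the number of edges w.h.p. The broadcast in \lineref{ed-bcast} and the final broadcast in \lineref{sk-bcast} make $E_{h,k}$ and its weights globally known.

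For part (iii), I would sum three contributions: (a) the $k$ cluster-broadcast rounds, each of size $\tilde\BO(|S|)$ and thus costing $\tilde\BO(|S|+\HD)$; (b) the $k$ calls to $\BSP'$, each running in $\BO(h\Delta)=\tilde\BO(n|S|^{1/k}/|S_R|)$ rounds by \theoremref{thm:bsp}, so $k\leq\log n$ phases give $\tilde\BO(n|S|^{1/k}/|S_R|)$ in total; and (c) the edge broadcasts, whose total volume is $\tilde\BO(|S|^{1+1/k})$ by the edge-count bound from (i), contributing $\tilde\BO(|S|^{1+1/k}+\HD)$. Summing and using the elementary bound $n|S|^{1/k}/|S_R|\leq n/|S_R|^{1-1/k}+|S|^{1+1/k}$ (which holds by balancing the two terms, since either $|S|\leq |S_R|$, giving the first, or the second term dominates) yields the stated $\tilde\BO(n/|S_R|^{1-1/k}+|S|^{1+1/k}+\HD)$ bound. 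The main technical obstacle is the correspondence argument in part (i): one has to verify not only that the correct edges are selected, but also that the ordering semantics in \algref{algo:edges} (closer unmarked clusters kept, first marked cluster terminating the loop) exactly reproduces the centralized tie-breaking rule of Baswana--Sen; everything else is a careful but routine combination of \lemmaref{lemma:distances}, \lemmaref{lem-reduce}, \theoremref{thm:bsp}, and \theoremref{thm-bs}.
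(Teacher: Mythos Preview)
Your approach matches the paper's almost exactly: part~(i) via the simulation argument plus \lemmaref{lem-reduce} and \theoremref{thm-bs}, part~(ii) via \lemmaref{lemma:distances}, and part~(iii) by summing the costs of the $\BSP'$ calls and the broadcasts.

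There is one genuine problem, however. Your final ``elementary bound'' $n|S|^{1/k}/|S_R|\leq n/|S_R|^{1-1/k}+|S|^{1+1/k}$ is false in general: take $k=1$, $|S_R|=1$, $|S|=2$, and $n$ large, so the left side is $2n$ while the right side is $n+4$. Your justification (``either $|S|\leq|S_R|$ \ldots'') cannot work because $S\supseteq S_R$ forces $|S|\geq|S_R|$ always, and in the other direction nothing guarantees $|S|\geq n/|S_R|$. In fact the paper's own proof stops at the bound $\tilde{\BO}(n|S|^{1/k}/|S_R|+|S|^{1+1/k}+\HD)$ and never converts it to the form stated in the lemma; \theoremref{theorem:spanner} and \corollaryref{coro:gsf_impl} both quote the $n|S|^{1/k}/|S_R|$ form, so the discrepancy appears to be a typo in the lemma statement. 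When $S=S_R$ (the main routing application) the two expressions coincide, which is presumably why the slip went unnoticed. You should simply stop at $\tilde{\BO}(n|S|^{1/k}/|S_R|+|S|^{1+1/k}+\HD)$ and note the inconsistency.

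A minor point on part~(ii): the paper separately handles the degenerate case $h(S_R)\geq n-1$ (equivalently $|S_R|\leq c\log n$), where $G_{S,h(S_R)}=G_{S,n-1}$ trivially preserves distances; your Chernoff argument tacitly assumes $|S_R|$ is large enough for concentration, so you should add this one-line case distinction.
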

\begin{proof}
To prove Statement (i), we note that \algref{algo:skeleton} simulates
the centralized algorithm, except for considering only the closest
$\BO(|S|^{1/k}\log n)$ clusters in Lines \ref{sk-first} and
\ref{sk-final}. By \lemmaref{lem-reduce} and by \theoremref{thm:bsp},
this results in a (simulated) correct execution of the centralized algorithm
w.h.p. Hence Statement (i) follows from \theoremref{thm-bs}.

Regarding Statement (ii), observe that if $h(S_R)\geq n-1$, the statement holds by
definition since shortest paths cannot contain cycles and thus
$G_{S,h(S_R)}=G_{S,n-1}$. Otherwise, we have that $|S_R|\geq c\cdot \log n$,
implying by Chernoff's bound that w.h.p., the probability to select a node into
$S$ is $\pi\in \Theta(|S_R|/n)=\Theta((c\log n)/h(S_R))$. As by assumption $c$
is sufficiently large, Statement (ii) now follows from
\lemmaref{lemma:distances}.

For Statement (iii), consider first an invocation of \algref{algo:edges}. By
\theoremref{thm:bsp}, the invocation of \algref{alg:bsp} in \lineref{ed-bsp}
takes $\BO(\Delta h)\subset
\tilde{\BO}(|S|^{1/k}h(S_R))=\tilde{\BO}(n|S|^{1/k}/|S_R|)$ rounds.
The broadcast of \lineref{ed-bcast} is done globally. Each skeleton node may
communicate up to $\BO(|S|^{1/k}\log n)$ pieces of information for a total of
$\tilde{\BO}(|S|^{1+1/k})$ items. Doing this over a global BFS tree
takes $\tilde{\BO}(\HD+|S|^{1+1/k})$ rounds. As $k\leq \log n$, the total cost
of all invocations of \algref{algo:edges} is thus bounded by
$\tilde{\BO}(n|S|^{1/k}/|S_R|+|S|^{1+1/k}+\HD)$ rounds. Consider now
\algref{algo:skeleton}. The only non-local steps other than the invocations of
\algref{algo:edges} are the broadcasts, of which the most time consuming is the
one in \lineref{sk-bcast}, which takes $\BO(k|S|^{1+1/k}\log n+\HD)\subset
\tilde{\BO}(|S|^{1+1/k}+\HD)$ rounds.
\end{proof}


\subsubsection*{Routing on the Skeleton Graph}

\algref{algo:skeleton} constructs a $(2k-1)$-spanner of the skeleton graph and
made it known to all nodes. This enables each node to determine low-stretch
routing paths between any two skeleton nodes in $G_{S,h(S_R)}$ by local
computation. To use this information, we must be able, for each spanner edge
$\{s,t\}\in E_{S,h(S_R)}$, to route on a corresponding path in $G$, i.e., a path
of weight $W_{S,h(S_R)}(s,t)$. Since we rely on Algorithm~$\BSP$ during the
construction of the spanner, \theoremref{thm:bsp} shows that we can use the
computation to enable for each such edge to route from $s$ to $t$ \emph{or} from
$t$ to $s$: if, say, $s$ added the edge to the spanner, then following the
pointers computed during the execution of Algorithm~$\BSP$ yields a path of
weight $W_{S,h(S_R)}(s,t)$ from $s$ to $t$. However, in this case  $t$ might not
add $\{s,t\}$ to the spanner as well, and hence there is no guarantee that we
have sufficient information to route in \emph{both} directions.\footnote{Note
that unidirectionality is not an artifact of the specific implementation we
picked. E.g., in a star graph, the center has degree $n-1$, as it does in the
spanner. Hence we cannot expect the Bellmann-Ford pointers to give sufficient
information for bidirectional routing without further processing.} To resolve
this issue, we add a post-processing step where we ``reverse'' the
unidirectional routing paths, i.e., inform the nodes on the paths about their
predecessors. Note that this cannot be done in a purely local manner, as
exchanging the Bellmann-Ford routing pointers between neighbors will not tell a
node $s\in S$ which pointer to follow to reach a specific node $t\in S$ for
which $\{s,t\}$ is part of the spanner. However, \corollaryref{thm:bsp} states
that the (unidirectional) routing paths at our disposal have at most $h(S_R)$
hops. Taking into account that the spanner has few edges, it follows that
establishing bidirectional routing pointers can be performed sufficiently fast.

\begin{lemma}\label{lemma:spanner_routing}
Let $\{s,t\}$ be an edge of the spanner $G_{S,h(S_R)}$ that is selected by
\algref{algo:skeleton}.  W.h.p., after completing the algorithm, 
each node 
on the least-weight $s$--$t$ path of at most $h(S_R)$ hops in $G$  determine
the next hop on this path and the weight of the remaining subpath when routing
from $s$ to $t$ within $\tilde{\BO}(n/|S_R|+|S|^{1+1/k})$ rounds
\end{lemma}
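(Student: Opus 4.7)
The plan is to install explicit per-spanner-edge routing pointers via a post-processing flood that reuses the path structure computed by $\mathrm{BSP}'$. For each spanner edge $\{s,t\}$ that was added by $s$ during some invocation of \algref{algo:edges}, node $s$ emits a setup message $(s,t,W_{h,k}(s,t),0)$ whose last entry is a hop counter. The message is forwarded hop-by-hop using the rule of \lemmaref{lemma:bsp_route_stateful}: at the $i$-th hop the current node $u_i$ looks up its entry for the cluster containing $t$ in $L_{u_i}(h(S_R)-i)$ and forwards to the recorded next hop, decrementing the remaining weight and incrementing the counter. When the message arrives at $u_i$ from $u_{i-1}$, node $u_i$ records, under the key $(s,t)$, the back-pointer $u_{i-1}$ (the next hop to take when routing from $s$ to $t$ enters $u_i$) together with the current remaining weight. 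After $\ell\le h(S_R)$ hops the message reaches $t$, so every node along the $s$-$t$ path has installed the required information.

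For the running time, first consider a single source $s\in S$: it issues at most $\Delta\in\tilde{\BO}(|S|^{1/k})$ setup messages, one per incident spanner edge that $s$ added, and these messages share prefixes in a shortest-path tree of depth $h(S_R)$ rooted at $s$. Pipelining $s$'s flood thus completes in $\BO(h(S_R)+\Delta)$ rounds. The challenge is to bound the combined time when all $s\in S$ flood in parallel: messages from different sources may collide on shared edges of $G$. The key structural observation is that along any directed edge $(u,u')$ of $G$ a setup message is forwarded over $u\to u'$ only if $u$'s $\mathrm{BSP}'$ next hop for the message's target cluster equals $u'$; since $u$ tracks only $\Delta$ cluster entries, the outgoing spread at any node is strictly controlled, which facilitates scheduling.

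The main obstacle is avoiding the naive bound $\BO(h(S_R)\cdot|E_{h,k}|)$ and establishing the claimed $\tilde{\BO}(n/|S_R|+|S|^{1+1/k})$ instead. I would appeal to a standard pipelined packet-routing argument (in the spirit of Leighton-Maggs-Rao): since the set of setup messages and their paths is fixed in advance, the total time is $\BO(\mathrm{dilation}+\mathrm{congestion})$ up to polylogarithmic factors. Dilation is bounded by $h(S_R)\in\tilde{\BO}(n/|S_R|)$, matching the first term, while the total number of messages, and hence the worst-case per-edge congestion, is bounded by $|E_{h,k}|\in\tilde{\BO}(|S|^{1+1/k})$, matching the second. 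Combining the two summands yields the claimed bound.
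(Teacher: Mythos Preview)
Your approach coincides with the paper's: send one setup message per selected spanner edge along its $\mathrm{BSP}'$ path, have each intermediate node record the sender and the residual weight, and bound the running time by a congestion-plus-dilation argument with dilation $h(S_R)\in\tilde{\BO}(n/|S_R|)$ and congestion at most the number of spanner edges $\tilde{\BO}(|S|^{1+1/k})$ (the paper invokes a scheduling result specific to shortest-path routing, citing~\cite{MP-91}, rather than Leighton--Maggs--Rao, but either suffices for the $\tilde{\BO}$ bound). One small slip in your write-up: the back-pointer $u_{i-1}$ recorded at $u_i$ is the next hop toward $s$, not ``the next hop to take when routing from $s$ to $t$''---it is precisely what enables the reverse direction $t\to s$, while $s\to t$ was already available from the $\mathrm{BSP}'$ tables.
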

\begin{proof}
For each edge $\{s,t\}$ added to the spanner by a node $s$, we route a message
on the shortest path of at most $h(S_R)$ hops from $s$ to $t$ in $G$. This
message initially contains the weight of the path, and each node on the path
subtracts the weight of the incoming path from this value. By
\theoremref{thm:bsp} this is feasible. When a node receives the message, it
records the immediate sender as the next hop on the path to $s$ and the weight
for future reference. By \lemmaref{lemma:spanner}, there are at most
$\tilde{\BO}(|S|^{1+1/k})$ edges in the constructed spanner of $G_{S,h(S_R)}$
w.h.p., implying that the maximal number of messages routed over each edge of
$G$ is bounded by $\tilde{\BO}(|S|^{1+1/k})$ w.h.p.\ as well. Moreover, no
routing path has more than $h(S_R)=c\cdot n \log n /|S_R|$ hops. Since the
messages traverse shortest $h$-hop paths, all of them reach their destinations
within the stated number of rounds~\cite{MP-91}.
\end{proof}

We now summarize the properties of the long-distance scheme. 
\begin{theorem}\label{theorem:spanner}
Suppose the set $S$ input to \algref{algo:skeleton} is a superset of a
uniformly random subset $S_R\subseteq V$ and $k\in \{1,\ldots,\log
n\}$. Then, w.h.p., 
within $\tilde{\BO}(n|S|^{1/k}/|S_R|+|S|^{1+1/k}+\HD)$ rounds, there are
routing tables for routing between nodes in $S$ with stretch $(2k-1)$.
\end{theorem}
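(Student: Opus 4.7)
\medskip

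\noindent\textbf{Proof plan.} The plan is to combine the spanner construction (\lemmaref{lemma:spanner}) with the bidirectional path reconstruction (\lemmaref{lemma:spanner_routing}) and derive the routing tables by purely local computation at each skeleton node from the globally broadcast spanner.

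First, I would invoke \algref{algo:skeleton}. By \lemmaref{lemma:spanner}(i), this produces, w.h.p., a $(2k-1)$-spanner $H$ of the skeleton graph $G_{S,h(S_R)}$ with $\tilde{\BO}(|S|^{1+1/k})$ edges, and (thanks to the broadcast in \lineref{sk-bcast}) its full edge set and weights are known at every node in $V$. Statement (ii) of the same lemma ensures that distances in $G_{S,h(S_R)}$ coincide with distances in $G$ between skeleton nodes. Second, I would apply \lemmaref{lemma:spanner_routing} to set up, for every spanner edge $\{s,t\}\in E(H)$, bidirectional next-hop pointers along the underlying $\le h(S_R)$-hop path in $G$ realizing the edge weight $W_{S,h(S_R)}(s,t)$. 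After these two steps, every node locally has (a) the entire topology of $H$ and (b) the ability to relay along the physical path of any edge of $H$ in either direction.

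The routing table at $s\in S$ is then defined purely locally: given a destination label $t\in S$, node $s$ computes a shortest path $s=s_0,s_1,\ldots,s_\ell=t$ in $H$ (done offline in local memory) and sets $\Next_s(t)$ to be the first physical-graph neighbor along the reconstructed path realizing the spanner edge $\{s_0,s_1\}$; intermediate physical nodes forward according to the pointers installed by \lemmaref{lemma:spanner_routing}, and when the packet reaches $s_1$ the procedure repeats. The stretch analysis is immediate: the weight of the concatenated physical path equals $\sum_{i=0}^{\ell-1} W_{S,h(S_R)}(s_i,s_{i+1})=\Wd_H(s,t)\le (2k-1)\Wd_{G_{S,h(S_R)}}(s,t)=(2k-1)\Wd(s,t)$, where the last equality uses \lemmaref{lemma:spanner}(ii).

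For the running time I simply add the bounds of the two lemmas: \lemmaref{lemma:spanner} contributes $\tilde{\BO}(n|S|^{1/k}/|S_R|+|S|^{1+1/k}+\HD)$ rounds, \lemmaref{lemma:spanner_routing} contributes $\tilde{\BO}(n/|S_R|+|S|^{1+1/k})$ rounds, and the final local shortest-path computations take no communication at all, giving the claimed total. The main conceptual obstacle is already absorbed into \lemmaref{lemma:spanner_routing}, namely that the pointers produced by $\BSP$ during the Baswana--Sen simulation are inherently unidirectional (the endpoint that added a spanner edge knows the path, but the other endpoint generally does not); once that reversal step is in place, the rest is essentially bookkeeping plus a direct invocation of \theoremref{thm-bs} via the $(2k-1)$-spanner guarantee.
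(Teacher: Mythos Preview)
Your proposal is correct and follows exactly the paper's approach: the paper's proof is the single line ``Directly follows from Lemmas~\ref{lemma:spanner} and~\ref{lemma:spanner_routing},'' and your write-up simply spells out how those two lemmas combine (global knowledge of the $(2k-1)$-spanner plus bidirectional edge-path pointers yield the tables, with the stretch bound coming from \lemmaref{lemma:spanner}(i)--(ii) and the running time from summing the two lemma bounds).
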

\begin{proof}
Directly follows from Lemmas~\ref{lemma:spanner}
and~\ref{lemma:spanner_routing}.
\end{proof}

\subsection{Putting the Pieces Together}
\label{sec-tog}
Equipped with the results for the short-range and for the
long-distance routing, we can state the overall algorithm as a simple
composition of the two, linked by identifying the skeleton set from
the long-distance algorithm with the top level of the hierarchy $S_L$
of the short-range algorithm.  We run the long-range algorithm with
parameter $k$ to construct and make globally known the routing
skeleton and apply the short-range routing scheme with parameter $L$
to deal with nearby nodes.

Recall that the label of a node $w$ is
$\lambda(w)=\Seq{(\Lead_w(i),\Wd(w,\Lead_w(i)),\textrm{tree}_w(i)}_{i=0}^L$,
where $\textrm{tree}_w(i)$ denotes the label of $v$ in the tree on
$C_{\Lead_w(i)}$, and $\Lead_w(0)$ is simply $w$. Given the label $\lambda(w)$
to a node $v$, $v$ decides on the next routing hop as follows.
\begin{compactitem}
  \item If $\Lead_v(i)=\Lead_w(i)$ for some $i$, choose the next routing hop
  within $C_{\Lead_v(i)}$ to $w$ according to the respective tree label. In this
  case, $d$ is the distance from $v$ to $w$ in the tree (which can be computed
  from the distances of $v$ and $w$ to the root $\Lead_v(i)$ and whether the
  next routing hop is the parent of $v$ or a child).  
  \item Otherwise, node $v$ determines for each $i\in
  \{1,\ldots,L\}$ whether $\Lead_w(i-1)\in H_v(i)$. If so, it computes $d_i\DEF
  \Wd(v,\Lead_w(i-1))+\Wd(\Lead_w(i-1),w)$. Otherwise set $d_i\DEF \infty$.
  \item Next, denote by $S_v\subseteq S_L$ the set of skeleton nodes $v$ for
  which it stores a routing pointer and the corresponding path weight, and let
  for $s\in S_v$ $d_s$ be this weight. We define $\Wd^k$ to be the distance
  function on the spanner of the skeleton graph. Node $v$ computes
  $d_{L+1}\DEF\min_{s\in S_v}\{d_s+\Wd^k(s,\Lead_w(L))+\Wd(\Lead_w(L),w)\}$.
  \item Finally, $v$ computes $d\DEF \min_{i\in \{1,\ldots,L+1\}}\{d_i\}$, and
  determines the next routing hop in accordance with the corresponding path (ties
  broken by preferring smaller $i$), where we use the
  routing mechanism from \corollaryref{coro:bsp_route_stateless}.
\end{compactitem}
Since $v$ stores the tree routing tables for all trees on $C_{\Lead_v(i)}$,
the sets $H_i$ and the distances to the nodes in $H_i$, and
the complete spanner of the skeleton graph, together with the label $\lambda(w)$
it has the necessary information to perform all the above computations. Moreover, a next
routing hop is always determined, since $\Lead_v(L)\in S_v$ (by
\pprtyref{prop-y} of the short-range scheme) and therefore the set of
considered paths in the second step is non-empty. Finally, the routing decision
is stateless, since it depends on the local routing tables of $v$ and
$\lambda(w)$ only.

In order to show that indeed a route to $w$ of bounded stretch is determined by
the above routing decisions, we will show two properties. First, the value $d$
computed is the weight of a path of bounded stretch whose next routing hop $u$
is exactly the one computed by $v$, and second, the next node $u$ on the path
will compute a distance of at most $d-\Wd(v,u)$ to $w$. Since edge weights
are strictly positive, the latter immediately implies that the routes are
acyclic and will eventually reach their destination.
\begin{lemma}\label{lemma:d}
Fix any choice of the parameters $L$ and $k$ of the short range and long
distance schemes, respectively. For any node $v$ and label $\lambda(w)$,
consider the distance value $d$ and next routing hop $\Next$ computed by $v$
according to the above scheme. Then w.h.p., $d\leq (8kL-1)\Wd(v,w)$ and $\Next$
will compute a value $d'\leq d-W(v,\Next)$.
\end{lemma}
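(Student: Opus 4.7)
The proof will be a case analysis based on whether the short-range scheme is capable of serving the routing request from $v$ to $w$ directly, or whether one must detour via the skeleton $S_L$. These two cases correspond exactly to the dichotomy captured by \lemmaref{lem-sep} and \corollaryref{cor-short}, so the plan is to bound the weight of the shortest path discovered under the algorithm's routing rule in each case, then combine them by a min and verify the stated stretch.

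For the stretch bound, I would first consider the case in which there exists some $i_0 \in \{1,\ldots,L\}$ with $\Lead_w(i_0-1) \in H_v(i_0)$. Here the short-range scheme furnishes the route $v \to \Lead_w(i_0-1) \to w$ (the two halves being a shortest path by \pprtyref{prop-h} and a shortest path in the tree on $C_{\Lead_w(i_0-1)}$ by \pprtyref{prop-c}), and \corollaryref{cor-short} yields $d_{i_0} \leq (4i_0-3)\Wd(v,w) \leq (4L-3)\Wd(v,w)$. In the complementary case, no such $i_0$ exists, and \lemmaref{lem-sep} applies with $j = L$, giving $\Wd(v,\Lead_v(L)) \leq (2L-1)\Wd(v,w)$ and $\Wd(w,\Lead_w(L)) \leq 2L\cdot \Wd(v,w)$. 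By the triangle inequality and \theoremref{theorem:spanner}, the spanner distance satisfies $\Wd^k(\Lead_v(L),\Lead_w(L)) \leq (2k-1)\Wd(\Lead_v(L),\Lead_w(L)) \leq (2k-1)(4L)\Wd(v,w)$. Adding the two ``tail'' shortest-path pieces and using the fact that $\Lead_v(L)\in S_v$ (by \pprtyref{prop-y}), the value $d_{L+1}$ that $v$ computes is at most $(2L-1 + (2k-1)4L + 2L)\Wd(v,w) = (8kL-1)\Wd(v,w)$. Since $d = \min_i d_i$, the stated bound follows.

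For the progress property, the key observation is that each of the candidate distances $d_i$ corresponds to an explicit path that $v$ has ``committed'' to, and that the next hop $\Next$ chosen on the $\arg\min$-path gives a suffix of that path from $\Next$ to $w$ of weight exactly $d - W(v,\Next)$. For short-range candidates, the next hop is obtained via \corollaryref{coro:bsp_route_stateless}, so the hop chosen is the one reporting the smallest distance estimate to its source cluster, and by \corollaryref{coro:bsp_route_stateless} the suffix from $\Next$ onwards to the same destination (either $\Lead_w(i_0-1)$ in the first leg, or to $w$ inside the tree of $C_{\Lead_w(i_0-1)}$) exists and has weight $d - W(v,\Next)$. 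For the long-range candidate, the suffix is either still within the first leg from $v$ to $\Lead_v(L)$ (handled identically), or it is a shorter concatenation of the same form available at $\Next \in S_L$, whose spanner distance is globally known; in either sub-case $\Next$ sees a candidate of weight $\leq d - W(v,\Next)$ in its own computation, so $d' \leq d - W(v,\Next)$.

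The main obstacle is the second property. The subtlety is that $\Next$ might locally compute a strictly smaller value by virtue of having a better short-range option that $v$ did not see; this is fine because that only strengthens $d' \leq d - W(v,\Next)$, but I need to be careful to verify that the \emph{specific} path $v$ committed to is indeed visible to $\Next$. This is where stateless routing as given by \corollaryref{coro:bsp_route_stateless} is essential: although $\Next$ may choose a different pointer than the one $v$ assumed, the very existence of the $\BSP$-generated chain guarantees a candidate of weight at most $d - W(v,\Next)$ in $\Next$'s tables. Once this is established, progress is immediate and strict (since edge weights are positive), so the routing terminates at $w$ within finitely many hops, and a union bound over the polynomially many high-probability events in the supporting lemmas yields the w.h.p.\ qualifier.
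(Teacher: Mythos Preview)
Your approach matches the paper's: the stretch bound is handled by the same case split (short-range reachable via some $\Lead_w(i-1)\in H_v(i)$ versus fallback through the skeleton), and your arithmetic in the long-range case arrives at exactly $(8kL-1)\Wd(v,w)$, just with the triangle inequality and \lemmaref{lem-sep} applied in a slightly different order than the paper.

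There is, however, a genuine gap in your progress argument. You rely on \corollaryref{coro:bsp_route_stateless} to conclude that $\Next$ ``sees'' the suffix of the path $v$ committed to. That corollary does guarantee that $\Next$'s $\BSP$ tables contain an entry for $\Lead_w(i-1)$ with the right remaining weight. But the routing scheme at $\Next$ does \emph{not} consult the raw $\BSP$ tables; it sets $d'_i$ to a finite value only when $\Lead_w(i-1)\in H_{\Next}(i)$, i.e., only when $\Lead_w(i-1)$ is no farther from $\Next$ than $\Lead_{\Next}(i)$. The corollary says nothing about this membership. The paper closes this gap with a direct argument: since $\Next$ lies on an actual shortest path (in $G$, not merely an $h$-hop shortest path) from $v$ to $\Lead_w(i-1)$, any source in $S_{i-1}$ closer to $\Next$ than $\Lead_w(i-1)$ would also be closer to $v$ than $\Lead_w(i-1)$, contradicting $\Lead_w(i-1)\in H_v(i)$; hence $\Lead_w(i-1)\in H_{\Next}(i)$. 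Without this observation, your argument does not establish that $\Next$'s computed $d'$ picks up the short-range candidate at all.

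A smaller point: in the long-range case you write ``available at $\Next\in S_L$'', but $\Next$ is generally not a skeleton node; it is an arbitrary node on a path toward some $s\in S_v$, and the paper appeals separately to \corollaryref{coro:bsp_route_stateless} and the path-reversal of \lemmaref{lemma:spanner_routing} to ensure $\Next$ knows both the next hop and the residual weight toward $s$.
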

\begin{proof}
We show that $d\leq (8kL-1)\Wd(v,w)$ first. If $w\in\bigcup_{u\in
H_v(i)}C_u(i-1)$ for some $i\in \{1,\ldots,L\}$, observe that $d\leq
\Wd(v,\Lead_w(i))+\Wd(\Lead_w(i),v)$, and thus by \corollaryref{cor-short}
$d\leq (4L-3)\Wd(v,w)$. Otherwise, we have that $d=d_{L+1}$ since no other
routes are known to $v$. By definition and \pprtyref{prop-y} of the
short-range scheme, $d_{L+1}\leq
\Wd(v,\Lead_v(L))+\Wd^k(\Lead_v(L),\Lead_w(L))+\Wd(\Lead_w(L),w)$.
We bound
\begin{eqntext}
&&\Wd(v,\Lead_L(v))+\Wd^k(\Lead_v(L),\Lead_w(L))+\Wd(\Lead_L(w),w)&\\
&\le&\Wd(v,\Lead_L(v))+(2k-1)\Wd(\Lead_L(v),\Lead_L(w))+\Wd(\Lead_L(w),w)
& \text{by \theoremref{theorem:spanner}}\\
&\le& 2k\Wd(\Lead_L(v),v)+(2k-1)\Wd(v,w))+2k\Wd(w,\Lead_L(w)) & \text{triangle
inequality}\\
&\le& (2k(4L-1)+(2k-1))\Wd(v,w) &\text{\lemmaref{lem-sep}} \\
&=& (8kL-1) \Wd(v,w),
\end{eqntext}
proving that indeed $d\leq (8kL-1)\Wd(v,w)$.

Now let $\Next$ be the routing hop corresponding to $d$ computed by $v$. Due to
\pprtyref{prop-c} and \pprtyref{prop-h} of the short-range scheme, there are the
following three cases:
\begin{compactitem}
  \item $\Next$ is on the shortest path from $v$ to $\Lead_w(i-1)\in H_v(i)$ for
  some $i\in \{1,\ldots,L\}$ (this covers also the case that
  $\Lead_v(i-1)=\Lead_w(i-1)$, and in the tree on $C_{\Lead_w(i-1)}$ the
  connecting path traverses the root $\Lead_w(i-1)$);
  \item $\Next$ is on a path of weight $d_s$ to the node $s\in S_v$ minimizing
  the expression $d_s+\Wd^k(s,\Lead_w(L))+\Wd(\Lead_w(L),w)$;
  \item $\Next$ is on the shortest path from $\Lead_w(i)$ to $w$ for some $i\in
  \{1,\ldots,L\}$ (i.e., $\Lead_v(i)=\Lead_w(i)$, and in the tree on
  $C_{\Lead_w(i)}$ the connecting path does not traverse the root
  $\Lead_w(i)$).
\end{compactitem}
Regarding the first case, observe that since we are talking about shortest paths
in $G$ (not shortest $h$-hop paths), any source closer to $\Next$ than
$\Lead_w(i-1)$ will also be closer to $v$ than $\Lead_w(i-1)$. Hence
$\Lead_w(i-1)\in H_{\Next}(i)$. Since
$\Wd(\Next,\Lead_w(i-1))=\Wd(v,\Lead_w(i-1))-W(v,\Next)$, consequently $\Next$
will compute a distance of at most $d-W(v,\Next)$ to $w$.

In the second case, $\Next$ is either the next hop on a routing path as
constructed in \corollaryref{coro:bsp_route_stateless} or as constructed by the
``path reversal'' from \lemmaref{lemma:spanner_routing}. Either way, the
statements show that $\Next$ will know the next routing hop to $s$ as well as
the weight of the path. Since it knows the entire skeleton graph, it will thus
compute a distance of at most
$d_s-W(v,\Next)+\Wd^k(s,\Lead_w(L))+\Wd(\Lead_w(L),w)$ to $w$ as claimed.

For the third and final case, the statement trivially holds, since routing in
$C_{\Lead_w(i)}$ according to the tree routing table is on shortest paths and
will clearly lead to another node in $C_{\Lead_w(i)}$.
\end{proof}

It is fairly straightforward to set $k$ and $L$ to obtain a trade-off between 
the stretch of the routing scheme and the construction time. Specifically, we
can now state our main result as follows.
\begin{theorem}\label{thm-routing}
Let $1/2\le\alpha\le 1$ be given. Define $k\DEF\ceil{1/(2\alpha-1)}$ if
$\alpha\ge 1/2+1/\log n$, and $k\DEF\log n$ otherwise. Tables for stateless
routing and distance approximation with stretch
$\rho(\alpha)=8k\ceil{\log(k+1)}-1$ and label size $\BO(\log (k+1)\log n)$ can
be constructed in the \CONGEST\ model in $\tilde{\BO}(n^{\alpha}+\HD)$ rounds.
In particular, $\rho(1/2)\in \BO(\log n \log \log n)$ and $\rho(\alpha)\in
\BO(1)$ for any constant choice of $\alpha>1/2$.
\end{theorem}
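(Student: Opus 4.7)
The plan is to combine the short-range scheme of \sectionref{sec:short} with the long-distance scheme of \sectionref{sec:skeleton} as described in \sectionref{sec-tog}, identifying the top level $S_L$ of the short-range hierarchy with the skeleton $S$ fed to \algref{algo:skeleton}, and then tune the parameters $L$ and $k$ to the target running time exponent $\alpha$. Concretely, I would set $L\DEF \ceil{\log(k+1)}$. With this choice, $2^L/(2^L-1)\leq (k+1)/k=1+1/k$, and by construction $p_L=(\sqrt{n})^{-1}$, so w.h.p.\ $|S_L|\in\Theta(\sqrt{n})$ by a Chernoff argument. The correctness of the overall scheme and the stretch bound $\rho(\alpha)=8k\ceil{\log(k+1)}-1=8kL-1$ then follow directly from \lemmaref{lemma:d}, since the first part of that lemma already proves exactly this stretch for the composed scheme, and the second part ensures that stateless routing terminates along a path of the claimed weight.

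For the running time, I would simply add up the bounds from \lemmaref{lem-short-perf} and \theoremref{theorem:spanner}. The short-range construction with $L$ stages costs
\[
\tilde\BO((\sqrt{n})^{2^L/(2^L-1)})\subseteq \tilde\BO(n^{1/2+1/(2k)})
\]
rounds, while the long-distance construction with $|S|=|S_R|\in\Theta(\sqrt{n})$ and parameter $k$ costs
\[
\tilde\BO\!\left(\frac{n|S|^{1/k}}{|S_R|}+|S|^{1+1/k}+\HD\right)
\subseteq \tilde\BO(n^{1/2+1/(2k)}+\HD)
\]
rounds. Plugging in $k=\ceil{1/(2\alpha-1)}$ gives $1/(2k)\leq\alpha-1/2$ and hence a total of $\tilde\BO(n^{\alpha}+\HD)$ as claimed. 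In the boundary regime $\alpha<1/2+1/\log n$, choosing $k=\log n$ yields $n^{1/(2k)}=n^{1/(2\log n)}\in\BO(1)$, so the running time remains $\tilde\BO(n^{1/2}+\HD)\subseteq \tilde\BO(n^{\alpha}+\HD)$, while the stretch becomes $8\log n \cdot \ceil{\log(\log n+1)}-1\in\BO(\log n\log\log n)$. For constant $\alpha>1/2$, $k$ and $L$ are constants, so $\rho(\alpha)\in\BO(1)$.

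The label-size claim is the most routine part: by \lemmaref{lem-short-perf} the short-range labels have size $\BO(L\log n)=\BO(\log(k+1)\log n)$, and no additional label bits are needed for the long-distance scheme since the (sparsified) skeleton graph is broadcast to every node by \algref{algo:skeleton} and thus stored locally in the routing table rather than in the label. Finally, routing is stateless because at any node $v$ the decision is a function of the local tables and $\lambda(w)$ only: within $C_{\Lead_v(i)}$ we use the Thorup--Zwick tree routing, within $H_v(i)$ we use \corollaryref{coro:bsp_route_stateless} (which applies since $H_v(i)\subseteq \Ball_v(h_i)$ by \lemmaref{lem-short-corr1-general} and \lemmaref{lem:h}), and on skeleton edges we use the bidirectional pointers from \lemmaref{lemma:spanner_routing}; among the available candidates $v$ deterministically picks the one minimizing the computed distance $d$.

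The only genuinely delicate point I anticipate is verifying that the three cases in the second half of \lemmaref{lemma:d} continue to work under the present composition, i.e., that every next hop $\Next$ chosen by $v$ lies on a path whose remaining weight to $w$ is at most $d-W(v,\Next)$, so that successor nodes can only improve their estimate. This monotonicity, together with positivity of edge weights, rules out cycles and guarantees that the stateless procedure actually delivers the packet; the stretch bound then follows directly from the first half of \lemmaref{lemma:d}. Once these pieces are assembled, the theorem reduces to bookkeeping of the two specified parameter regimes.
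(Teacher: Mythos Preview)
Your proposal is correct and follows essentially the same approach as the paper: the same choice $L=\ceil{\log(k+1)}$, the same appeal to \lemmaref{lemma:d} for stretch and termination, and the same summation of the running-time bounds from \lemmaref{lem-short-perf} and \theoremref{theorem:spanner}. Your write-up is in fact slightly more explicit than the paper's in justifying $p_L=(\sqrt{n})^{-1}$ and in spelling out why the routing decision is stateless; the only cosmetic difference is that the paper phrases the long-distance label contribution as ``adds only $\BO(\log n)$ bits'' rather than ``none,'' but since $\Lead_w(L)$ and $\Wd(\Lead_w(L),w)$ are already part of the level-$L$ short-range label, this does not affect the bound.
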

\begin{proof}
A stretch bound of $8kL-1$ and the fact that the destination will indeed be
reached when following the computed pointers follows from \lemmaref{lemma:d}.
By \lemmaref{lem-short-perf}, the running time of the short-range construction
is bounded by $\tilde{\BO}((\sqrt{n})^{2^L/(2^L-1)})$ rounds w.h.p. The time
required for the skeleton construction is, by \theoremref{theorem:spanner},
$\tilde{\BO}(n/|S_L|^{1-1/k}+|S_L|^{1+1/k}+\HD) =
\tilde{\BO}((\sqrt{n})^{1+1/k}+\HD)$ w.h.p. To match the desired running time
bound of $\tilde\BO(n^{\alpha}+\HD)$ rounds, it thus suffices that
$\max\{1/k,1/(2^L-1)\}\geq 2\alpha-1-1/\log n$ (an additive $1/\log n$ in
the exponent maps to a constant factor). By choice of $k$,
this inequality holds for $L\DEF \ceil{\log(k+1)}$. The stretch
is thus bounded by
\begin{equation*}
\rho(\alpha)+1\leq 8kL=\left\{\begin{matrix}
8\lceil 1/(2\alpha-1)\rceil \lceil\log \lceil 1/(2\alpha-1)\rceil+1\rceil
& \text{for }\alpha>1/2\\
8\log n \lceil \log\log n +1\rceil & \text{for }\alpha=1/2.
\end{matrix}\right.
\end{equation*}

The bound on the label size follows from \lemmaref{lem-short-perf}, our
choice of $L$, and the fact that the long-distance scheme adds only $\BO(\log
n)$ bits to the label.
\end{proof}

The space complexity of our scheme, i.e., the number of bits of the computed
routing tables, is also straightforward to bound.

\begin{corollary}\label{coro:space}
The size of the routing table at node $v$ computed by the algorithm referenced
in \theoremref{thm-routing} is $\tilde{\BO}(n^{\alpha})$.
\end{corollary}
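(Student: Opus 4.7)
My plan is to enumerate the components stored locally by each node $v$ and bound each one using the parameter choices from the proof of \theoremref{thm-routing}. The key observation is that the per-stage communication complexity $h_i\Delta_i$ of Algorithm $\BSP$ already upper bounds the number of entries $\Delta_i$ that are actually stored, so the running time analysis essentially yields the space bound for free.

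Concretely, the local routing table at $v$ comprises the following: (i) for each short-range stage $i \in \{1,\ldots,L\}$, the set $H_v(i)$ together with a next-hop pointer and a distance estimate per entry, amounting to $\BO(|H_v(i)|\log n)$ bits; (ii) whenever $v \in S_i$, the tree routing table for $C_{\Lead_v(i)}(i)$ produced by the Thorup--Zwick tree routing scheme, which contributes only $\polylog(n)$ bits per stage; (iii) the full spanner $(E_{h,k},W_{h,k})$ of the skeleton graph $G_{S_L,h(S_R)}$, which by \lineref{sk-bcast} of \algref{algo:skeleton} is broadcast to every node; and (iv) the ``reverse'' pointers on spanner edges established by \lemmaref{lemma:spanner_routing}, at most one per incident spanner edge.

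For (i), \lemmaref{lem-short-corr1-general} gives $|H_v(i)| \leq \Delta_i \in \tilde{\BO}(p_{i-1}/p_i)$, and since $\Delta_i \leq h_i\Delta_i \in \tilde{\BO}((\sqrt{n})^{2^L/(2^L-1)})$ (the per-stage $\BSP$ cost from \lemmaref{lem-short-perf}), the choice $L = \lceil \log(k+1)\rceil$ made in the proof of \theoremref{thm-routing} yields $\Delta_i \in \tilde{\BO}(n^\alpha)$ for every $i$. Summing over the $L \leq \log\log n$ stages preserves the $\tilde{\BO}(n^\alpha)$ bound. For (iii) and (iv), \lemmaref{lemma:spanner} bounds the spanner by $\tilde{\BO}(|S_L|^{1+1/k})$ edges, and $|S_L| \in \tilde{\BO}(p_L n) = \tilde{\BO}(\sqrt{n})$ by construction, giving $\tilde{\BO}((\sqrt{n})^{1+1/k}) = \tilde{\BO}(n^\alpha)$ again by the choice of $k$. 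Part (ii) contributes only a polylogarithmic additive term.

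Since nothing else is stored locally by $v$ (the label $\lambda(v)$ itself has size $\BO(L \log n)$ and is negligible), adding the four contributions yields the desired $\tilde{\BO}(n^\alpha)$ bound. I do not anticipate any genuine obstacle here; the only mild subtlety is to verify that the implicit polylogarithmic slack in $\tilde{\BO}((\sqrt{n})^{2^L/(2^L-1)})$ indeed absorbs the additive $1/\log n$ term in the exponent chosen for $L$, which is exactly the same slack already exploited in the running time bound of \theoremref{thm-routing}, so the calculation transfers verbatim.
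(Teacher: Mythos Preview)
Your approach mirrors the paper's: enumerate the stored components and bound each by the running-time budget. The final $\tilde{\BO}(n^{\alpha})$ bound is correct, but your accounting for item (i) undercounts what is actually stored. Recording only $H_v(i)$ with one next-hop pointer per destination is not enough for the stateless routing mechanism of \corollaryref{coro:bsp_route_stateless} (or the stateful one of \lemmaref{lemma:bsp_route_stateful}): both require access to the full history $(L_v(1),\ldots,L_v(h_i))$ produced by Algorithm~$\BSP$, since an intermediate node $w$ on the route to some $u\in H_v(i)$ need not have $u$ in $L_w(h_i)$ (the paper flags exactly this issue before \lemmaref{lemma:bsp_route_stateful}). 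The paper's own proof therefore charges $\BO(h_i\Delta_i\log n)$ bits per stage, not $\BO(\Delta_i\log n)$.

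This does not break your conclusion, because you already route through the inequality $\Delta_i\leq h_i\Delta_i\in \tilde{\BO}((\sqrt{n})^{2^L/(2^L-1)})$ anyway; only the justification for why that product is the right quantity needs fixing. The remaining pieces (tree tables polylogarithmic, spanner of $G_{S_L,h(S_L)}$ with $\tilde{\BO}(|S_L|^{1+1/k})$ edges, reverse pointers bounded by the spanner edge count) match the paper's argument, with the paper taking the slight shortcut of observing that anything broadcast over the BFS tree within $\tilde{\BO}(n^{\alpha})$ rounds trivially fits in $\tilde{\BO}(n^{\alpha})$ bits.
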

\begin{proof}
Observe that the dominant terms in memory consumption are (i) storing the sets
$H_v(i)$ and the next pointers to them for the short-range routing scheme, (ii)
storing the routing information for the paths from the roots $u\in S_i$ of the
trees induced by the sets $C_u(i)$, and (iii) storing $G_{S_L,h(S_L)}$ and the
next pointers for the long-range scheme. Trivially, the encoding of
$G_{S_L,h(S_L)}$ cannot require more than $\tilde{\BO}(n^{\alpha})$ memory, as
it is broadcasted globally over the BFS tree. The routing information from
$C_u(i)$ to the nodes in its tree is $\log^{\BO(1)}n$ bits~\cite{TZ-routing}. The
term from (i) originates from calls to Algorithm $\BSP$. The routing information
that needs to be stored consists of the history of the list maintained by
Algorithm $\BSP$. Hence, if such a call has depth and overlap parameters $h$ and
$\Delta$, the memory required is $\BO(h\Delta \log n)$. Hence the memory bound
for (i) directly follows from the running time bound from
\lemmaref{lem-short-perf}.
\end{proof}

\section{Extensions and Applications}
\label{sec-ext}

\subsection{Distance Sketches}

The problem of distributed distance sketches requires each node to
have a label and store a small amount of information (called the
\emph{sketch}), so that each node $v$ can estimate the distance to
each other node $u$ when given the label of $u$.%
\footnote{%
  The formulation in \cite{DDP} permits to use \emph{both} sketches to
  approximate the distance. However, from the distributed point of
  view it is more appropriate to assume that only a minimal amount of
  information is exchanged.}
Technically speaking, we already solved this problem, since our machinery
enables to estimate distances with small stretch. However, since $\alpha\geq
1/2$, the basic construction will always consume $\tilde\Omega(\sqrt{n})$
memory.

If we discard the routing information, we can reduce the space requirements of
the sketches at the expense of also increasing the stretch. To this end, we need
to reduce the maximal size of the sets $H_v(i)$ as well as the space consumed
for storing information on the skeleton graph. Our idea is as follows.
First, we change the sampling probabilities of the sets $S_i$ so that $|S_i|\in
\Theta(n^{1-i/L})$, where $i$ ranges from $0$ to $L$. With this choice, the
expected number of nodes from $S_i$ that are closer to a given node than the
closest node from $S_{i+1}$ is $\Theta(n^{1/L})$, implying that
$\mathbb{E}[|H_v(i)|]\in \Theta(n^{1/L})$ for all $i$. Second, we do not choose
$S_L$ as skeleton set, but rather $S_{i_0}$ for $i_0\approx L/2$, so that the
skeleton can be constructed quickly. To continue applying the short-range scheme
beyond stage $i_0$ without increasing the asymptotic time complexity of the
construction, we construct temporary distance sketches and labels for the
skeleton (using the long-range scheme with skeleton set $S_{i_0}$), which
allows nodes to estimate their distance to skeleton nodes locally. Ensuring that
the $S_i$ are subsets of the skeleton for $i\geq i_0$, each node thus can
simulate the short-range algorithm's detection of the sets $H_i$ and the
respective distance computation locally, based on its estimated distance to
skeleton nodes; the price we pay is increasing the stretch by factor $\BO(L)$
due to imprecise distances.

\begin{theorem}\label{theorem:sketches_small}
Given any integer $k\in [1,\ldots,\log n]$, distance sketches with
stretch $\rho(k)= 2k(8k-3)\in\BO(k^2)$, label size $\BO(k\log n)$,
and sketch size $\tilde{\BO}(n^{1/(2k)})$ can be constructed w.h.p.\ in
the \CONGEST\ model in $\tilde{\BO}(n^{1/2+1/(2k)}+\HD)$ rounds.
\end{theorem}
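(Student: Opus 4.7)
The plan is to modify the construction underlying \theoremref{thm-routing} in two ways, guided by the outline given just before the theorem statement: (i) rebalance the sampling schedule so that the sets $H_v(i)$ all have roughly the same size $\tilde\BO(n^{1/(2k)})$, and (ii) discard the routing-path data and use only \emph{approximate} landmark distances beyond a carefully chosen middle level, trading a factor of $\BO(k)$ in stretch for a much smaller sketch.

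Set $L:=2k$, $i_0:=k$, and sampling probabilities $p_i:=n^{-i/(2k)}$ for $0\le i\le L$, so that $|S_i|\in\tilde\Theta(n^{1-i/(2k)})$ w.h.p.\ and in particular $|S_{i_0}|\in\tilde\Theta(\sqrt{n})$. With $h_i:=\Theta(\log n/p_i)$ and $\Delta_i:=\Theta(h_ip_{i-1})\in\tilde\Theta(n^{1/(2k)})$, each invocation of Algorithm~$\BSP$ costs $\tilde\BO(h_i\Delta_i)=\tilde\BO(n^{(i+1)/(2k)})$ rounds by \theoremref{thm:bsp}. I would execute the short-range construction only for $i=1,\ldots,i_0$, the dominant stage costing $\tilde\BO(n^{1/2+1/(2k)})$ rounds. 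In parallel, I would run the long-distance scheme on skeleton $S_{i_0}$ with parameter $k$ (\theoremref{theorem:spanner}), within the same time budget, making a $(2k-1)$-spanner of the skeleton graph globally known. This handles all levels $i\le i_0$ exactly and furnishes the data needed to continue the hierarchy locally.

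For the stages $i\in\{i_0+1,\ldots,L\}$, no communication is performed. Since each node $v$ knows $\Wd(v,\Lead_v(i_0))$ and holds the globally-known spanner, it can compute the estimate $\widetilde{\Wd}(v,s):=\Wd(v,\Lead_v(i_0))+\Wd^k(\Lead_v(i_0),s)$ for every $s\in S_{i_0}$, where $\Wd^k$ denotes spanner distance; by the triangle inequality and the spanner stretch, $\Wd(v,s)\le\widetilde{\Wd}(v,s)\le 2k\,\Wd(v,s)$. Using these estimates, $v$ simulates the short-range algorithm locally, defining $\Lead_v(i)$ and $H_v(i)$ for $i>i_0$ in terms of $\widetilde{\Wd}$ (this is possible because $S_i\subseteq S_{i_0}$ for all such $i$). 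The sketch at $v$ consists, for each $i\in\{1,\ldots,L\}$, of the pair $(\Lead_v(i),\widetilde{\Wd}(v,\Lead_v(i)))$ together with the list $\{(u,\widetilde{\Wd}(v,u))\mid u\in H_v(i)\}$; its label is $(\Lead_v(i),\widetilde{\Wd}(v,\Lead_v(i)))_{i=0}^L$. A Chernoff argument, using $p_{i-1}/p_i=n^{1/(2k)}$, gives $|H_v(i)|\in\tilde\BO(n^{1/(2k)})$ uniformly w.h.p., yielding total sketch size $\tilde\BO(n^{1/(2k)})$ and label size $\BO(k\log n)$.

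Queries proceed Thorup--Zwick style: given $v$ and $\lambda(w)$, pick the smallest $i$ with $\Lead_w(i-1)\in H_v(i)$ and return the sum of the corresponding stored distances. The main obstacle is the stretch analysis, since the standard argument captured by \lemmaref{lem-sep} and \corollaryref{cor-short} relies on exact distances, whereas $H_v(i)$ for $i>i_0$ is defined via $\widetilde{\Wd}$. The intended bound separates the two sources of error: applied to the full $L=2k$-level hierarchy with \emph{true} distances, \corollaryref{cor-short} caps the true weight of the selected landmark route by $(4L-3)\Wd(v,w)=(8k-3)\Wd(v,w)$, and every stored or estimated distance is inflated above its true value by at most a factor $2k$, giving a returned estimate of at most $2k(8k-3)\Wd(v,w)$. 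The subtle part is verifying that the landmark-selection criterion based on $\widetilde{\Wd}$ still picks some $i\le L$ and that replacing $\Wd$ by $\widetilde{\Wd}$ when defining $\Lead_v(i)$ and $H_v(i)$ does not break the inductive structure of \lemmaref{lem-sep} beyond the claimed $2k$ factor; this follows because $\widetilde{\Wd}$ is a uniform over-approximation of $\Wd$ on $S_{i_0}$, so the comparisons determining membership in $H_v(i)$ and $\Lead_v(i)$ remain correct up to factor $2k$.
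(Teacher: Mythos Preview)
Your proposal is essentially the paper's own argument: sample with $p_i=n^{-i/(2k)}$, run the short-range scheme exactly for the first $k$ stages and the long-distance scheme on $S_k$, then continue the hierarchy locally for stages $i>k$ using the estimate $\widetilde{\Wd}(v,s)=\Wd(v,\Lead_v(k))+\Wd^k(\Lead_v(k),s)$, and answer queries Thorup--Zwick style. The only differences are cosmetic: the paper stops the sampling at level $2k-1$ and sets $H_v(2k):=S_{2k-1}$ to guarantee that a query always terminates (your $S_{2k}$ has expected size~$1$ and may well be empty), and it explicitly broadcasts each skeleton node's level so that the sets $S_i$ for $i>k$ are known everywhere; the stretch analysis---multiplying the $(8k-3)$ bound from \corollaryref{cor-short} by the distortion factor of $\widetilde{\Wd}$---is handled with the same level of informality in both.
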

\begin{proof}
We use the following  algorithm, parametrized by $k$.
\begin{compactenum}
\item Run the short-range with $k$ stages and expected set sizes
$\E[|S_i|]=n^{i/(2k)}$ for $i\in \{0,\ldots,k\}$.
\item Run the long-range scheme on the skeleton $S_k$.
\item For $k+1\le i\le 2k-1$,  sample set $S_i\subseteq S_{i-1}$, where each node
is picked uniformly with probability $n^{-1/(2k)}$ in each step. Each node in
$S_k$ broadcasts its membership information.
\item For each pair $v\in V$ and $s\in S_k$, set
${\Wd'}(v,s)\DEF \Wd(v,\Lead_v(L))+\Wd(\Lead_v(L),s)$. For $k+1\le
i\le 2k-1$, compute at each node $v$ the closest node $\Lead_v(i)\in S_i$
w.r.t.\ $\Wd'$, and the set $H_v(i):=\{s\in
S_{i-1}\,|\,\Wd'(v,s)\leq \Wd'(v,\Lead_v(i))\}$. Set $H_v(2k)\DEF
S_{2k-1}$.
\item Store at each node $v$, for each $i\in \{1,\ldots,2k\}$: (i) the set
  $H_v(i)$; (ii) for 
each $u\in H_v(i)$, the value $\Wd(v,u)$ if $i\leq k$, or $\Wd'(v,u)$ if
$i>k$. Label node $v$ by $\lambda(v)=(\Lead_v(i),d(i))_{i\in
\{0,\ldots,2k-1\}}$, where $d(i)\DEF\Wd(v,\Lead_v(i))$ if $i<k$, and
$d(i)\DEF\Wd'(v,\Lead_v(i))$ otherwise. 
\end{compactenum}
Given label $\lambda(w)$ (which is clearly of size $\BO(k\log n)$), node $v$
estimates the distance to $w$ by finding the smallest $i\in \{1,\ldots,2k\}$ so
that $\Lead_w(i-1)\in H_v(i)$ and adding the respective distance estimates from
$v$ to $\Lead_w(i-1)$ (locally known) and from $\Lead_w(i-1)$ to $w$ (from the
label). Such an $i$ always exists because $H_v(2k)=S_{2k-1}$. This
completes the description of the algorithm.

By \corollaryref{cor-short}, the stretch of the routes represented by the
labels and sketches would be $8k-3\in\BO(k)$ w.h.p.\ if all distances were
exact. The approximation ratio is obtained by multiplying this value by the
maximal stretch of any distance estimates employed in the construction. Up to
stage $k$, all values are exact w.h.p. Thereafter, we use estimates of distances
between skeleton nodes and all other nodes. By the triangle
inequality, we have for all $v\in V$ and $s\in S$ that
\begin{equation*}
\Wd(v,s)\leq \Wd(v,\Lead_v(i_0))+\Wd(\Lead_v(i_0),s) \leq
\Wd(v,\Lead_v(i_0))+\Wd^k(\Lead_v(i_0),s)=\tilde\Wd(v,s).
\end{equation*}
On the other hand,
\begin{eqntext}
\Wd(v,\Lead_v(i_0))+\Wd^k(\Lead_v(i_0),s)&\leq & \Wd(v,s)+\Wd^k(\Lead_v(i_0),s)
& \text{by definition of } \Lead_v(i_0)\\
&\leq & 2k\cdot\Wd(\Lead_v(i_0),s) & \text{by \theoremref{theorem:spanner}}.
\end{eqntext}
Hence the stretch of the distance estimates is bounded by $\rho(k)=2k(8k-3)\in
\BO(k^2)$.

By Chernoff's bound, for all $i$ we have that $|S_i|\in \Theta(n^{(2k-i)/(2k)})$
w.h.p. Hence, the non-local part of the construction can be performed with
overlap parameter $\Delta_i\in \Theta(n^{1/(2k)}\log n)$ and distance parameter
$h_i\in \tilde\Theta(n^{i/(2k)})$ for all $i\in \{1,\ldots,k\}$. We conclude the
claimed running time and memory bounds of $\tilde\BO(n^{1/2+1/(2k)})$ (time
$\tilde\BO(h_i\Delta_i)$ for each step of the short-range scheme, time
$\tilde\BO(n/|S_k|^{1-1/k}+|S_k|^{1+1/k}+\HD)$ for the long-range scheme, and
time $\BO(|S_k|+\HD)$ for the additional broadcast step) and
$\tilde\BO(n^{1/(2k)})$, respectively, completing the proof.
\end{proof}

We note that a distributed implementation of Thorup-Zwick distance
oracles with stretch $2k-1$ and  running time $\tilde\Theta(\SPD\cdot
n^{1/k})$ was recently given by Das Sarma et al.~\cite{DDP}.
Intuitively, the reason for the discrepancy is that in \cite{DDP},
there is no use of the skeleton graph. In general, our running time and
the one from~\cite{DDP} are incomparable (one may run both algorithms in
parallel and use the output of the one that terminates first).

\subsection{Approximate Weighted Diameter}

Obtaining an approximation of the weighted diameter is simpler than constructing
distance sketches. Dropping the short-range scheme from the construction, we can
prove the following result.
\begin{theorem}\label{thm-diameter}
For any $k\in \N$, the weighted diameter $\WD$ can be approximated w.h.p.\ to
within a factor of $2k+1$ in the \CONGEST\ model in
$\tilde{\BO}(n^{1/2+1/(2k)}+\HD)$ rounds.
\end{theorem}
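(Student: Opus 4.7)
The plan is to drop the short-range hierarchy entirely and rely on the long-distance scheme together with a single BSP call. First, I would sample a random set $S \subseteq V$ with inclusion probability $\pi \in \Theta(\log n/\sqrt{n})$, so that $|S| \in \tilde\Theta(\sqrt{n})$ w.h.p. I would then invoke \algref{algo:skeleton} on $S$ with the given parameter $k$ and hop bound $h \in \tilde\Theta(\sqrt{n})$; by \lemmaref{lemma:spanner} and \theoremref{theorem:spanner} this produces a $(2k-1)$-spanner of $G_{S,h}$ known to every node in $\tilde\BO(n^{1/2+1/(2k)} + \HD)$ rounds, and the resulting spanner distance $\Wd^k$ satisfies $\Wd(s,t) \le \Wd^k(s,t) \le (2k-1)\Wd(s,t)$ for all $s,t \in S$.

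In parallel, I would invoke $\BSP$ with source set $S$, overlap $\Delta = 1$, and range $h$, costing $\tilde\BO(\sqrt{n})$ rounds by \theoremref{thm:bsp}. The choice of $\pi$ together with a Chernoff bound ensures that every node $v$ has a skeleton node among its $h$ closest nodes w.h.p., so by \lemmaref{lem:h} the $h$-weighted distance returned by $\BSP$ equals the true distance $\Wd(v,s_v)$ to $v$'s closest skeleton node $s_v$. A convergecast/broadcast over a BFS tree then computes $D \DEF \max_{v \in V} \Wd(v,s_v)$ in $\BO(\HD)$ rounds, and every node locally extracts $\Delta_S \DEF \max_{s,t \in S} \Wd^k(s,t)$ from its copy of the spanner. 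The output at every node is $\tilde\WD \DEF 2D + \Delta_S$.

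For correctness, let $v^*, w^*$ achieve the diameter. The triangle inequality applied to the path $v^* \leadsto s_{v^*} \leadsto s_{w^*} \leadsto w^*$, together with $\Wd(s_{v^*},s_{w^*}) \le \Wd^k(s_{v^*},s_{w^*}) \le \Delta_S$, yields $\WD \le 2D + \Delta_S = \tilde\WD$. Conversely, $D \le \WD$ holds trivially, and $\Delta_S \le (2k-1)\WD$ since each spanner distance is at most $(2k-1)$ times a true distance between two nodes of $V$; combining, $\tilde\WD \le (2k+1)\WD$, giving the claimed stretch.

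The main obstacle I anticipate is calibrating $\pi$ and $h$ to simultaneously satisfy (a) the running-time target from \theoremref{theorem:spanner}, which wants $|S| \lesssim \sqrt{n}$ (so that $n/|S|^{1-1/k}+|S|^{1+1/k} \in \tilde\BO(n^{1/2+1/(2k)})$), and (b) the sampling property needed so that $\BSP$ returns exact skeleton distances rather than truncated $h$-hop estimates (so that the sampling lemma underlying \lemmaref{lemma:spanner} applies and every $v$ has a skeleton within $h$ closest nodes). Both constraints are met by $\pi \in \Theta(\log n/\sqrt{n})$ and $h \in \Theta(\sqrt{n}\log n)$; once these are fixed, the stretch derivation is a two-line triangle-inequality calculation and no other component of the existing machinery needs adaptation.
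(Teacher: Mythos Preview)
Your proposal is correct and matches the paper's proof essentially line for line: sample $S$ of expected size $\tilde\Theta(\sqrt{n})$, run \algref{algo:skeleton} to obtain a globally known $(2k-1)$-spanner of the skeleton, run $\BSP$ with $\Delta=1$ and $h\in\tilde\Theta(\sqrt{n})$ so that each node learns its exact distance to the nearest skeleton node, aggregate the maximum such distance $d_{\max}$ (your $D$), and output $2d_{\max}+\WD^k$ (your $\Delta_S$). The only cosmetic differences are that the paper samples with probability $1/\sqrt{n}$ rather than $\Theta(\log n/\sqrt{n})$ and lets all of $S$ act as a single source in the $\BSP$ call; neither affects the argument, and your stretch calculation $2D+\Delta_S\le 2\WD+(2k-1)\WD=(2k+1)\WD$ is the same bound written slightly more directly.
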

\begin{proof}
We use the following streamlined version of our algorithm. 
\begin{compactenum}
\item Select a uniformly random skeleton $S$ where
  $\Pr[v\in S]=1/\sqrt n$ independently for all $v\in V$.
\item \label{stD-1} Apply \algref{algo:skeleton} to construct a
  $(2k-1)$-spanner of $G_{S,h(S_R)}$. Let $\WD^k$ be the weighted
  diameter of the spanner of $G_{S,h(S_R)}$ (which can be computed
  locally).
\item \label{stD-2} Apply Algorithm $\BSP$, where
  all nodes in $S$ function as the same source: $\Src(v)=1$ for all
  $v\in S$ and $\Src(v)=\bot$ for all $v\notin S$.
Use $\Delta=1$ and  $h\in \Theta(\sqrt{n}\log n)$.
\item \label{stD-3} Find the maximal distance $d_{\max}$ computed by
  any node and output $2d_{\max}+\WD^k$.
\end{compactenum}
Regarding the time complexity, note that Step \ref{stD-1} requires
$\tilde{\BO}(n^{1/2+1/(2k)}+\HD)$ rounds by
\theoremref{theorem:spanner},
Step \ref{stD-2} requires $\tilde{\BO}(\sqrt n)$ time by
\theoremref{thm:bsp}, and Step \ref{stD-3} takes $\BO(\HD)$ rounds. 

Regarding  the approximation ratio,  consider any $v,w\in V$, and let
$s_v,s_w\in S$ the nodes in $S$ closest to $v$ and $w$,
respectively. Then
\begin{equation*}
\Wd(v,w)\leq \Wd(v,s_v)+\Wd(s_v,s_w)+\Wd(s_w,w)\leq d_{\max}+\WD^k+d_{\max}~,
\end{equation*}
and hence $\WD\le\WD^k+2d_{\max}$. 
On the other hand, we have 
\begin{eqnarray*}
\WD&\geq & \max\left\{d_{\max},\max_{s,t\in S}\{W_{S,h(S_R)}(s,t)\}\right\}\\
&\geq & \frac{2d_{\max}+(2k-1)\max_{s,t\in S}\{W_{S,h(S_R)}(s,t)\}}{2k+1}
~\geq ~ \frac{2d_{\max}+\WD^k}{2k+1}~,
\end{eqnarray*}
where the last inequality holds w.h.p.\ by
\theoremref{theorem:spanner}. 
\end{proof}

\subsection{Distributed Approximation for Generalized Steiner Forest}
\label{sec:steiner}

In this section we explain how to utilize our routing scheme to obtain
a fast distributed algorithm for the Generalized Steiner Forest
problem (\gsf), defined as follows.

\begin{quote}\textbf{Generalized Steiner Forest} (\gsf')\\
  \textbf{Input:} A weighted graph $G=(V,E,W)$, a set of \emph{terminals}
  $T\subseteq V$, and for each terminal $t\in T$ a component number $C(t)$.\\
  \textbf{Output:} A subset of the edges $F\subseteq E$ such that for all pairs
  $s,t\in T$ with $C(s)=C(t)$, we have that $s$ is connected to $t$
  in the graph   $(V,F)$.\\
  \textbf{Goal:} Minimize $\sum_{e\in F}W(e)$.
\end{quote}
We note that sometimes, the connectivity requirement $C$ is expressed
as a set of node pairs. While the size of the input representation may differ,
the two variants are equivalent for our purposes; using our spanner
construction, we can obtain the component-based description within
$\tilde\BO(T+\HD)$ rounds from the pair-based formulation.

In the distributed setting, we assume that each node knows whether it is a
terminal, and if so, what is its component number. Clearly, we can establish
global knowledge on $T$ and the component numbers within time $\BO(T+\HD)$ by
broadcasting the respective pairs of values over a BFS tree. 

We now present a solution to \gsf. We start with a generic reduction to a
centralized algorithm which abstracts away the underlying graph $G$,
and uses a graph whose nodes are just the terminals and edge weights
are inter-terminal distance estimates.

\begin{algorithm}[ht!]
\small
\SetKwInOut{Input}{input}
\SetKwInOut{Output}{output}
\Input{
  terminal components \REM{locally known: $v$ knows whether $v\in T$ and if
  so, its component number}\\
}
\Output{
  $F$: edges in the Steiner forest
}
Obtain distance estimates for all distances $\Wd(s,t)$ with
  $s,t\in T$.\label{gsf-1}\\
Simulate $\ALG$ on the graph $G'=(T,E',W')$ and the same terminal components,
  where $E':=\Set{\{s,t\}\,|\, s,t\in T\wedge s\neq t}$, and for all $s,t\in T$, 
  $W'(s,t):=\min\{\tilde{\Wd}_s(t),\tilde{\Wd}_t(s)\}$ and $\tilde{\Wd}_s(t)$
  denotes the estimate of $\Wd(s,t)$ computed at $s$ (given the label of
  $t$). Denote by $F'$ the computed solution.\label{gsf-2}\\
Identify and output all edges on paths in $G$ that correspond to edges in $F'$.
\label{gsf-3}
\caption{Distributed algorithm for \gsf. $\ALG$ is any centralized
approximation algorithm for \gsf'.}\label{algo:gsf}
\end{algorithm}
To analyze \algref{algo:gsf}, we consider two simple transformations of
the input instance and state their effect on the cost of the solution.
First, consider the effect of using just distances between terminals (and not
the whole graph). The following lemma bounds the effect of this 
simplification. Given an instance $\cI$ for \gsf, we use $\OPT(\cI)$ to denote
any fixed optimal solution for $\cI$.
\begin{lemma}\label{lem-terminals-only}
Let $\cI=\left(G=(V,E,W),T,C\right)$ be an instance of \gsf.
Define an instance
$\cI'=\left(G',T,C\right)$, where $G'=(T,E',W')$ with
$E'=\Set{\{s,t\}\,|\, s,t\in T}$ 
and $W'(s,t)=\Wd_G(s,t)$. Then $W'(\OPT(\cI'))\le 2W(\OPT(\cI))$.
\end{lemma}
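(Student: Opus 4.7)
My plan is to construct a feasible solution $F''\subseteq E'$ for $\cI'$ whose weight is at most $2W(\OPT(\cI))$; the inequality $W'(\OPT(\cI'))\le W'(F'')\le 2W(\OPT(\cI))$ will then follow.

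Start from $F\DEF \OPT(\cI)$, which we may assume to be a forest (otherwise discard edges without losing feasibility). Focus on one tree $F_j$ of this forest; it contains some set of terminals $T_j\subseteq T$, and by feasibility of $F$ all of $T_j$ lies in terminal components whose remaining members are also in $T_j$ (or the other terminals of those components are connected in different trees of $F$, which I will handle trivially by applying the argument per tree). The key step is to double the edges of $F_j$ to obtain an Eulerian multigraph of weight $2W(F_j)$, pick any Eulerian tour, and read off the terminals of $T_j$ in the order $t_1,t_2,\ldots,t_{|T_j|}$ they are first encountered. For each consecutive pair $t_i,t_{i+1}$, the sub-walk between them has some weight $w_i$ in $G$, and these sub-walks are edge-disjoint, so $\sum_{i} w_i \le 2W(F_j)$. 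Since $W'(t_i,t_{i+1})=\Wd_G(t_i,t_{i+1})\le w_i$, adding the edges $\{t_i,t_{i+1}\}$ to $F''$ contributes at most $2W(F_j)$ and connects all of $T_j$ in $G'$.

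Summing over all trees $F_j$ of $F$ yields $W'(F'')\le 2W(F)=2W(\OPT(\cI))$. Feasibility of $F''$ for $\cI'$ follows because any two terminals $s,t$ with $C(s)=C(t)$ are connected in $F$ (as $F$ solves $\cI$), hence lie in the same tree $F_j$, hence in the same set $T_j$, and hence end up connected by the path we added on $T_j$.

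The only subtle point is checking that the sub-walks between consecutive terminals in the Eulerian tour are edge-disjoint so that their weights sum to at most the tour weight $2W(F_j)$; this is immediate from the fact that an Eulerian tour uses each (doubled) edge exactly once, so partitioning the tour at the first-visit times of the terminals gives disjoint sub-walks. No other obstacle arises, and the bound $W'(\OPT(\cI'))\le 2W(\OPT(\cI))$ follows.
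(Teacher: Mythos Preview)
Your argument is correct and follows essentially the same Euler-tour/doubling idea as the paper's proof: decompose $\OPT(\cI)$ into trees, double each tree to obtain an Eulerian tour of weight $2W(F_j)$, and short-cut along the tour to get a path in $G'$ of no greater weight. The one refinement in your version is that you short-cut directly to the \emph{terminals} $t_1,\ldots,t_{|T_j|}$ in order of first visit, whereas the paper's write-up short-cuts to all vertices of the tree; since $G'$ has vertex set $T$, your formulation is the cleaner one, but the underlying argument is identical.
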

\begin{proof}
The proof is a generalization of the standard argument for Steiner
trees \cite{TM-80}.
Let $F=\OPT(\cI)$. Let $C_1=(V_1,F_1),\ldots,C_m=(V_m,E_m)$ be the connected
components of $F$. By optimality of $F$, the  components are trees.
Fix a component $C_j$, and consider an Euler tour of
its tree. Let $\sigma=\Seq{v_0,\ldots,v_{2|V_j|}=v_0}$ be the sequence of
nodes visited by the tour. Since each edge in $F_j$ is visited
exactly twice in $\sigma$, we have that the total weight of edges in
the tour is $2W(F_j)$. Define the node
sequence $\sigma'=\Seq{u_0,\ldots,u_{|V_j|-1}}$ obtained from $\sigma$
by omitting the second occurrences of
nodes from $\sigma$. Consider now the set of edges
$F'_j:=\Set{\{u_{i-1},u_i\}\mid 0<i<|V_j|}$ in $G'$. Since the edges in $G'$
are shortest paths in $G$, clearly the weight of $F'_j$ is not more than
the total weight of edges in $\sigma$, namely $W'(F'_j)\le
2W(F_j)$.
Finally, note that
$F'=\bigcup_{j=1}^tF'_j$ is a feasible solution for $\cI'$, and
therefore
$$
W'(\OPT(\cI'))
~\le~W'(F')
~=~\sum_{j=1}^tW'(F'_j)
~\le~\sum_{j=1}^t2W(F_j)
~=~2W(\OPT(\cI))~.
$$
\end{proof}
Next, consider replacing edge weights by $\rho$-approximate weights.
\begin{lemma}\label{lem-gsf-appx}
Let $\cI=\left(G=(V,E,W),T,C\right)$ and
$\cI'=\left(G=(V,E,W'),T,C\right)$ be instances of \gsf\
differing only in the edge weights as follows: for all $e\in E$,
$W(e)\le W'(e)\le \rho W(e)$ for some $\rho\ge1$. Then $W(\OPT(\cI'))\le \rho
W(\OPT(\cI))$.
\end{lemma}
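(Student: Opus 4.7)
The plan is to prove this by a short chain of inequalities that exploits two facts: first, that feasibility for \gsf\ depends only on the graph topology and terminal components, not on the edge weights, so any set of edges that is feasible for $\cI$ is also feasible for $\cI'$ and vice versa; and second, that the two weight functions are sandwiched pointwise, i.e., $W(e) \le W'(e) \le \rho W(e)$ for every $e \in E$.

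First I would observe that $\OPT(\cI)$ is feasible for $\cI'$ (same graph, same terminals, same component function $C$). Applying the pointwise bound $W'(e) \le \rho W(e)$ edge by edge yields $W'(\OPT(\cI)) \le \rho W(\OPT(\cI))$. Next, since $\OPT(\cI')$ is by definition a minimum-weight feasible solution for $\cI'$ under the weight function $W'$, we get $W'(\OPT(\cI')) \le W'(\OPT(\cI))$. Finally, using $W(e) \le W'(e)$ for every edge gives $W(\OPT(\cI')) \le W'(\OPT(\cI'))$. Chaining,
\[
W(\OPT(\cI')) \;\le\; W'(\OPT(\cI')) \;\le\; W'(\OPT(\cI)) \;\le\; \rho\,W(\OPT(\cI)),
\]
which is exactly the claimed bound.

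There is no real obstacle here: the lemma is essentially a bookkeeping statement about how approximating edge weights perturbs the optimum, and the argument requires no structural property of \gsf\ beyond the fact that the set of feasible edge subsets is the same for $\cI$ and $\cI'$. The only thing to be careful about is to apply the correct inequality in the correct direction on each step of the chain, and to note that the bound is stated in terms of $W$ on both sides (so we need the extra first step $W(\OPT(\cI')) \le W'(\OPT(\cI'))$ to get back from $W'$ to $W$ on the left-hand side).
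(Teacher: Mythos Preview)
Your proof is correct and follows exactly the same chain of inequalities as the paper's proof, namely $W(\OPT(\cI')) \le W'(\OPT(\cI')) \le W'(\OPT(\cI)) \le \rho\,W(\OPT(\cI))$. The paper states this one-line chain without further commentary; your version simply unpacks the justification for each step.
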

\begin{proof}
$W(\OPT(\cI'))\leq W'(\OPT(\cI'))\leq W'(\OPT(\cI))\leq \rho W(\OPT(\cI))$.
\end{proof}

The preceding two lemmas show that using $\rho$-approximate distances and a
centralized $a$-approxima\-tion algorithm for $\gsf$, we will obtain a
distributed $(2\rho a)$-approximation algorithm for $\gsf$.

It remains to show how to efficiently implement \algref{algo:gsf} in the
\CONGEST\ model. The key is Step \ref{gsf-1}: Steps \ref{gsf-2} and
\ref{gsf-3} will be performed locally at each node.
\begin{corollary}\label{coro:gsf_impl}
For any integer $k\in [1,\log n]$, \algref{algo:gsf} can be
executed in the \CONGEST\ model in
$\tilde{\BO}((\sqrt{n}+|T|)^{1+1/k}+\HD)$ rounds with stretch factor 
$\rho(k)=2k-1$.
\end{corollary}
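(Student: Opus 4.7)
The plan is to realize \algref{algo:gsf} by applying the long-range machinery of \sectionref{sec:skeleton} with a skeleton that explicitly includes all terminals. Concretely, I would sample $S_R \subseteq V$ independently with $\Pr[v \in S_R] = 1/\sqrt{n}$, set the skeleton to $S := S_R \cup T$, and note that $|S_R| \in \Theta(\sqrt{n})$ and $|S| \in \BO(\sqrt{n}+|T|)$ w.h.p.\ by Chernoff's bound. Each node can decide its $S$-membership locally (from its random bits and its terminal flag), so this preprocessing is free. As a preliminary step, every node broadcasts its identifier together with its component number in $\tilde{\BO}(|T|+\HD)$ rounds, so that the virtual instance $G'=(T,E',W')$ becomes well-defined at all nodes once the edge weights are computed.

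For Step~\ref{gsf-1}, I would invoke \algref{algo:skeleton} on the skeleton $S$ with parameter $k$. By \theoremref{theorem:spanner} and \lemmaref{lemma:spanner}, this produces, w.h.p.\ and within $\tilde{\BO}(n|S|^{1/k}/|S_R| + |S|^{1+1/k} + \HD)$ rounds, a $(2k-1)$-spanner of $G_{S,h(S_R)}$ broadcast to every node, where moreover $\Wd_{G_{S,h(S_R)}}$ agrees with $\Wd$ on $S\times S$. Substituting $|S_R|\in\Theta(\sqrt{n})$ and $|S|\in\BO(\sqrt{n}+|T|)$ gives $n|S|^{1/k}/|S_R| = \sqrt{n}\,(\sqrt{n}+|T|)^{1/k} \leq (\sqrt{n}+|T|)^{1+1/k}$, so the round complexity collapses to $\tilde{\BO}((\sqrt{n}+|T|)^{1+1/k}+\HD)$. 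For any $s,t \in T \subseteq S$, each node can then compute locally an estimate $\tilde{\Wd}_s(t) \in [\Wd(s,t),(2k-1)\Wd(s,t)]$, yielding the stretch bound $\rho(k) = 2k-1$ claimed in the corollary.

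Steps~\ref{gsf-2} and~\ref{gsf-3} can then be performed essentially without further communication. Since $G'$ is globally known, every node simulates $\ALG$ locally; assuming $\ALG$ is deterministic (or a shared random seed is broadcast at negligible cost), all nodes agree on the same edge set $F' \subseteq E'$. To realize Step~\ref{gsf-3}, for each edge $\{s,t\}\in F'$ the actual shortest $h(S_R)$-hop $s$--$t$ path in $G$ must be added to $F$; however, \lemmaref{lemma:spanner_routing} guarantees that \algref{algo:skeleton} already leaves, along every spanner-edge path, predecessor pointers that each participating node can use to identify itself as a path-member and mark the appropriate incident edges. No additional rounds are required beyond those already counted for the spanner construction.

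The main obstacle I anticipate is reconciling two competing requirements on the skeleton: for \lemmaref{lemma:spanner} to give exact $\Wd$-preservation one needs a sufficiently dense \emph{random} subset, whereas for the distance estimates of Step~\ref{gsf-1} to cover every pair in $T\times T$ one needs the \emph{deterministic} inclusion $T\subseteq S$. The resolution is that \theoremref{theorem:spanner} was already phrased for precisely this setting: it allows $S$ to be any superset of a uniformly random $S_R$ and treats $|S|$ and $|S_R|$ as independent parameters. The only remaining task is therefore the arithmetic check above that plugging $|S_R| = \Theta(\sqrt{n})$ and $|S| = \BO(\sqrt{n}+|T|)$ into that bound reproduces $\tilde{\BO}((\sqrt{n}+|T|)^{1+1/k}+\HD)$.
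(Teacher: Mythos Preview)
Your proposal is correct and mirrors the paper's proof almost exactly: sample $S_R$ at rate $n^{-1/2}$, set $S=S_R\cup T$, invoke the long-range construction (\lemmaref{lemma:spanner} and \lemmaref{lemma:spanner_routing}) on $S$, and then carry out Steps~\ref{gsf-2}--\ref{gsf-3} locally once the spanner and the terminal components are globally known. One small wording point: an edge $\{s,t\}\in F'$ need not itself be a spanner edge, so the corresponding route in $G$ is the concatenation of the $\leq h(S_R)$-hop paths underlying the spanner edges on a shortest $s$--$t$ path in the spanner; since both $F'$ and the spanner are global knowledge, each node can still locally determine which spanner-edge paths it lies on and mark its incident edges accordingly, exactly as you indicate.
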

\begin{proof}
We apply the long-range routing scheme with skeleton set $S:=T\cup R_S$, where
$R_S$ is sampled uniformly and independently at random with probability
$n^{-1/2}$ from $V$. \lemmaref{lemma:spanner} implies that we can perform
Step~\ref{gsf-1} of \algref{algo:gsf} within
$\tilde{\BO}(n|S|^{1/k}/|S_R|+|S|^{1+1/k}+\HD)
=\tilde{\BO}((\sqrt{n}+|T|)^{1+1/k}+\HD)$ rounds with stretch $\rho(k)=2k-1$.
Moreover, at the end of this step, all nodes know the spanner of the skeleton
graph and can therefore locally compute $W'$. As remarked earlier, all nodes can
learn the terminal components within $\tilde{\BO}(\sqrt{n}+\HD)\subset
\tilde{\BO}((\sqrt{n}+|T|)^{1+1/k}+\HD)$ rounds. With this information in place,
all nodes can locally simulate $\ALG$ on $G'$ and thus perform Step~\ref{gsf-2}
of the algorithm. According to \lemmaref{lemma:spanner_routing}, the nodes on
paths in $G$ corresponding to edges in $G'$ can learn of their membership within
$\tilde{\BO}(n|S|^{1/k}/|S_R|+|S|^{1+1/k}+\HD)$ rounds as well. Afterwards,
Step~\ref{gsf-3} of the algorithm can be completed locally as well. Summing up
the running time bounds for the individual steps, we conclude that the overall
time complexity is $\tilde{\BO}((\sqrt{n}+|T|)^{1+1/k}+\HD)$ as claimed.
\end{proof}

Altogether, we arrive at the following result.
\begin{theorem}
Given any integer $k\in [1,\log n]$ and any centralized $a$-approximation
algorithm to \gsf, \gsf\ can be solved in the \CONGEST\ model with approximation
ratio $2a(2k-1)$ in $\tilde{\BO}((\sqrt{n}+|T|)^{1+1/k}+\HD)$ rounds, where $T$
denotes the set of terminal nodes.
\end{theorem}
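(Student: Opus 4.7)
My plan is to combine the structural lemmas (\lemmaref{lem-terminals-only} and \lemmaref{lem-gsf-appx}) with the distributed implementation provided by \corollaryref{coro:gsf_impl}. The running time bound is already essentially established by \corollaryref{coro:gsf_impl}, so the main task is to verify that the approximation ratio degrades by exactly the claimed factor $2(2k-1)$ relative to the centralized algorithm $\ALG$.

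First, I would set up the pipeline used by \algref{algo:gsf}. The algorithm feeds $\ALG$ a graph $G'$ on vertex set $T$ whose edge weights $W'$ are distance estimates of stretch $(2k-1)$ for pairs of terminals. Let $\cI$ be the input \gsf\ instance on $G$, let $\cI'$ be the instance on the complete graph $(T,E')$ with true shortest-path weights, and let $\cI''$ be the instance on $(T,E')$ with the approximate weights $W'$ actually fed to $\ALG$. Then I chain the three inequalities: $W(\OPT(\cI'))\leq 2W(\OPT(\cI))$ by \lemmaref{lem-terminals-only}; $W(\OPT(\cI''))\leq (2k-1)W(\OPT(\cI'))$ by \lemmaref{lem-gsf-appx} with $\rho = 2k-1$ (the estimates dominate the true distances and lie within factor $2k-1$ of them by \corollaryref{coro:gsf_impl}); and the centralized guarantee $W'(\ALG(\cI''))\leq a\cdot W(\OPT(\cI''))$.

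Next, I need to argue that when we pull the solution $F'$ back from $G'$ to $G$ by replacing each edge of $F'$ with the corresponding shortest path in $G$ (which \corollaryref{coro:gsf_impl} tells us we can identify in the claimed time), the resulting edge set $F\subseteq E$ is feasible for $\cI$ and has weight at most $W'(F')$. Feasibility is immediate since replacing a $G'$-edge by the corresponding $G$-path preserves connectivity within each terminal component; for the weight bound, the $G$-path realizing the estimate has weight at most $W'(\{s,t\})$ by construction, and taking the union of edges across paths only decreases the total weight (since shared edges are counted once in $F$). Stringing the inequalities yields $W(F)\leq W'(F')\leq a\cdot W(\OPT(\cI''))\leq a(2k-1)\cdot 2\cdot W(\OPT(\cI)) = 2a(2k-1)\cdot W(\OPT(\cI))$.

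Finally, the running time bound $\tilde{\BO}((\sqrt{n}+|T|)^{1+1/k}+\HD)$ follows directly from \corollaryref{coro:gsf_impl}: Step~\ref{gsf-1} of \algref{algo:gsf} is implemented there, and Steps~\ref{gsf-2} and~\ref{gsf-3} are either purely local (simulation of $\ALG$ using globally known $W'$ and terminal components) or already covered by the path-reversal bound from \lemmaref{lemma:spanner_routing}. The only mild subtlety is the weight-counting step, since an edge of $G$ may lie on several of the simulated spanner-paths; I would handle this by noting we only ever \emph{include} such an edge once in $F$, so double-counting in the analysis is a harmless over-estimate and the high-probability event from \lemmaref{lemma:spanner} suffices to make all the above bounds hold w.h.p., completing the proof.
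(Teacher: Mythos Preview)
Your proof is correct and follows exactly the approach in the paper, which merely cites \corollaryref{coro:gsf_impl} for the running time and Lemmas~\ref{lem-terminals-only} and~\ref{lem-gsf-appx} for the approximation ratio. The only slip is that the centralized guarantee reads $W'(\ALG(\cI''))\leq a\cdot W'(\OPT(\cI''))$ rather than $a\cdot W(\OPT(\cI''))$; since the proof of \lemmaref{lem-gsf-appx} already establishes $W'(\OPT(\cI''))\leq (2k-1)\,W(\OPT(\cI'))$, your chain still closes.
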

\begin{proof}
\corollaryref{coro:gsf_impl} proves that \algref{algo:gsf} can be implemented in
$\tilde{\BO}((\sqrt{n}+|T|)^{1+1/k}+\HD)$ rounds, with distance estimates of
stretch $\rho(k)=2k-1$.
The approximation guarantee therefore follows from
Lemmas~\ref{lem-terminals-only} 
and~\ref{lem-gsf-appx}.
\end{proof}

We note that one can implement Step \ref{gsf-1} also by computing distance
sketches as in \theoremref{thm-routing} without including the entire terminal
set into the skeleton (i.e., $G'$ does not become global knowledge), and
simulate $\ALG$ sequentially, taking $\BO(\HD)$ per step. This will reduce the
overall running time in case $|T|\gg\sqrt{n}$ and $\HD$ times the step
complexity of $\ALG$ (in terms of the number of globally synchronized steps) is
small compared to $|T|^{1+1/k}$. However, this approach has two drawbacks. 
First, $\ALG$ cannot be arbitrary, but must admit to be simulated via a BFS tree
using small messages. Second, the approximation ratio deteriorates according to
the number of stages of the short-range scheme, as the number of stages
contributes as a multiplicative factor $\rho$.

\noindent\textbf{Discussion.} It is known that the special case of MST, where
all nodes are terminals in a single component has worst-case running time of
$\tilde\Omega(\sqrt n)$ even if the hop-diameter is $\BO(\log n)$
\cite{PelegR-00}. However, it is unclear whether this lower bound holds if the
number of terminals is small, and in turn, whether a larger number of terminal
components makes the problem harder. For instance, for a single pair of
terminals the problem reduces to selecting a single approximate shortest path;
we are not aware of any non-trivial lower bound on this problem. Khan et
al.~\cite{KKMPT} provide a $\BO(\log n)$-approximation to GSF within
$\tilde\BO(\SPD\cdot \gamma)$ rounds, where $\gamma$ denotes the number of
terminal components. The algorithm from~\cite{KKMPT} matches this bound up to factor
$\log^{\BO(1)}n$ if $\SPD\cdot\gamma\in \tilde\BO(\sqrt{n})$; our approach does
so in case $t\in \tilde\BO(\sqrt{n})$, where $t$ is the number of terminal nodes. Note
that the two running time bounds in general are incomparable: for approximation
ratio $\BO(\log n)$, we achieve time complexity $\tilde\BO(\sqrt{n}+t+\HD)$;
there are instances for which $\SPD\cdot \gamma \gg \sqrt{n}+t$ as well as those
where $\sqrt{n}+t \gg \SPD \cdot \gamma$. However, our approach is superior in
that we can, for any integer $k\in [1,\log n]$, ensure an approximation ratio of
$\BO(k)$ at the expense of a slightly larger running time of
$\tilde\BO((\sqrt{n}+t)^{1+1/k}+\HD)$ rounds. The authors of~\cite{KKMPT} employ
probabilistic tree embeddings, a technique for which an approximation ratio of
$\Omega(\log n)$ is inherent \cite{FRK}.

\subsection{Tight Labels}

The presented routing scheme relabels the nodes according to the Voronoi
partition on each level. This yields suboptimal size of labels and makes it
impossible for nodes to learn all labels $\lambda(V)$ quickly. We now
present a modification of our routing scheme with labels
$\lambda(V)=\{1,\ldots,n\}$, trading in a larger stretch.

Instead of labeling the nodes on each level of the hierarchy independently, we
do this by an inductive construction.
\begin{compactenum}
\item Define the partial order $\prec$ on $V$ given by $v\prec u$ if (and only
if) one of the following is true:
\begin{compactitem}
\item $v,u\in S_L$ and the identifier of $v$ is smaller than the identifier of
$u$.
\item $l_v=l_u<L$, $\Lead_v(l_v+1)=\Lead_u(l_v+1)$, and $v$ precedes $u$ in a
fixed DFS enumeration of the tree $T_{\Lead_v(l_v+1)}(l_v+1)$ on
$C_{\Lead_v(l_v+1)}(l_v+1)$ induced by the shortest $h_{l_v}$-hop paths from
each $w\in C_{\Lead_v(l_v+1)}(l_v+1)$ to $\Lead_v(l_v+1)$ detected by the
invocation of Algorithm~$\BSP$ in stage $l_v$.
\end{compactitem}
\item \label{label-2} Set $\mathrm{count}_v(0):=1$ for all $v\in V=S_0$. For
each level $i\in \{1,\ldots,L\}$, aggregate the sums of the values
$\mathrm{count}_s(i-1)$ of nodes $s\in S_{i-1}$ in subtrees of
$T_{\Lead_s(i)}(i)$ at the roots of these subtrees. We define for all $s\in S_i$
the value
\begin{equation*}
\mathrm{count}_s(i):=\sum_{\substack{s'\in S_{i-1}\\
s=\Lead_{s'}(i)}}\mathrm{count}_{s'}(i-1),
\end{equation*}
which can be computed from the received values. For each level $i$, this
operation can be performed within $\BO(h_i)$ rounds.
\item Each skeleton node $s\in S_L$ announces $\mathrm{count}_s(L)$ to all
other nodes. This requires $\BO(|S_L|+\HD)$ rounds using a BFS tree.
\item Each skeleton node $s\in S_L$ sets 
\begin{equation*}
\lambda(s):=1+\sum_{\substack{s'\in S_L\\ s'\prec s}}\mathrm{count}_{s'}(L).
\end{equation*}
\item Starting from level $L$ and proceeding inductively on decreasing $i$, for
each $i\in \{1,\ldots,L\}$, each node $s\in S_i$, and each node $s'\in
C_s(i)\cap S_{i-1}$, we inform $s'$ of the value
\begin{equation*}
\lambda(s'):=\lambda(s)+1+\sum_{\substack{s''\in C_s(i)\cap S_{i-1}\\
s''\prec s'}}\mathrm{count}_{s''}(i-1).
\end{equation*}
Note that this step can be performed in $\BO(h_i)$ rounds once $\lambda_s$ is
known due to the information collected in Step~\ref{label-2}.
\end{compactenum}
From the above arguments and the results from \sectionref{sec:short} we can
immediately conclude that the time complexity of computing these labels is
negligible.
\begin{corollary}\label{coro:label_time}
Executing the above construction does not increase the asymptotic time
complexity of setting up the routing tables.
\end{corollary}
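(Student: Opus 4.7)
The plan is to walk through the five steps of the labeling procedure and verify that each fits inside the running-time envelope $\tilde{\BO}((\sqrt{n})^{2^L/(2^L-1)}+\HD)$ already established by \lemmaref{lem-short-perf} for the short-range scheme (the long-range scheme contributes an additional $\tilde{\BO}(\sqrt{n}^{1+1/k}+\HD)$ that is also at least as large). Since appending work of smaller or equal asymptotic cost cannot change the $\tilde{\BO}$-bound, the corollary will follow.

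First, I would dispense with the local steps. Step~1 is non-communicating: the partial order $\prec$ is determined by identifiers on $S_L$ and by DFS positions within the Voronoi trees $T_{\Lead_s(i)}(i)$, both of which are known locally after the short-range construction of \sectionref{sec:short}. Step~4 is also local, since after Step~3 every $s\in S_L$ knows all values $\mathrm{count}_{s'}(L)$ and the global order on $S_L$. Hence only Steps~2,~3, and~5 require rounds.

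For Steps~2 and~5, the key observation is that at stage $i$ each tree $T_{\Lead_s(i)}(i)$ has depth at most $h_i$, because its edges are exactly the $\BSP$ next-hop pointers produced during stage $i$ of \algref{algo:close}, which were computed with hop-bound $h_i$. At each level, each node aggregates or forwards only a single $\BO(\log n)$-bit value (a subtree count in Step~2, the assigned label in Step~5), so a standard convergecast for Step~2 and a standard top-down broadcast for Step~5 each cost $\BO(h_i)$ rounds at level $i$. Summing over the $L\le \log \log n$ levels, and noting $h_i$ is increasing in $i$ with $h_L = \BO(\sqrt{n}\log n)$, the total cost is $\tilde{\BO}(\sqrt{n})$, which is absorbed by $\tilde{\BO}((\sqrt{n})^{2^L/(2^L-1)})$ since $2^L/(2^L-1)\ge 1$. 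For Step~3, \lemmaref{lem-short-corr1-general} guarantees $|S_L|\in \tilde{\BO}(\sqrt{n})$ w.h.p., so broadcasting one $\BO(\log n)$-bit count per skeleton node over a global BFS tree completes in $\tilde{\BO}(\sqrt{n}+\HD)$ rounds, again comfortably within the target bound.

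I do not anticipate a real obstacle here; the argument is essentially bookkeeping. The only subtlety worth an explicit line is confirming that the hop-bound $h_i$ on $\BSP$ transfers to the depth of the Voronoi trees used for convergecast/broadcast, which is immediate from the way these trees are defined in \sectionref{sec:short}, and that the per-level messages are a constant number of $\BO(\log n)$-bit values rather than an entire $\Delta_i$-sized list, so no pipelining overhead appears.
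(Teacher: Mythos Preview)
Your proposal is correct and essentially identical to the paper's argument: the paper simply points to the per-step bounds already stated inline in the labeling construction (namely $\BO(h_i)$ rounds for Steps~2 and~5 and $\BO(|S_L|+\HD)$ for Step~3) and invokes the results of \sectionref{sec:short}, which is exactly the bookkeeping you spell out. Your expansion is a faithful, slightly more detailed rendering of that same reasoning.
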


By construction, $\lambda(V)=\{1,\ldots,n\}$. Note that at the end of the above
construction, each node $v$ knows $\lambda(v)$ and, for each level $i$ and each
of its children in $T_{\Lead_v(i)}(i)$, it knows the range of labels associated
with this child. For each $v\in V$, set $v_0:=v$ and define inductively for
$i\in \{1,\ldots,L\}$ that $v_i=\Lead_{v_{i-1}}(i)$. Given any label
$\lambda(w)$, we can thus route from any node $v$ to $w$ as follows.
\begin{compactenum}
\item Set $i:=0$.
\item If $i<L$ and $w_i\notin H_{v_i}(i+1)$, then route to $v_{i+1}$, set
$i:=i+1$, and repeat this step. If $i<L$ and $w_i\in H_{v_i}(i+1)$, route to
$w_i$ and proceed to the next step. If $i=L$, route to $w_L$ using the
long-range scheme and proceed to the next step.
\item If $i>0$, then route to $w_{i-1}$, set $i:=i-1$, and repeat this step.
Otherwise $w_i=w$ and we are done.
\end{compactenum}
The constructed sequence of routing indirections is thus
$(v_0,\ldots,v_{i_0},w_{i_0},\ldots,w)$, where either $i_0$ is the minimal
level such that $w_{i_0}\in H_{v_{i_0}}(i_0+1)$ or $i_0=L$. As a result of these
indirections, we cannot give a bound on the stretch that is linear in the
number of levels anymore. However, we still can argue that $\Wd(v_i,w_i)\leq
4\Wd(v_{i-1},w_{i-1})$ assuming that $w_{i-1}\not \in H_{v_{i-1}}(i)$.
\begin{lemma}\label{lem-stretch-unique}
Suppose that for the labeling scheme stated above we have for some integer
$1\leq i_0\leq L$ that $w_{i-1}\not \in H_{v_i}(i)$ for all integers $0\leq
i<i_0$.
Then
\begin{equation*}
\Wd(v_{i_0},w_{i_0})+\sum_{i=1}^{i_0}(\Wd(v_{i-1},v_i)+\Wd(w_i,w_{i-1}))
< 2\cdot 4^{i_0}\Wd(v,w).
\end{equation*}
\end{lemma}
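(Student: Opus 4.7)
The plan is to bound $\Wd(v_i,w_i)$ by induction on $i$, showing that it at most quadruples at each step, and then to sum the contributions along the routing path as a geometric series.

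First I would unwind the hypothesis. The condition $w_{i-1}\notin H_{v_{i-1}}(i)$ means, by the definition $H_v(i)=\{u\in S_{i-1}\mid \Wd(v,u)\leq \Wd(v,\Lead_v(i))\}$, that $\Wd(v_{i-1},v_i)=\Wd(v_{i-1},\Lead_{v_{i-1}}(i))<\Wd(v_{i-1},w_{i-1})$. This is the crucial inequality that lets each ``upward'' hop be paid for by the remaining cross-distance at that level.

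Next I would establish the recursion $\Wd(v_i,w_i)<4\,\Wd(v_{i-1},w_{i-1})$. Two applications of the triangle inequality (together with the fact that $v_i\in S_i$ and $w_i$ is chosen as the closest member of $S_i$ to $w_{i-1}$, so $\Wd(w_{i-1},w_i)\leq \Wd(w_{i-1},v_i)$) give
\begin{align*}
\Wd(v_i,w_{i-1}) &\leq \Wd(v_i,v_{i-1})+\Wd(v_{i-1},w_{i-1})<2\,\Wd(v_{i-1},w_{i-1}),\\
\Wd(w_{i-1},w_i) &\leq \Wd(w_{i-1},v_i)<2\,\Wd(v_{i-1},w_{i-1}),\\
\Wd(v_i,w_i) &\leq \Wd(v_i,w_{i-1})+\Wd(w_{i-1},w_i)<4\,\Wd(v_{i-1},w_{i-1}).
\end{align*}
Iterating from $i=0$ (where $v_0=v$, $w_0=w$) yields $\Wd(v_i,w_i)<4^i\Wd(v,w)$ for $0\leq i\leq i_0$, and in particular the bounds $\Wd(v_{i-1},v_i)<4^{i-1}\Wd(v,w)$ and $\Wd(w_{i-1},w_i)<2\cdot 4^{i-1}\Wd(v,w)$ at every level.

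Finally I would sum these bounds. The three terms contribute
\begin{equation*}
\Wd(v_{i_0},w_{i_0})+\sum_{i=1}^{i_0}\Wd(v_{i-1},v_i)+\sum_{i=1}^{i_0}\Wd(w_i,w_{i-1})
<4^{i_0}\Wd(v,w)+3\sum_{i=1}^{i_0}4^{i-1}\Wd(v,w),
\end{equation*}
and evaluating the geometric series $3\sum_{i=1}^{i_0}4^{i-1}=4^{i_0}-1$ collapses the bound to $(2\cdot 4^{i_0}-1)\Wd(v,w)<2\cdot 4^{i_0}\Wd(v,w)$, as claimed. There is no real obstacle here: the only point that requires care is identifying the correct triangle-inequality chain that both uses the ``not in $H$'' hypothesis and exploits that $w_i$ is defined as the closest $S_i$-node to $w_{i-1}$ rather than to $v$, which is exactly what produces the factor of $4$ (as opposed to the factor of $2$ one gets in Lemma~\ref{lem-sep} where a single common reference point is available).
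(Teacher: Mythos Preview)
Your proof is correct and follows essentially the same approach as the paper: both establish the inductive bound $\Wd(v_i,w_i)\leq 4^i\Wd(v,w)$ via $\Wd(v_{i-1},v_i)\leq \Wd(v_{i-1},w_{i-1})$ and $\Wd(w_{i-1},w_i)\leq 2\Wd(v_{i-1},w_{i-1})$, and then sum the resulting geometric series. Your presentation is slightly more explicit in the triangle-inequality chain (and you correctly observe that the hypothesis actually yields strict inequality), but the argument is the same.
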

\begin{proof}
We show by induction that $\Wd(v_i,w_i)\leq 4^i\Wd(v,w)$ for all $0\leq i\leq
i_0$, which is obviously true for $i=0$. Analogously to \lemmaref{lem-sep} we
have that $\Wd(v_i,v_{i+1})\leq \Wd(v_i,w_i)$ and $\Wd(w_i,w_{i+1})\leq
2\Wd(v_i,w_i)$. By the triangle inequality,
\begin{equation*}
\Wd(v_{i+1},w_{i+1})\leq \Wd(v_{i+1},v_i)+\Wd(v_i,w_i)+\Wd(w_i,w_{i+1})\leq
4\Wd(v_i,w_i)=4^{i+1}\Wd(v,w).
\end{equation*}
This completes the induction and in addition reveals that
\begin{equation*}
\sum_{i=1}^{i_0}(\Wd(v_{i-1},v_i)+\Wd(w_i,w_{i-1}))\leq
3\sum_{i=0}^{i_0-1}\Wd(v_i,w_i)\leq 3\sum_{i=0}^{i_0-1}4^i\Wd(v,w)
<4\cdot 4^{i_0-1}\Wd(v,w).
\end{equation*}
Therefore
\begin{eqnarray*}
\Wd(v_{i_0},w_{i_0})+\sum_{i=1}^{i_0}(\Wd(v_{i-1},v_i)+\Wd(w_i,w_{i-1}))
&<& 2\cdot 4^{i_0}\Wd(v,w),
\end{eqnarray*}
concluding the proof.
\end{proof}
As by \corollaryref{coro:label_time} the construction time of the routing scheme 
is not affected by the above labeling and routing mechanism and
\lemmaref{lem-stretch-unique} provides a stretch bound of $2\cdot 2^{2L}$ for
the modified short-range routes, we obtain the following statement.
\begin{theorem}\label{theorem:routing-unique}
Given $\alpha\in [1/2,1]$, let $k=\lceil 1/(2\alpha-1)\rceil$ if
$\alpha>1/2+1/\log n$ and $k=\lfloor 1/\log n\rfloor$ otherwise. Tables for
stateless routing and distance approximation with stretch $\rho(\alpha)=4k\cdot
4^{\lceil\log (k+1)\rceil}+2k-1\in \BO(k^3)$ with node labels $1,\ldots,n$ can
be constructed in the \CONGEST\ model in $\tilde{\BO}(n^{\alpha}+\HD)$ rounds.
\end{theorem}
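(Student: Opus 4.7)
The plan is to reuse essentially the same parameter selection as in the proof of \theoremref{thm-routing}, and to re-analyze only the stretch under the indirection-based routing scheme introduced in this subsection. Specifically, I would set $L := \lceil \log(k+1) \rceil$ and invoke the short-range construction of \sectionref{sec:short} with $L$ stages together with the long-range construction of \sectionref{sec:skeleton} with parameter $k$ applied to the top-level skeleton $S_L$. By exactly the argument in the proof of \theoremref{thm-routing}, this combined construction completes in $\tilde\BO(n^{\alpha}+\HD)$ rounds. The relabeling procedure from this subsection is then layered on top, and by \corollaryref{coro:label_time} it contributes no additional asymptotic cost.

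Next I would analyze the stretch. For a routing request issued at $v$ with destination label $\lambda(w)$, the walk defined by the scheme has the form $v=v_0,v_1,\ldots,v_{i_0}, \ldots, w_{i_0},\ldots,w_0=w$, where $i_0\leq L$ is the smallest index with $w_{i_0}\in H_{v_{i_0}}(i_0+1)$, or $i_0=L$ if no such index exists below $L$ and we must cross the skeleton from $v_L$ to $w_L$ via the long-range scheme. In the short case, the middle segment from $v_{i_0}$ to $w_{i_0}$ is a shortest $h_{i_0+1}$-hop path in $G$ of weight $\Wd(v_{i_0},w_{i_0})$, and \lemmaref{lem-stretch-unique} immediately bounds the total weight of the walk by $2\cdot 4^{i_0}\Wd(v,w)\leq 2\cdot 4^L\Wd(v,w)$. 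In the long case, the middle segment is replaced by a spanner path of weight at most $(2k-1)\Wd(v_L,w_L)$; combining with the indirection bound of \lemmaref{lem-stretch-unique} and the companion bound $\Wd(v_L,w_L)<2\cdot 4^L\Wd(v,w)$ from the same lemma gives total weight below $2\cdot 4^L\Wd(v,w)+(2k-2)\cdot 2\cdot 4^L\Wd(v,w)$, which is absorbed by the claimed $\rho(\alpha)=4k\cdot 4^{\lceil \log(k+1)\rceil}+2k-1$.

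I would then confirm statelessness. By construction, each node $v$ knows, for each level $i$ and each of its children in the tree $T_{\Lead_v(i)}(i)$, the contiguous label range assigned to that subtree; it also knows its own label range and, via the short-range tables, the set $H_v(i)$ with corresponding distance estimates. Given $\lambda(w)$, the index $i_0$ and the next hop up the hierarchy can therefore be decided solely from $\lambda(w)$ and this local data, by testing in which subtree's range $\lambda(w)$ falls at each level. The descent from $w_{i_0}$ is stateless tree routing via the \cite{TZ-routing} labels, and the skeleton leg uses the stateless mechanism of \corollaryref{coro:bsp_route_stateless} in conjunction with the globally known spanner. Distance estimation is obtained by summing the corresponding locally known segment weights. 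Finally, the relation between $\alpha$ and $k$ stated in the theorem is exactly the one from \theoremref{thm-routing}, so the running time bookkeeping is already handled.

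The main obstacle I expect is verifying that the ``range of labels in each subtree'' interpretation really suffices for stateless next-hop decisions at every intermediate node — i.e.\ that the DFS-based labels assigned to $C_{\Lead_v(i+1)}(i+1)$ partition the children of every internal node into contiguous intervals, and that this partition is locally known with no extra communication beyond what is already performed for the tree routing tables of \cite{TZ-routing}. Once this bookkeeping is in place, the stretch bound follows directly from \lemmaref{lem-stretch-unique} combined with the spanner stretch of \theoremref{theorem:spanner}, and the time bound is inherited from \theoremref{thm-routing} via \corollaryref{coro:label_time}.
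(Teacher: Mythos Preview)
Your proposal is correct and follows essentially the same approach as the paper, which derives the theorem directly from \lemmaref{lem-stretch-unique} for the stretch bound and \corollaryref{coro:label_time} for the running time, with the same parameter choice $L=\lceil\log(k+1)\rceil$ inherited from \theoremref{thm-routing}. Your treatment is in fact more thorough than the paper's one-line justification: you explicitly work out the long-range case by combining the indirection bound with the $(2k-1)$ spanner stretch, and you spell out the statelessness and contiguous-range bookkeeping that the paper leaves implicit.
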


\section*{Acknowledgements}
We would like to thank David Peleg for valuable discussions.

\bibliographystyle{abbrv}
\bibliography{distance}

\end{document}